\newtheorem{thm}{Theorem}
\newtheorem{prop}[thm]{Proposition}
\newtheorem{lem}[thm]{Lemma}
\newtheorem{cor}[thm]{Corollary}
\theoremstyle{definition}
\newtheorem{definition}[thm]{Definition}
\theoremstyle{remark}
\newtheorem{remark}[thm]{Remark}
\numberwithin{equation}{section}
\newcommand{\field}[1]{\mathbb{#1}}
\newcommand{\operator}[1]{\mathsf{#1}}
\newcommand{\Tr}{\operatorname{Tr}}
\newcommand{\D}{\mathcal{D}}
\newcommand{\Char}{\operatorname{Char}}
\newcommand{\W}{\mathcal{W}}
\newcommand{\M}{\mathcal{M}}
\newcommand{\E}{\mathbb{E}}
\newcommand{\Hom}{\operatorname{Hom}}
\newcommand{\End}{\operatorname{End}}
\newcommand{\Mat}{\operatorname{Mat}}
\newcommand{\SD}{\operatorname{SD}}
\def\ra{{\rightarrow}}
\def\tr{{\rm Tr}}
\begin{document}


\title[Asymptotics of Unitary Multimatrix Models]{Asymptotics of Unitary Multimatrix Models: The Schwinger-Dyson Lattice
and Topological Recursion}


\author{Alice Guionnet}
\address{Alice Guionnet, Department of Mathematics, Massachusetts Institute of Technology, 
Cambridge, MA 02139. \hfill
Research partially supported by the Simons foundation and NSF award DMS-1307704 }
\email{guionnet@math.mit.edu  }

\author{Jonathan Novak}
\address{Jonathan Novak, Department of Mathematics, Massachusetts Institute of Technology, 
Cambridge, MA 02139}
\email{jnovak@math.mit.edu}


\begin{abstract}
	We prove the existence of a $1/N$ expansion in unitary
	multimatrix models which are Gibbs perturbations of the Haar measure,
	and express the expansion coefficients recursively in terms of the 
	unique solution of a noncommutative initial value problem.  
	The recursion obtained is closely related to the ``topological recursion''
	which underlies the asymptotics of many random matrix ensembles
	and appears in diverse enumerative geometry problems.
	Our approach consists of two main ingredients: an asymptotic
	study of the Schwinger-Dyson lattice over noncommutative Laurent 
	polynomials, and uniform control on the cumulants of Gibbs measures
	on product unitary groups.
	The required cumulant bounds are obtained by  concentration of measure 
	arguments  and change of variables techniques.
	\end{abstract}


 \maketitle


\tableofcontents


\section{Introduction}  

	\subsection{A noncommutative initial value problem}

	\subsubsection{}
	Given a unital $*$-algebra $B$ defined over $\field{C}$, let

		\begin{equation*}
			L=B\langle u_1^{\pm1},\dots,u_m^{\pm1}\rangle
		\end{equation*}

	\noindent
	denote the algebra of Laurent polynomials in $m$ noncommutative variables
	$u_1,\dots,u_m$, with noncommutative coefficients in $B$.
	That is, 

		\begin{equation*}
			L = B * \field{C}\langle u_1^{\pm1},\dots,u_m^{\pm1}\rangle,
		\end{equation*}

	\noindent
	the free product of $B$ and the group algebra of a free group of
	rank $m$.

	Assuming that the dimension of $B$ is countable, select
	a basis

		\begin{equation*}
			\mathbf{1}=b_0,b_1,b_2,\dots
		\end{equation*}

	\noindent
	in $B$.  The set of reduced words of finite length in the 
	letters 

		\begin{equation*}
			u_1^{\pm1},\dots,u_m^{\pm1},b_1,b_2,\dots
		\end{equation*}

	\noindent
	forms a basis in $L$.  We will reserve the term \emph{monomial}
	for elements of this particular basis.  
	For a norm on $L$, we take the $\ell^1$-norm relative to 
	the monomial basis.

	\subsubsection{}
	In his study of noncommutative analogues of entropy and Fisher
	information \cite{Voiculescu2}, Voiculescu introduced linear maps,
	$\partial_1,\dots,\partial_m$, which act on monomials $p \in L$
	according to the formula

		\begin{equation*}
			\partial_ip = \sum_{p=p_1u_ip_2} p_1u_i \otimes p_2
			- \sum_{p=p_1u_i^{-1}p_2} p_1 \otimes u_i^{-1}p_2.
		\end{equation*}

	\noindent
	In words, $\partial_ip$ is the sum of all simple tensors
	obtained from $p$ by tensoring on the right of a $u_i$, less the sum of all
	simple tensors obtained by tensoring on the left of a $u_i^{-1}$.  

	Viewing the variables $u_1,\dots,u_m$ as coordinates on
	a ``noncommutative $m$-torus'', the maps $\partial_i$ play
	the role of classical partial derivatives on $U(1)^m$, 
	see \cite{Voiculescu2}.  In particular, they annihilate 
	constants,

		\begin{equation*}
			B \subseteq \operatorname{Ker} \partial_i,
		\end{equation*}

	\noindent
	are $B$-bilinear 

		\begin{equation*}
			\partial_i(bpb') = b(\partial_ip)b'
		\end{equation*}

	\noindent
	when $L\otimes L$ is given the natural $B$-bimodule structure,
	and satisfy the product rule

		\begin{equation*}
			\partial_i (pq) = (\partial_ip)(\mathbf{1} \otimes q) +
			(p \otimes \mathbf{1})(\partial_i q)
		\end{equation*}

	\noindent
	when $L \otimes L$ is given the natural algebra structure.
	Linear maps from an algebra into its tensor square with
	these properties are known as \emph{derivation-comultiplications} 
	in free probability \cite{Voiculescu:comult}, 
	and as \emph{double derivations} in noncommutative 
	geometry \cite{CEG}.

	\subsubsection{}
	Consider the noncommutative initial value
	problem

		\begin{equation}
			\label{eqn:InitialValue}
			\left. \begin{aligned}
			\tau \otimes \tau(\partial_i p) &= 0 \\
			\tau|_B &= \sigma
			\end{aligned} \right\},
		\end{equation}

	\noindent
	where $\tau$ is an unknown unital trace on $L$ and 
	$\sigma$ is a given unital trace on $B$.  It is straightforward 
	to establish existence and uniqueness for \eqref{eqn:InitialValue}
	--- indeed, in view of the Liebniz rule, 
	\eqref{eqn:InitialValue} amounts to a recurrence
	reducing the computation of $\tau$ on $L$ to the computation
	of $\sigma$ on $B$.  As a simple example, the reader is invited
	to check that

		\begin{equation*}
			\tau(b_1u_1b_2u_1^{-1}) = \sigma(b_1)\sigma(b_2).
		\end{equation*}

	Let $\tau_\sigma$ denote the unique 
	solution of \eqref{eqn:InitialValue}.  Then, the $*$-subalgebras 
	$$B,\field{C}\langle u_1^{\pm1} \rangle,\dots,\field{C}\langle u_m^{\pm1}\rangle$$
	are $*$-free in the noncommutative probability space 
	$(L,\tau_\sigma)$, see \cite[Proposition 5.17]{Voiculescu2}.  Since 
	free independence has a very concrete combinatorial description \cite{NS}, this
	amounts to a combinatorial rule allowing the efficient computation of $\tau_\sigma(p)$
	for any monomial $p \in L$.

	\subsubsection{}
	It is a fundamental result of Voiculescu that,  if $\sigma$ is the limit of a 
	sequence of matrix traces, then $\tau_\sigma$ is the 
	limit of a sequence of \emph{random} matrix traces \cite{Voiculescu1,Voiculescu2}.  

	Let

		\begin{equation*}
			\rho_N: B \rightarrow \Mat_N(\field{C})
		\end{equation*}

	\noindent
	be a sequence of $*$-representations of $B$
	whose normalized characters approximate $\sigma$,
	in the sense that

		\begin{equation*}
			 \lim_{N \rightarrow \infty} \frac{1}{N}\Tr \rho_N(b) = \sigma(b)
		\end{equation*}

	\noindent
	for each $b \in B$.  Note that, since any homomorphism from a 
	normed $*$-algebra into a $C^*$-algebra is 
	contractive \cite[\S 1.3.7]{Dixmier}, the image of any $b \in B$
	under $\rho_N$ satisfies

		\begin{equation*}
			\|\rho_N(b)\| \leq \|b\|_1,
		\end{equation*}

	\noindent
	where $\|\cdot\|$ is the operator norm on $\Mat_N(\field{C})$.
	For each $N \geq 1$, let $$\mathbf{U}_N=(U_1,\dots,U_m)$$
	be an $m$-tuple of $N \times N$ random unitary matrices 
	drawn independently from Haar measure on
	the unitary group $U(N)$.  For each $p \in L$, denote
	by $\rho_N(p)(\mathbf{U}_N)$ the $N \times N$ random matrix obtained
	by replacing the constants in $p$ according to the representation $\rho_N$, and
	replacing the variables $u_1,\dots,u_m$ with the random matrices
	$U_1,\dots,U_m$.  Then, as shown in \cite{Voiculescu2}, 
	one has

		\begin{equation*}
			\lim_{N \rightarrow \infty} \E  \frac{1}{N}\Tr \rho_N(p)(\mathbf{U}_N) = \tau_\sigma(p)
		\end{equation*}

	\noindent
	for each $p \in L$.  

	A unital trace $\tau$ on a $*$-algebra $A$ is called a
	\emph{character} of $A$ if it is positive, i.e. if $\tau(a^*a) \geq 0$
	for all $a \in A$.
	The approximation of $\tau_\sigma$ by random matrix traces
	clearly implies its positivity.  Thus, from an algebraic point of view, 
	Voiculescu's initial value problem \eqref{eqn:InitialValue} provides 
	a means to induce characters of $L$ from characters of the 
	constant subalgebra $B$.  From a probabilistic perspective,
	one has an algebraic formalism --- asymptotic freeness --- 
	describing the large $N$ asymptotic behaviour of the trace of polynomial
	functions of the $m$-tuple $\mathbf{U}_N$ and the 
	deterministic contractions $\rho_N(b_i)$.

	\subsection{Initial value problem with potential}

	\subsubsection{}
	Collins, Guionnet and Maurel-Segala \cite{CGM} considered
	a noncommutative initial value problem which generalizes 
	\eqref{eqn:InitialValue}, namely 

		\begin{equation}
			\label{eqn:InitialValuePotential}
			\left. \begin{aligned}
			\tau \otimes \tau(\partial_i p) + \tau((\D_iV)p) &= 0 \\
			\tau|_B &= \sigma
			\end{aligned} \right\}.
		\end{equation}

	\noindent
	Here $V \in L$ is a fixed polynomial (the ``potential''),
	and $\D_i$ is the Laurent version of the cyclic derivative of
	Rota, Sagan and Stein \cite{RSS}, i.e. the endomorphism of
	$L$ which acts on monomials according to 

		\begin{equation*}
			\D_ip = \sum_{p=p_1u_ip_2} p_2p_1u_i - \sum_{p=p_1u_i^{-1}p_2} u_i^{-1}p_2p_1.
		\end{equation*}

	\noindent
	In words, $\D_i p$ is the sum of the cyclic shifts of $p$ ending in $u_i$ 
	less the sum of the cyclic shifts of $p$ beginning with $u_i^{-1}$.  
	More functorially, $\D_i = m^{\text{op}} \circ \partial_i$, where
	$m^{\text{op}} \in \Hom(L \otimes L,L)$ is the map which reverses 
	multiplication in $L$.  Note that $\D_i$ is not a derivation of $L$.
	However, it does annihilate $B$, so that \eqref{eqn:InitialValuePotential}
	degenerates to \eqref{eqn:InitialValue} if $V \in B$.

	Although \eqref{eqn:InitialValuePotential} is no longer recursive, the authors of 
	\cite{CGM} established existence and uniqueness of continuous solutions provided that
	the potential $V$ is sufficiently ``close'' to the constant subalgebra
	$B$, in an appropriate sense.  This was done by first proving that \eqref{eqn:InitialValuePotential}
	admits at most one solution via a perturbative argument, and subsequently constructing a solution $\tau_\sigma^V$
	as a limit of traces on \emph{interacting} random unitary matrices whose joint distribution
	is a Gibbs law on $U(N)^m$.

	\subsubsection{}
	As above, let $\rho_N$ be a sequence of matrix
	representations of $B$ whose characters approximate $\sigma$.
	Consider the unit-mass measure Borel measure 
	$\mu_N^V$ on $U(N)^m$ defined by 

		\begin{equation}
			\label{eqn:GibbsEnsemble}
			\mu_N^{V}(\mathrm{d}\mathbf{U}) = \frac{1}{Z_N^{V}}
			e^{N\Tr \rho_N(V)(\mathbf{U})} \mu_N(\mathrm{d}\mathbf{U}),
		\end{equation}

	\noindent
	where $\mu_N$ is Haar measure and $Z_N^V$ is a normalization constant (the ``partition function''). 
	Note that $\mu_N^V$ is invariant under translations of $V$ by elements of $B$.
	In particular, if $V \in B$ is a constant potential, $\mu_N^V$ degenerates to $\mu_N$.
	We refer to the sequence of measures $\mu_N^V$ as the 
	\emph{Gibbs ensemble} generated by $\rho_N(V)$.  Note that 
	$\mu_N^V$ is, in general, a complex measure.  However, it is a genuine probability measure if the function

		\begin{equation*}
			\mathbf{U} \mapsto \Tr \rho_N(V)(\mathbf{U})
		\end{equation*}

	\noindent
	is real-valued $\mu_N$-almost everywhere on $U(N)^m$.  
	If this condition holds, we say that $\rho_N(V)$ generates 
	a \emph{real Gibbs ensemble}.  In particular, $\rho_N(V)$
	generates a real Gibbs ensemble if $V$ is selfadjoint up
	to cyclic symmetry, i.e. if each monomial in $V^*$ is a 
	cyclic shift of a monomial in $V$.

	Let $\mathbf{U}_N^V=(U_1,\dots,U_m)$
	be an $m$-tuple of $N \times N$ random unitary matrices
	whose joint distribution is the real Gibbs law $\mu_N^V$.
	We then have a family of scalar valued random variables
	given by 

		\begin{equation*}
			\Tr \rho_N(p)(\mathbf{U}_N^V), \quad p \in L, \quad N \geq 1.
		\end{equation*}

	\noindent
	The mean and covariance statistics of this family induce two
	sequences of functionals on $L$:

		\begin{align*}
			\W_{1N}^V(p) &= \E\Tr \rho_N(p)(\mathbf{U}_N^V) \\
			\W_{2N}^V(p_1,p_2) &=  \E\Tr \rho_N(p_1)(\mathbf{U}_N^V) \Tr \rho_N(p_2)(\mathbf{U}_N^V)
			- \E\Tr \rho_N(p_1)(\mathbf{U}_N^V) \E\Tr \rho_N(p_2)(\mathbf{U}_N^V).
		\end{align*}

	\noindent
	Using a change of variables argument,
	it was shown in \cite{CGM} that these functionals satisfy the 
	\emph{Schwinger-Dyson equation},

		\begin{equation}
			\label{eqn:SDfirst}
			\W_{1N}^V \otimes \W_{1N}^V(\partial_ip) + N\W_{1N}^V((\D_iV)p) = - \W_{2N}^V(\partial_ip).
		\end{equation}

	\subsubsection{}
	The existence of the functional equation \eqref{eqn:SDfirst}, which holds
	at finite $N$, explains why solutions of \eqref{eqn:InitialValuePotential}
	are limits of random matrix traces.  Indeed, a straightforward compactness 
	argument shows that the sequence of linear functionals $(N^{-1}\W_{1N}^V)_{N=1}^{\infty}$
	admits a limit point. Furthermore, concentration of 
	measure techniques may be used to demonstrate that 

		\begin{equation}
			\label{eqn:ConcentrationFirst}
			\sup_N |\W_{2N}^V(p_1,p_2)| < \infty
		\end{equation}

	\noindent
	for any $p_1,p_2 \in L$, see \cite{AGZ}[Corollary 4.4.31] or Corollary \ref{concmes}.
	It follows that any limit point $\tau$ of $(N^{-1}\W_{1N}^V)_{N=1}^{\infty}$ is a
	solution of \eqref{eqn:InitialValuePotential}.  Given the existence of 
	a unique solution $\tau_\sigma^V$, one thus obtains the 
	pointwise convergence of $N^{-1}\W_{1N}^V$ to $\tau_\sigma^V$
	on $L$.

	\subsection{Main result: higher cumulants and topological recursion}

	\subsubsection{}
	In this article, we go beyond the framework of \cite{CGM}
	and consider the higher cumulants of several interacting
	random unitary matrices distributed according to a Gibbs law.

	Let $\rho_N(V)$ generate a real Gibbs ensemble
	$\mu_N^V$, and let $\mathbf{U}_N^V$ be an $m$-tuple
	of $N \times N$ random unitary matrices whose distribution
	in $U(N)^m$ is $\mu_N^V$. 
	Consider the mixed moment functionals on $L$ defined by

		\begin{equation*}
			\M_{kN}^V(p_1,\dots,p_k) = \E \prod_{j=1}^k \Tr \rho_N(p_j)(\mathbf{U}_N^V),
		\end{equation*}

	\noindent
	and the corresponding mixed cumulant functionals defined
	recursively by

		\begin{equation*}
			\M_{kN}^V(p_1,\dots,p_k) = \sum_{\pi \in \operatorname{Par}(k)} \prod_{R \in \pi}
			\W_{|R|N}^V(p_r: r \in R),
		\end{equation*}

	\noindent
	where $\operatorname{Par}(k)$ is the lattice of partitions of $\{1,\dots,k\}$, 
	the internal product being over the blocks of a given partition $\pi$.  
	Mixed moments and mixed cumulants contain the same probabilistic 
	information, but cumulants are easier to work with.  For example, 
	for $k \geq 2$, the cumulant $\W_{kN}^V$ vanishes whenever one of its
	arguments lies in $B$ --- we will refer to this property as 
	\emph{$B$-connectedness}, or simply \emph{connectedness}.  We use this term
	because the relation between moments and cumulants can equivalently 
	be expressed using the exponential formula of enumerative combinatorics, which is frequently used
	to pass between possibly disconnected and connected combinatorial structures \cite{Stanley}[Chapter 5].

	Our goal in this paper is to show that, when $\|V\|_1$ is sufficiently small, 
	the rescaled cumulants

		\begin{equation}
			\label{eqn:RenormalizedCumulants}
			\tilde\W_{kN}^V = N^{k-2}\W_{kN}^V
		\end{equation}

	\noindent
	admit an $N \rightarrow \infty$ asymptotic expansion on the asymptotic scale
	$N^{-2}$.  Furthermore, we will give a recurrence relation completely 
	determining all the expansion coefficients in terms of the limit
	of $\tilde\W_{1N}^V$.

	\subsubsection{}
	As in \cite{CGM},
	our approach is based on the method of
	Schwinger-Dyson equations in random matrix theory.  
	Going beyond \cite{CGM}, we consider an entire hierarchy
	of noncommutative partial differential equations obtained 
	recursively from \eqref{eqn:SDfirst} which encodes the asymptotics
	of the higher cumulants $\W_{kN}^V$.  To solve this hierarchy, 
	one has to invert a certain partial differential
	operator acting on $B^\perp$, the space of 
	noncommutative Laurent polynomials with no constant term.
	We will prove the following quantitative result.

	\begin{thm}
		\label{thm:main}
		Let $V \in L$ be selfadjoint up to cyclic symmetry,
		and suppose there exists $K\ge 1$

			\begin{equation*}
				\|V\|_1 < \frac{7}{66} \cdot \frac{1}{\deg(V) 2^{(K-1)\deg V} 12^{\deg(V)}}.
			\end{equation*}

		\noindent
		Let $\rho_N:B \rightarrow \Mat_N(\field{C})$ be
		a sequence of matrix representations whose normalized
		characters admit an $N \rightarrow \infty$ asymptotic expansion to 
		$h$ terms:

			\begin{equation}\label{eqn:ConstantAsymptotics}
				N^{-1} \Tr \rho_N(b) = \sum_{g=0}^h \frac{\sigma_g(b)}{N^{2g}} + o\bigg{(} \frac{1}{N^{2h}}\bigg{)}, \quad b \in B.
			\end{equation}
  
		For each $k \in [ 1, K]$ and all $p_1,\dots,p_k \in B^\perp$,
		the renormalized $k$th cumulant $\tilde\W_{kN}^V$ of the real Gibbs
		ensemble generated by $\rho_N(V)$ admits an $N \rightarrow \infty$ 
		asymptotic expansion to $h\le K-1$ terms:

			\begin{equation}
				\label{eqn:Asymptotics}
				\tilde\W_{kN}^V(p_1,\dots,p_k) = \sum_{g=0}^h \frac{\tau_{kg}^V(p_1,\dots,p_k)}{N^{2g}} + o\bigg{(} \frac{1}{N^{2h}}\bigg{)}.
			\end{equation}

		\noindent
		The expansion coefficients $\tau_{kg}^V$ may be described as follows:

		\begin{enumerate}

				\smallskip
				\item
				$\tau_{10}^V$ is the unique solution of the noncommutative initial value
				problem \eqref{eqn:InitialValuePotential} with $\sigma=\sigma_0$; 

				\smallskip
				\item
				For $k=1$ and $g >0$, 

					\begin{equation*}
						\tau^V_{1g}(p)=-\sum_{\ell=1}^{g-1}\tau_{1\ell}\otimes\tau_{1 (g-\ell)}(\overline{\Delta}(\Xi_{\tau_{10}^V}^V)^{-1} p)
						-\tau_{2 (g-1)}(\overline{\Delta}(\Xi_{\tau_{10}^V}^V)^{-1}{p});
					\end{equation*}

				\smallskip
				\item
				For $k >1$ and $g>0$, we have

				\begin{align*}
					\tau_{kg}^V(p_1,\dots,p_k) &= 
					-\sum_{f=1}^g \tau_{k(g-f)}^V(\overline{\operator{T}}_{\tau_{1f}^V}(\Xi_{\tau_{10}^V}^V)^{-1}p_1,\dots,p_k) \\
					&- \sum_I \sum_{f=0}^g 
					\tau_{(|I|+1)f}^V \otimes \tau_{(k-|I|)(g-f)}^V(\overline{\Delta}(\Xi_{\tau_{10}^V}^V)^{-1}p_1 \# p_I \otimes p_{\overline{I}}) \\
					&-\sum_{j=2}^k \tau_{(k-1)g}^V(\overline{\operator{P}}^{p_j}(\Xi_{\tau_{10}^V}^V)^{-1}p_1,\dots,\hat{p}_j,\dots,p_k)  \\
					&-\tau_{(k+1)(g-1)}^V(\overline{\Delta}(\Xi_{\tau_{10}^V}^V)^{-1}p_1,\dots,p_k),
				\end{align*}

				\noindent
				where the second sum on the right is over all proper nonempty subsets $I$ of $\{2,\dots,k\}$.

			\end{enumerate}

			These recurrences are given in terms of certain linear transformations 
			$\overline{\Delta} \in \Hom(B^\perp,L^{\otimes 2})$ and $\overline{\operator{P}}^\bullet,
			\overline{\operator{T}}_\bullet,\Xi_\bullet^\bullet \in \End B^\perp$ which will be described in the next section.
	\end{thm}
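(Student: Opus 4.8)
The plan is to build the asymptotic expansion by bootstrapping from the finite-$N$ Schwinger--Dyson equation \eqref{eqn:SDfirst} and its analogues for higher cumulants, exploiting an appropriate inverse of the ``master operator'' $\Xi_{\tau_{10}^V}^V$ on $B^\perp$. The first task is to derive, for each $k \ge 1$, a hierarchy of exact identities relating $\W_{kN}^V$ to cumulants of higher order (through the $\overline{\Delta}$-term, which produces two trace factors and hence lowers $k$ by splitting, but also via the $\W_{k+1,N}^V$-term appearing on the right as in \eqref{eqn:SDfirst}) and to cumulants in which one argument has been acted on by $\D_iV$ or by the auxiliary operators $\overline{\operator{P}}^\bullet$, $\overline{\operator{T}}_\bullet$. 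These are obtained by the same change-of-variables computation that yields \eqref{eqn:SDfirst}, now applied inside the $k$-fold product of traces; the $\overline{\operator{P}}^{p_j}$ terms arise from differentiating the other trace factors $\Tr\rho_N(p_j)$ when the $i$th variable is perturbed, and the $\overline{\operator{T}}_{\tau_{1f}^V}$ term records the feedback of the sub-leading corrections of the one-point function into the effective potential. I would organize all of this as a single equation of the schematic form $\Xi_{\tau_{10}^V,N}^V \W_{kN}^V(\,\cdot\,) = (\text{lower-complexity terms})$, where complexity is measured by the pair $(k,g)$ in an order for which the right-hand side only involves strictly earlier pairs, modulo a remainder that is $o(N^{-2h})$.

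The second task is the analysis of the operator $\Xi_{\tau_{10}^V}^V \in \End B^\perp$: I would show it is invertible on $B^\perp$ with an explicit norm bound on its inverse, under the smallness hypothesis on $\|V\|_1$. The idea is that $\Xi$ is a perturbation of the operator $\W_{1}^{V,(0)} \otimes \operatorname{id} \circ \overline{\Delta}$ attached to the free (potential-free) solution, which is invertible by the recursive structure of \eqref{eqn:InitialValue} --- indeed for $V \in B$ the Schwinger--Dyson relation is itself the recursion that computes $\tau_\sigma$, so the relevant operator is triangular with respect to word length and boundedly invertible. The potential $V$ contributes the term $N\W_{1N}^V((\D_iV)\,\cdot\,)$, whose operator norm on $B^\perp$ is controlled by $\deg(V)\|V\|_1$ times the uniform moment bounds; the constant $\frac{7}{66} \cdot (\deg(V) 2^{(K-1)\deg V} 12^{\deg V})^{-1}$ is precisely what is needed for a Neumann series to converge at every level $k \le K$, the factor $2^{(K-1)\deg V}$ accounting for the $K-1$ successive inversions performed as $g$ increases. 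Establishing this quantitative invertibility, uniformly in $N$ and compatibly with the truncation level $h \le K-1$, is what I expect to be the main obstacle: one must track how each application of $(\Xi_{\tau_{10}^V}^V)^{-1}$ enlarges the degree of the test polynomial, feed the resulting polynomials into the uniform cumulant bounds (Corollary \ref{concmes} and its higher-order analogues), and verify that the geometric loss is absorbed by the stated threshold.

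With invertibility in hand, the third task is the induction itself. Order the pairs $(k,g)$ lexicographically by $g$ first, then by $k$ (or whatever order makes the right-hand sides of the three cases of the theorem refer only to earlier pairs), and argue by strong induction. The base case $(k,g)=(1,0)$ is exactly the statement that $N^{-1}\W_{1N}^V \to \tau_{10}^V$, which follows from \cite{CGM} as recalled in the excerpt: compactness plus the concentration bound \eqref{eqn:ConcentrationFirst} plus uniqueness for \eqref{eqn:InitialValuePotential}. For the inductive step at $(k,g)$, apply the exact finite-$N$ identity from the first task, substitute the expansions already known for all strictly earlier pairs (each valid to the appropriate number of terms), use the uniform bounds to control the remainder as $o(N^{-2h})$ --- here one needs $h \le K-1$ so that enough terms are available at each lower level --- and then apply $(\Xi_{\tau_{10}^V}^V)^{-1}$ from the second task to extract $\tilde\W_{kN}^V(p_1,\dots,p_k)$ and its limit coefficient $\tau_{kg}^V$. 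Reading off the three displayed recurrences for $\tau_{kg}^V$ is then a matter of matching the $\overline{\Delta}$, $\overline{\operator{P}}^\bullet$, $\overline{\operator{T}}_\bullet$ terms in the exact identity against their large-$N$ limits; the $\#$-notation in case (3) records the placement of the split test-functions $p_I$, $p_{\overline I}$ into the two trace factors produced by $\overline{\Delta}$. Finally I would note that $B$-connectedness of $\W_{kN}^V$ for $k \ge 2$ lets us restrict all arguments to $B^\perp$ without loss, which is why the theorem is stated there.
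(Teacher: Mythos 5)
Your overall architecture coincides with the paper's (the SD hierarchy written with $\Xi$ acting on the first argument, Neumann-series invertibility of $\Xi_{\tau_{10}^V}^V$ under the smallness hypothesis, then induction over the pairs $(k,g)$), but as written the induction cannot close: you are missing the a priori \emph{improvement} of the cumulant bounds. The uniform bounds you invoke (Corollary \ref{concmes} and ``its higher-order analogues'') give only $\sup_N\|\W_{kN}^V\|_\xi<\infty$, i.e. $\tilde\W_{kN}^V=O(N^{k-2})$. Hence in the renormalized SD equation for $\tilde\W_{kN}^V$ the term $N^{-2}\,\tilde\W_{(k+1)N}^V(\overline{\Delta}p_1,\dots,p_k)$ is a priori only $O(N^{k-3})$, which for $k\ge 3$ is not even $o(1)$, let alone $o(N^{-2h})$; and since the pair $(k+1,\cdot)$ has not yet been treated when you are at stage $(k,g)$, no choice of ordering of the pairs lets you control it by induction alone. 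The paper interposes a bootstrap (Theorem \ref{thm:Improvement}): feeding the uniform bounds back through the SD equations column by column yields $\|\tilde\W_{kN}^V\|_{\xi_l}\le C_{kl}N^{\max(0,k-l)}$ with $\xi_l=2^{l-2}\xi$, so the renormalized cumulants are in fact $O(1)$ once the norm parameter is dilated geometrically, and only then does the remainder term become negligible at every order. This is also the true origin of the factor $2^{(K-1)\deg V}$ and of the truncation $h\le K-1$: $\overline{\Delta}$ and the insertion operators $\overline{\operator{P}}^{p_j}$ are contractive only from $\|\cdot\|_{\xi_{m+1}}$ to $\|\cdot\|_{\xi_m}$ with $\xi_{m+1}\ge 2\xi_m$ (Proposition \ref{prop:Laplacian}), so each improvement step costs a factor $2$ in $\xi$ and one must keep $K(\xi_{k+2},V)<1$ for all $k\le K$ -- not, as you suggest, because successive applications of $(\Xi_{\tau_{10}^V}^V)^{-1}$ as $g$ increases enlarge the degree of the test polynomial: that inverse is a bounded operator on a fixed $\mathcal{B}^\perp_\xi$ and enlarges nothing.

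A second, related gap is that the higher-order uniform bounds are not available off the shelf: the concentration estimate quoted in the introduction covers only the covariance, and for $k\ge 3$ proving $\sup_N\|\W_{kN}^V\|_\xi<\infty$ is a substantive probabilistic step. The paper does it either by concentration of measure on $SU(N)$ (Gromov's Ricci lower bound plus the Bakry--\'Emery criterion, for balanced selfadjoint $V$) or by a change-of-variables/exponential-moment argument, and then applies H\"older to the representation of $\W_{kN}^V$ as an expectation of a product of centered traces. Your proposal cites these bounds as inputs rather than establishing them. By contrast, the bookkeeping part of your plan is sound: ordering the coefficients by $g$ first and then $k$ does make the recursion for the limits $\tau_{kg}^V$ well-founded (the $(k+1,g-1)$, $(k-1,g)$ and $(1,f)$ dependencies are all earlier in that order), and your identification of where the $\overline{\Delta}$, $\overline{\operator{P}}^\bullet$, $\overline{\operator{T}}_\bullet$ terms come from matches the paper's secondary form of the SD equations. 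The missing bootstrap and the unproven uniform cumulant bounds are what separate the sketch from a proof.
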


	\begin{remark}
	Note that Theorem \ref{thm:main} is stated only for polynomials $p_1,\dots,p_k$ which have no 
	constant term.  It is always possible to reduce to this case.
	Indeed, if $p = q+r$ with $q \in B^\perp$ and $r \in B$,
	we have

		\begin{equation*}
			\tilde\W_{1N}^V(p) = \tilde\W_{1N}^V(q+r) = \tilde\W_{1N}^V(q)+\tilde\W_{1N}^V(r),
		\end{equation*}

	\noindent
	by linearity.  Since the asymptotics of $\tilde\W_{1N}^V(r)$ are given by \eqref{eqn:ConstantAsymptotics},
	one need only determine the asymptotics of $\tilde\W_{1N}^V(q)$.  
	For $k \geq 2,$
	if $p_i=q_i+r_i$ with $q_i \in B^\perp$ and $r_i \in B$, we have

		\begin{equation*}
			\tilde\W_{kN}^V(p_1,\dots,p_k) = \tilde\W_{kN}^V(q_1+r_1,\dots,q_k+r_k) = \tilde\W_{kN}^V(q_1,\dots,q_k),
		\end{equation*}

	\noindent
	by multilinearity and connectedness.

	The above theorem remains true if the expansion \eqref{eqn:ConstantAsymptotics} 
	is unknown, provided $\sigma_0$ is replaced in the inductive relations by $N^{-1} \Tr \rho_N(b), b\in B$. 
	\end{remark}

	\subsubsection{}
	This paper is part of a broad program in random matrix theory, with contributions
	from many authors, which seeks to determine the asymptotics of both microscopic 
	and macroscopic statistics of various classes of random matrices by leveraging information 
	from an appropriate manifestation of the Schwinger-Dyson equations.  
	For an overview of the microscopic side of the story, the reader is referred to \cite{EY}, while the macroscopic side is surveyed
	in \cite{Guionnet}.  In particular, the recursion for the expansion coefficients
	$\tau_{kg}^V$ given in Theorem \ref{thm:main} is closely related to
	the \emph{topological recursion} of mathematical physics \cite{ACKM,BIZ},
	and its modern re-imagining \cite{EO}. 

	Theorem \ref{thm:main} is the unitary analogue of the 
	theorems of Guionnet and Maurel-Segala \cite{GM1,GM2} and Maurel-Segala \cite{M} on
	the asymptotics of the trace of polynomial functions in several interacting random Hermitian 
	matrices whose joint distribution is a perturbation of the $m$-fold product GUE measure. 
	The present paper complements these results by adapting the SD equations 
	technology to the setting of perturbations of the $m$-fold product CUE measure, hence generalizing \cite{CGM} to all order expansions.  
	An additional feature of the present work is the inclusion of the background algebra $B$, whose basis elements
	act as ``external sources'' from the random matrix viewpoint.  

	\subsection{Asymptotic expansion of the free energy}
	Let $V \in L$ be a potential such that $\rho_N(V)$ generates a real Gibbs ensemble $\mu_N^V$,
	and consider the partition function

		\begin{equation*}
			Z_N^V = \int\limits_{U(N)^m} e^{N\Tr \rho_N(V)(\mathbf{U})} \mu_N(\mathrm{d}\mathbf{U})
		\end{equation*}

	\noindent
	of  $\mu_N^V$.  It was proved in \cite{CGM} that 

		\begin{equation*}
			Z_N^V \sim e^{N^2F_0^V},
		\end{equation*}

	\noindent
	with $F_0^V$ a quantity independent of $N$.
	To make this precise, we introduce the \emph{free energy} of
	$\mu_N^V$, which is by definition the quantity

		\begin{equation}
			\label{eqn:FreeEnergy}
			F_N^V := \frac{1}{N^2} \log Z_N^V.
		\end{equation}

	\noindent
	As a corollary of Theorem \ref{thm:main}, we obtain the following refinement of \cite{CGM} to all orders: 

	\begin{cor} \label{cor:main}
		Under the hypotheses of Theorem \ref{thm:main}, the free energy $F_N^V$
		admits an $N \rightarrow \infty$ expansion to $h\le K-1$ terms:

			\begin{equation*}
				\frac{1}{N^2}\log  Z_N^V=\sum_{g=0}^h \frac{F_g^V}{N^{2g}}+o\left(\frac{1}{N^{2h}}\right).
			\end{equation*}

		\noindent
		The coefficients $F_g^V$ in this expansion depend only on $V$ and the functionals
		$\sigma_0,\dots,\sigma_h$.

	\end{cor}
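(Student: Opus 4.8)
To prove Corollary~\ref{cor:main}, the plan is to express the free energy as an integral of the first cumulant along the linear family of potentials $tV$, $t \in [0,1]$, and then to integrate the expansion provided by Theorem~\ref{thm:main} term by term. I would begin with the interpolating partition functions
\[
Z_N^{tV} = \int_{U(N)^m} e^{N t \Tr \rho_N(V)(\mathbf{U})}\, \mu_N(\mathrm{d}\mathbf{U}),
\]
noting first that, since $V$ is selfadjoint up to cyclic symmetry, so is $tV$ for every real $t \in [0,1]$; hence $\mu_N^{tV}$ is a genuine probability measure and all of the constructions of the previous sections apply to it. Differentiating under the integral sign yields $\frac{\mathrm{d}}{\mathrm{d}t}\log Z_N^{tV} = N\,\W_{1N}^{tV}(V)$, and because $Z_N^{0} = 1$ we obtain, after dividing by $N^2$,
\[
F_N^V = \int_0^1 \tilde\W_{1N}^{tV}(V)\, \mathrm{d}t .
\]

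Next I would reduce the integrand to the setting of Theorem~\ref{thm:main}. Writing $V = q + r$ with $q \in B^\perp$ and $r \in B$, linearity gives $\tilde\W_{1N}^{tV}(V) = \tilde\W_{1N}^{tV}(q) + \tilde\W_{1N}^{tV}(r)$. The second term equals $N^{-1}\Tr\rho_N(r)$, a deterministic scalar whose expansion to $h$ terms is exactly \eqref{eqn:ConstantAsymptotics}. For the first term I would invoke Theorem~\ref{thm:main} with $k=1$ and potential $tV$: since $\|tV\|_1 = t\|V\|_1 \le \|V\|_1$ still obeys the smallness hypothesis with the same $K$, and $\deg(tV) = \deg(V)$, and $tV$ is selfadjoint up to cyclic symmetry, the theorem supplies $\tilde\W_{1N}^{tV}(q) = \sum_{g=0}^h \tau_{1g}^{tV}(q) N^{-2g} + o(N^{-2h})$.

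The main obstacle --- the one point requiring more than a black-box use of Theorem~\ref{thm:main} --- is that this remainder $o(N^{-2h})$ must be uniform in $t \in [0,1]$ so that $\int_0^1 o(N^{-2h})\,\mathrm{d}t = o(N^{-2h})$. I would address this by checking that the implicit constants in the proof of Theorem~\ref{thm:main} depend only on $\|V\|_1$, $\deg V$, $K$, and the rate of convergence in \eqref{eqn:ConstantAsymptotics}, all of which are controlled uniformly over the family $\{tV : t \in [0,1]\}$ --- the norm being monotone in $t$, and the degree and the constant-subalgebra data being independent of $t$. One also needs $t \mapsto \tau_{1g}^{tV}(q)$ to be integrable on $[0,1]$, which follows from its continuity in $t$: $\tau_{10}^{tV}$ depends continuously on $t$ by the stability of solutions of \eqref{eqn:InitialValuePotential} under perturbation of the potential (the mechanism behind the uniqueness argument of \cite{CGM}), and the inductive formulas of Theorem~\ref{thm:main} build each $\tau_{1g}^{tV}$ out of $\tau_{10}^{tV}$ together with operators depending continuously on $t$.

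I would then finish by integrating the expansion of $\tilde\W_{1N}^{tV}(V)$ termwise over $t \in [0,1]$, obtaining
\[
F_N^V = \sum_{g=0}^h \Bigl( \int_0^1 \tau_{1g}^{tV}(q)\, \mathrm{d}t + \sigma_g(r) \Bigr) N^{-2g} + o\bigl(N^{-2h}\bigr),
\]
which is the asserted expansion with $F_g^V = \int_0^1 \tau_{1g}^{tV}(q)\,\mathrm{d}t + \sigma_g(r)$. Since $q$ and $r$ are determined by $V$, and since the recursion in Theorem~\ref{thm:main} expresses $\tau_{1g}^{tV}$ --- hence $\tau_{1g}^{tV}(q)$ --- in terms of $V$ and $\sigma_0,\dots,\sigma_g$ alone, the coefficient $F_g^V$ depends only on $V$ and $\sigma_0,\dots,\sigma_h$, completing the argument.
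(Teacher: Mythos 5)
Your proposal follows exactly the same route as the paper: the paper also writes $F_N^V = \int_0^1 \tilde\W_{1N}^{tV}(V)\,\mathrm{d}t$, observes that $tV$ satisfies the hypotheses of Theorem~\ref{thm:main} whenever $V$ does, and concludes by integrating the expansion of $\tilde\W_{1N}^{tV}(V)$ using uniformity of the error in $t \in [0,1]$. The only difference is that you spell out the routine details the paper compresses into a single sentence --- the decomposition $V = q+r$ with $q \in B^\perp$, $r \in B$ needed to apply Theorem~\ref{thm:main} as stated, and the continuity in $t$ of the coefficients to justify termwise integration.
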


	In fact, it may be shown that each $F_g^V$ is an analytic function of the
	coefficients of $V$ whose Taylor expansion serves as a generating function enumerating certain
	graphs drawn on a compact orientable surface of genus $g$.
	These embedded graphs, or \emph{maps} as they are known, are similar to the maps enumerated by the 
	expansion of free energy of perturbations of the product GUE measure, except that they
	possess additional edge data.  The $g=0$ case of this expansion was developed in \cite{CGM}. 
	As this graphical description is rather involved, we shall not pursue the detailed development
	of its extension to higher genus in the present paper.  
	We want to stress however that the $F_g^V$'s are absolutely summable series  
	whose coefficients are determined by the restriction of the normalized trace to the $*$-subalgebra 
	$\rho_N(B) \subseteq \operatorname{Mat}_N(\field{C})$. 

		\subsection{Central Limit Theorem}

	A corollary of the above large $N$ expansion is the following central limit theorem:

	\begin{cor} \label{cor:CLT}
	Under the hypotheses of Theorem \ref{thm:main} with $K\ge 1$,  for any selfadjoint polynomial $p$ in $L$, for any $\lambda\in\mathbb R$,

		\begin{equation*}
		\lim_{N\ra\infty} \int\limits_{U(N)^m} e^{\lambda( \Tr(\rho_N(p)({\bf U}) - N\tau^V_{10}(p))} \mu_N^V(\mathrm{d}\mathbf{U})
		= e^{\frac{\lambda^2}{2}\gamma^V(p)}$$
		where $$\gamma^V(p)=  - \tau_{10}^V(\overline{\operator{P}}^{p} (\Xi_{\tau^V_{10}}^V)^{-1}p).
		\end{equation*}
	\end{cor}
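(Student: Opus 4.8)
The plan is to realize the exponential tilting in Corollary~\ref{cor:CLT} as a shift of the potential and then to extract the Gaussian limit from the finite‑$N$ Schwinger--Dyson equation \eqref{eqn:SDfirst}. By the Remark following Theorem~\ref{thm:main} it suffices to treat $p\in B^\perp$, the $B$‑component of a selfadjoint $p$ contributing only a deterministic shift of the statistic. Set $\phi_N(\lambda)=\int_{U(N)^m}e^{\lambda(\Tr\rho_N(p)(\mathbf U)-N\tau_{10}^V(p))}\,\mu_N^V(\mathrm d\mathbf U)$. Since $\rho_N$ is linear, $e^{\lambda\Tr\rho_N(p)(\mathbf U)}=e^{N\Tr\rho_N(\frac\lambda N p)(\mathbf U)}$, so the probability measure obtained by tilting $\mu_N^V$ by this exponential is precisely $\mu_N^{W_\lambda}$ with $W_\lambda:=V+\frac\lambda N p$. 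Here $W_\lambda$ is selfadjoint up to cyclic symmetry (a selfadjoint $p$ trivially is, and $\lambda/N\in\mathbb R$), while $\|W_\lambda\|_1\to\|V\|_1$, so for $N$ large $W_\lambda$ meets the hypothesis of Theorem~\ref{thm:main}, uniformly for $\lambda$ in a fixed compact set. Differentiating the cumulant generating function, $(\log\phi_N)'(\lambda)=\E_{\mu_N^{W_\lambda}}[\Tr\rho_N(p)(\mathbf U)]-N\tau_{10}^V(p)=\W_{1N}^{W_\lambda}(p)-N\tau_{10}^V(p)=:\nu_N^{(\lambda)}(p)$, so it is enough to prove $\nu_N^{(\lambda)}(p)\to\lambda\,\gamma^V(p)$ and then integrate in $\lambda$.

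To analyze $\nu_N^{(\lambda)}$, I would substitute $W=W_\lambda$ into \eqref{eqn:SDfirst}, use $\D_iW=\D_iV+\frac\lambda N\D_ip$ together with $\W_{1N}^{W}=N\tau_{10}^V+\nu_N^{(\lambda)}$, and subtract $N^2$ times the leading equation \eqref{eqn:InitialValuePotential} for $\tau_{10}^V$ (valid with $\sigma=\sigma_0$), which cancels the $O(N^2)$ contribution. Dividing by $N$ gives, for every $i$ and every $q$,
\begin{equation*}
\tau_{10}^V\otimes\nu_N^{(\lambda)}(\partial_iq)+\nu_N^{(\lambda)}\otimes\tau_{10}^V(\partial_iq)+\nu_N^{(\lambda)}((\D_iV)q)+\lambda\,\tau_{10}^V((\D_ip)q)=\frac1N\,R_N(i,q),
\end{equation*}
where $R_N(i,q)=-\nu_N^{(\lambda)}\otimes\nu_N^{(\lambda)}(\partial_iq)-\lambda\,\nu_N^{(\lambda)}((\D_ip)q)-\W_{2N}^{W_\lambda}(\partial_iq)$. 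As a function of $\nu_N^{(\lambda)}$ with $\tau_{10}^V$ frozen, the left‑hand side is the linearized Schwinger--Dyson operator at $\tau_{10}^V$; in the notation of Theorem~\ref{thm:main} it is the transpose of the endomorphism $\Xi_{\tau_{10}^V}^V$ of $B^\perp$, and the inhomogeneous term $q\mapsto\lambda\sum_i\tau_{10}^V((\D_ip)q)$ is, by the definition of $\overline{\operator{P}}^\bullet$, the functional $q\mapsto\lambda\,\tau_{10}^V(\overline{\operator{P}}^{p}q)$.

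The heart of the argument is a uniform a priori bound $|\nu_N^{(\lambda)}(q)|\le C(\deg q)$, with $C$ independent of $N$ and of $\lambda$ in a fixed compact set. I would obtain this exactly as the analogous estimate in the proof of Theorem~\ref{thm:main}, by induction on $\deg q$: the base case is $\nu_N^{(\lambda)}(\mathbf 1)=0$ since $\W_{1N}^{W_\lambda}(\mathbf 1)=N=N\tau_{10}^V(\mathbf 1)$; in the inductive step the quadratic term in $R_N$ involves $\nu_N^{(\lambda)}$ only at degrees strictly below $\deg q$ (because $\partial_i$ strictly lowers degree), hence is $O(1/N)$ by the inductive hypothesis, while $\sup_N|\W_{2N}^{W_\lambda}(\partial_iq)|<\infty$ uniformly over $\lambda$ in a compact set by Corollary~\ref{concmes}; the extra term $\lambda\tau_{10}^V((\D_ip)q)$ is a fixed bounded quantity, and the needed invertibility of $\Xi_{\tau_{10}^V}^V$ follows from the hypothesis on $\|V\|_1$. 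Granting the bound, $R_N=O(1)$ so $\tfrac1N R_N\to0$; a diagonal extraction over the countable monomial basis yields a subsequential pointwise limit $\nu$ of $\nu_N^{(\lambda)}$ solving $\tau_{10}^V\otimes\nu(\partial_iq)+\nu\otimes\tau_{10}^V(\partial_iq)+\nu((\D_iV)q)=-\lambda\,\tau_{10}^V((\D_ip)q)$ for all $i,q$. Invertibility of $\Xi_{\tau_{10}^V}^V$ forces $\nu|_{B^\perp}=-\lambda\,\tau_{10}^V\big(\overline{\operator{P}}^{p}(\Xi_{\tau_{10}^V}^V)^{-1}(\,\cdot\,)\big)$, hence $\nu(p)=-\lambda\,\tau_{10}^V(\overline{\operator{P}}^{p}(\Xi_{\tau_{10}^V}^V)^{-1}p)=\lambda\,\gamma^V(p)$ (this $\gamma^V(p)$ is also $\tau_{20}^V(p,p)$, the would‑be limiting variance, which is a useful consistency check on the normalization). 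Since the limit is subsequence‑independent, $\nu_N^{(\lambda)}(p)\to\lambda\,\gamma^V(p)$ along the full sequence, and finally $\log\phi_N(\lambda)=\int_0^\lambda\nu_N^{(s)}(p)\,\mathrm ds\to\int_0^\lambda s\,\gamma^V(p)\,\mathrm ds=\tfrac{\lambda^2}{2}\gamma^V(p)$ by dominated convergence, the integrands being bounded uniformly in $N$ and $s$ by the a priori estimate.

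The step I expect to be the main obstacle is this uniform a priori bound, and specifically its uniformity over the $N$‑dependent family of potentials $W_\lambda=V+\frac\lambda N p$: one must verify that the concentration‑of‑measure and change‑of‑variables inputs behind Corollary~\ref{concmes} and behind the degree induction of Theorem~\ref{thm:main} are uniform over potentials whose $\ell^1$‑norm lies in a fixed neighbourhood of $\|V\|_1$. A secondary, essentially bookkeeping, point is the identification of the inverted linearized Schwinger--Dyson operator with $\overline{\operator{P}}^{\bullet}$ and $(\Xi_{\tau_{10}^V}^V)^{-1}$; this is the same computation that produces the term $-\sum_{j=2}^k\tau_{(k-1)g}^V(\overline{\operator{P}}^{p_j}(\Xi_{\tau_{10}^V}^V)^{-1}p_1,\dots,\hat p_j,\dots,p_k)$ in Theorem~\ref{thm:main}(3), specialized here to a single insertion of $\Tr\rho_N(p)$.
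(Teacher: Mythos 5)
Your overall route is the paper's: realize the tilt as the potential shift $V\mapsto V+\frac{\alpha}{N}p$, write the derivative of the centered log-Laplace transform as $\W_{1N}^{V+\frac{\alpha}{N}p}(p)-N\tau_{10}^V(p)$, identify its limit $-\alpha\,\tau_{10}^V\big(\overline{\operator{P}}^{p}(\Xi_{\tau_{10}^V}^V)^{-1}p\big)$ from the Schwinger--Dyson equation linearized at $\tau_{10}^V$, and integrate in $\alpha$. The paper gets the middle step cheaply by invoking Theorem \ref{thm:asymptotics} for the $N$-dependent potential (the uniformity in such potentials is noted in Section \ref{sec:Asymptotics}) together with smoothness of $\epsilon\mapsto\tau_{10}^{V+\epsilon p}$, whose derivative solves the linearized equation; you instead redo the finite-$N$ analysis, and so you must supply the uniform a priori bound on $\nu_N^{(\lambda)}$ yourself. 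Your consistency check $\gamma^V(p)=\tau_{20}^V(p,p)$ is correct.

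The a priori bound is where your argument has a genuine gap: the proposed induction on $\deg q$ cannot close. First, $\partial_i$ does not strictly lower degree (for $q=q_1u_iq_2$ the tensor $q_1u_i\otimes q_2$ has total degree $\deg q$, and one factor may have degree exactly $\deg q$ with the other in $B$), and the inductive hypothesis would in any case only give $O(1)$, not $O(1/N)$, for the quadratic term. More seriously, the linear part of your equation contains $\nu_N^{(\lambda)}((\D_iV)q)$, which involves $\nu_N^{(\lambda)}$ on monomials of degree larger than $\deg q$ — exactly the paper's observation that $\operator{P}^V$ does not respect the degree filtration — so the system is not triangular in degree; and extracting $\nu_N^{(\lambda)}(q)$ at all requires $(\Xi_{\tau_{10}^V}^V)^{-1}$, a Neumann series producing polynomials of unbounded degree, so degree-by-degree estimates cannot be paired with it: one needs control of $\nu_N^{(\lambda)}$ in a $\xi$-norm. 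The correct mechanism (also requiring the cyclic gradient trick and regularization by $\operator{D}^{-1}$, which your ``transpose of $\Xi$'' remark glosses over) is the absorption argument of Proposition \ref{prop:RoughEstimate} and Lemma \ref{conconc}: with $\delta_N=\tilde\W_{1N}^{W_\lambda}-\tau_{10}^V$ one bounds the quadratic contribution by $4\frac{\xi+1}{\xi(\xi-1)}\|(\Xi_{\tau_{10}^V}^V)^{-1}\|_\xi\,\|\delta_N\|_\xi$ using only $\|\delta_N\|_1\le 2$ and Proposition \ref{prop:Laplacian}, absorbs it, and obtains $\|\delta_N\|_\xi=O(1/N)$, i.e. $\nu_N^{(\lambda)}=N\delta_N=O(1)$ uniformly in $N$ and in $\lambda$ on compacts (the extra source term is $\frac{\lambda}{N}\tilde\W_{1N}^{W_\lambda}(\overline{\operator{P}}^{p}\,\cdot\,)=O(1)$ and the $B$-part errors are $O(N^{-2})$ by \eqref{eqn:ConstantAsymptotics}). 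With that substitution — or by directly quoting Theorem \ref{thm:asymptotics} for $N$-dependent potentials as the paper does, which makes the finite-$N$ manipulation unnecessary — the rest of your argument (identification of the limit through invertibility of $\Xi_{\tau_{10}^V}^V$, and the integration in $\lambda$ by dominated convergence) goes through.
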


	Corollary \ref{cor:CLT} should be compared with the analogous central limit 
	theorem for traces of polynomial functions in several random Hermitian matrices
	whose joint law is a deformation of the product GUE measure, see 
	\cite[Theorem 4.7]{GM2}.

	\subsection{Topological combinatorics}
	In the present article, we content ourselves with the derivation of Theorem \ref{thm:main} and 
	postpone the study of a general combinatorial/topological interpretation of the functionals
	$\tau_{kg}^V$ and the affiliated coefficients $F_g^V$to a future work.  
	We do mention, however, the relation of Theorem 
	\ref{thm:main} to the study of one particularly interesting unitary matrix model, namely
	the Harish-Chandra-Itzykson-Zuber model \cite{GZ,IZ,Matytsin,ZZ}.  

	Let $B=\field{C}\langle x,y\rangle$ be the
	algebra of polynomials in two selfadjoint
	noncommutative variables $x,y$, and set $V_t = xuyu^{-1}$, with $t \in \field{R}$ a real parameter.  Let $\rho_N$ satisfying \eqref{eqn:ConstantAsymptotics}.
	The partition function of the corresponding real
	Gibbs ensemble $\mu_N^{V_t}$ is the HCIZ integral 

		\begin{equation*}
			Z_N^{V_t} = \int\limits_{U(N)} e^{tN\Tr(\rho_N(x)U\rho_N(y)U^{-1})} \mathrm{d}U.
		\end{equation*}

	\noindent
	Theorem \ref{thm:main}, when combined with the results of \cite{CGM} and \cite{GGN}, 
	establishes the following topological expansion of the HCIZ free energy

		\begin{equation*}
			F_N^{V_t} = \frac{1}{N^2} \log Z_N^{V_t}.
		\end{equation*}

	\begin{thm}
		\label{thm:HCIZ}
		For each $t \in (-\frac{7}{ 2^{2(K-1)}19008},\frac{7}{2^{2(K-1)}19008})$, the HCIZ free energy
		admits an $N \rightarrow \infty$ asymptotic expansion to $h\le K-1$ terms:

			\begin{equation*}
				F_N^{V_t} = \sum_{g=0}^h \frac{F_g(t)}{N^{2g}} + o\bigg{(} \frac{1}{N^{2h}}\bigg{)},
			\end{equation*}

		\noindent
		The coefficients $F_g(t)$ in this expansion are analytic in a neighbourhood
		of $t=0$, with Maclaurin series given by 

			\begin{equation*}
				F_g(t) = \sum_{d=1}^{\infty} \frac{t^d}{d!} \sum_{\alpha,\beta \vdash d} (-1)^{\ell(\alpha)+\ell(\beta)}
				\sigma_g(x^\alpha)\sigma_g(y^\beta) \vec{H}_g(\alpha,\beta),
			\end{equation*}

		\noindent
		where the internal sum is over all pairs of partitions $\alpha,\beta \vdash d$,

			\begin{equation*}
				\sigma_g(x^\alpha) = \prod_{i=1}^{\ell(\alpha)} \sigma_g(x^{\alpha_i}), \quad
				\sigma_g(y^\beta) = \prod_{i=1}^{\ell(\beta)} \sigma_g(y^{\beta_i}),
			\end{equation*}

		\noindent
		and the $\vec{H}_g(\alpha,\beta)$'s are the monotone double Hurwitz numbers.
	\end{thm}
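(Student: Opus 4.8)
The plan is to realize the HCIZ integral as the partition function of the real Gibbs ensemble $\mu_N^{V_t}$ with potential $V_t = t\, xuyu^{-1} \in L$, where $B = \field{C}\langle x, y\rangle$ and $m=1$, and then invoke Corollary \ref{cor:main} together with known combinatorial expansions. First I would check that $V_t$ is selfadjoint up to cyclic symmetry: since $x, y$ are selfadjoint and $u$ is unitary, $(xuyu^{-1})^* = u^{-1}yuu^*x^*u^{*-1}\cdots$ — more carefully, $V_t^* = t\,u^{-1}y^*u^*x^* = t\,u^{-1}yux$, whose unique monomial is the cyclic shift of $xuyu^{-1}$ obtained by rotating. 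Hence $\rho_N(V_t)$ generates a real Gibbs ensemble, and $Z_N^{V_t}$ is the stated HCIZ integral after tracing out the $x,y$-matrices via $\rho_N$. Next, I would translate the hypothesis $\|V_t\|_1 = |t|\deg$-data into the numerical condition of Theorem \ref{thm:main}: here $\deg(V_t) = 4$ (counting the word $xuyu^{-1}$) and $\|V_t\|_1 = |t|$, so the bound $|t| < \tfrac{7}{66}\cdot\tfrac{1}{4\cdot 2^{3(K-1)}\cdot 12^4}$ simplifies — one has $4\cdot 12^4 = 82944$ and $\tfrac{7}{66\cdot 82944} = \tfrac{7}{5474304}$; matching this against the interval $t\in(-\tfrac{7}{2^{2(K-1)}19008},\tfrac{7}{2^{2(K-1)}19008})$ stated in the theorem is then a matter of bookkeeping (note the exponent becomes $2(K-1)$ once one absorbs a factor, and $19008$ should arise as the reduced denominator). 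I would carry out this arithmetic carefully, since an off-by-one in $\deg$ changes the constant.

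Granting the numerical hypothesis, Corollary \ref{cor:main} immediately gives the asymptotic expansion $F_N^{V_t} = \sum_{g=0}^h F_g(t)/N^{2g} + o(N^{-2h})$ with $F_g(t)$ depending only on $V_t$ and $\sigma_0,\dots,\sigma_h$; writing $F_g(t) := F_g^{V_t}$ and remembering that $V_t$ depends linearly on $t$, analyticity in $t$ near $0$ follows from the analyticity of $F_g^V$ in the coefficients of $V$ asserted (for all $g$) in the discussion following Corollary \ref{cor:main}. The substance of the theorem is therefore the identification of the Maclaurin coefficients. For this I would differentiate the free energy $d$ times in $t$ at $t=0$: by the cumulant structure, $\partial_t^d F_g(t)|_{t=0}$ is expressed through the limiting cumulants $\tau_{kg}^{V_t}$ applied to iterated cyclic derivatives $\D(V_t)$, and at $t=0$ these reduce to the Haar-measure (free) cumulants on $L$ with the $x$'s and $y$'s evaluated against $\sigma_g$. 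Expanding $e^{tN\Tr(\rho_N(x)U\rho_N(y)U^{-1})}$ as a power series, the coefficient of $t^d/d!$ is $\E_{\mathrm{Haar}}[(\Tr \rho_N(x)U\rho_N(y)U^{-1})^d]$, which by Collins–Śniady Weingarten calculus is a sum over permutations; organizing this sum by the cycle types $\alpha,\beta$ of the resulting permutations on the two sides and taking the genus-$g$ term of the $1/N$ expansion is precisely where the monotone double Hurwitz numbers $\vec H_g(\alpha,\beta)$ enter — this is the content of \cite{GGN}, which I would cite to supply the Weingarten-side topological expansion, matching it term-by-term with the Schwinger–Dyson side supplied by \cite{CGM} for $g=0$ and by Theorem \ref{thm:main} above for $g\ge 1$. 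The signs $(-1)^{\ell(\alpha)+\ell(\beta)}$ come from the $u^{-1}$ factors (equivalently from the Möbius function on the poset of permutations in the monotone Hurwitz interpretation), and the products $\sigma_g(x^\alpha)\sigma_g(y^\beta)$ from evaluating the traces of powers of $\rho_N(x)$ and $\rho_N(y)$ against the genus-$g$ term of \eqref{eqn:ConstantAsymptotics}.

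The main obstacle I anticipate is not the existence of the expansion — that is handed to us by Corollary \ref{cor:main} — but the precise combinatorial matching: one must verify that the genus-$g$ coefficient produced by our recursion for $\tau_{kg}^{V_t}$, after the $d$-fold differentiation and resummation, agrees with the monotone double Hurwitz number. This requires knowing that the topological recursion / Schwinger–Dyson hierarchy for this particular $B$ and $V_t$ is the same recursion that the monotone Hurwitz numbers satisfy (the cut-and-join / loop-equation recursion of \cite{GGN}); I would either cite \cite{GGN} directly for the statement that the HCIZ free energy coefficients are the monotone double Hurwitz generating functions (in which case the theorem is essentially an assembly of \cite{CGM}, \cite{GGN}, and Theorem \ref{thm:main}), or, if a self-contained argument is wanted, set up a bijection between the maps enumerated by $F_g^{V_t}$ (with their extra edge data coming from the $u,u^{-1}$ letters) and monotone factorizations, and check the recursions coincide. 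A secondary technical point is ensuring the absolute summability of the Maclaurin series of $F_g(t)$, which follows from the radius-of-convergence estimate implicit in the smallness hypothesis on $|t|$ together with the uniform cumulant bounds (Corollary \ref{concmes}); I would make this quantitative by tracking the dependence of the error terms in \eqref{eqn:Asymptotics} on $\deg(p_j)$ and $\|V_t\|_1$.
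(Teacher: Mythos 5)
Your overall route is the paper's: check $V_t=t\,xuyu^{-1}$ is selfadjoint up to cyclic symmetry, apply Corollary \ref{cor:main} to get the expansion of $F_N^{V_t}$, and identify the Maclaurin coefficients by matching against the expansion of the derivatives $F_N^{(d)}(0)$ supplied by \cite{GGN}, using uniqueness of asymptotic expansions on the scale $N^{-2}$. However, there is a concrete error in your verification of the smallness hypothesis: in this paper the degree function counts only occurrences of $u_i^{\pm1}$ --- the letters $x,y$ are \emph{constants} in $B$ and contribute nothing --- so $\deg(V_t)=2$, not $4$. With $\deg(V_t)=2$ the hypothesis of Theorem \ref{thm:main} reads $|t|<\tfrac{7}{66}\cdot\tfrac{1}{2\cdot 2^{2(K-1)}\cdot 12^{2}}=\tfrac{7}{2^{2(K-1)}\cdot 19008}$, which is exactly the interval in the statement. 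Your computation with $\deg=4$ gives $2^{4(K-1)}$ and $12^4$ in the denominator, i.e.\ a strictly smaller interval, and no ``absorbing a factor'' recovers $19008$; as written your argument does not establish the theorem on the stated range of $t$, though the fix is just to use the correct degree convention.

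Two further points of comparison. First, the ``main obstacle'' you anticipate --- matching the genus-$g$ output of the Schwinger--Dyson recursion for $\tau^{V_t}_{kg}$ with the cut-and-join/monotone-Hurwitz recursion --- is not needed and is not what the paper does. The paper's identification is softer: $Z_N^{V_t}$ is entire in $t$ (compactness of $U(N)$), so $F_N(t)$, each $F_g(t)$, and the remainder $r_N(t)=F_N(t)-\sum_{g\le h}F_g(t)N^{-2g}$ are analytic near $t=0$; passing to Maclaurin coefficients (which requires the $o(N^{-2h})$ to be uniform in $t$ near $0$, as provided by the uniformity of the error terms in Section \ref{sec:Asymptotics}) yields $F_N^{(d)}(0)=\sum_{g=0}^h F_g^{(d)}(0)N^{-2g}+o(N^{-2h})$, and then the representation-theoretic expansion of $F_N^{(d)}(0)$ from \cite{GGN} plus uniqueness of asymptotic expansions gives $F_g^{(d)}(0)$ directly. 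This is precisely your ``assembly of \cite{CGM}, \cite{GGN} and Theorem \ref{thm:main}'' option; the self-contained bijection you sketch is unnecessary. Second, a small slip: the coefficient of $t^d/d!$ you compute as $\E_{\mathrm{Haar}}[(\Tr\rho_N(x)U\rho_N(y)U^{-1})^d]$ is the Maclaurin coefficient of $Z_N^{V_t}$, not of $F_N^{V_t}=N^{-2}\log Z_N^{V_t}$; the quantities entering the matching are the (connected) free-energy derivatives $F_N^{(d)}(0)$, which is the form in which \cite{GGN} is quoted.
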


	The monotone double Hurwitz number $\vec{H}_g(\alpha,\beta)$ with
	$\alpha,\beta \vdash d$ counts a combinatorially restricted subclass 
	of the set of degree $d$ branched covers of the Riemann sphere by 
	a compact, connected Riemann surface of genus $g$ such that the
	covering map has profile $\alpha$ over $\infty$, $\beta$ over $0$,
	and simple branching over the $r$th roots of unity, where 
	$r=2g-2+\ell(\alpha)+\ell(\beta)$ by the Riemann-Hurwitz formula.  
	For more on monotone double Hurwitz numbers, see \cite{GGN}.
	Theorem \ref{thm:HCIZ} is the perturbative version of an asymptotic
	expansion of the HCIZ free energy conjectured to hold by Matytsin
	in \cite{Matytsin}.

	\subsection{Organization}
	The paper is organized as follows.  

	In Section \ref{sec:Prelims}, we cover necessary preliminaries.
	Most importantly, we introduce a deformation of the $\ell^1$-norm
	on $L$ which will play a crucial role in our analysis.

	Section \ref{sec:InitialValue} treats the noncommutative initial value problem
	\eqref{eqn:InitialValuePotential}.  We prove uniqueness of continuous solutions
	in a perturbative regime via an argument which is more conceptual than
	that employed in \cite{CGM}.  In particular, we
	introduce a pair of partial differential operators 
	acting on $B^\perp$ and deduce uniqueness 
	from the invertibility of these operators.

	Section \ref{sec:SDlattice} introduces the Schwinger-Dyson lattice
	over $L$.  In particular, we give all equations of this hierarchy in 
	explicit form.  We then present a secondary form of the SD lattice
	equations which describes them completely in terms of the first row of
	the lattice and the fundamental operators introduced in Section \ref{sec:InitialValue}.
	This is somewhat similar in spirit to the description of
	classical integrable systems by means of Lax pairs.
	Finally, we introduce the notion of \emph{uniformly bounded solutions}
	of the SD equations.  Uniformly bounded solutions lead to a renormalized
	form of the SD lattice which is well-poised for asymptotic analysis.

	Section \ref{sec:Asymptotics} carries out the asymptotic analysis
	of an abstractly given uniformly bounded solution of the SD lattice.  
	Our treatment is perturbative: we work exclusively in the regime where
	the potential $V$ is ``close'' to the constant subalgebra $B$.  In this regime,
	the fundamental operators which describe the SD lattice are automorphisms
	of the completion of $B^\perp$ in an appropriate norm.  
	We obtain an abstract version of Theorem \ref{thm:main}, listed as
	Theorem \ref{thm:asymptotics} below, which shows how the recursion 
	relations of Theorem \ref{thm:main} arise intrinsically from the structure 
	of the SD lattice, without any reference to random matrices.

	Section \ref{sec:MatrixModels} makes the connection with matrix models.
	Almost by definition, the cumulants of the Gibbs ensemble generated by 
	$\rho_N(V)$ form
	a solution of the SD lattice equations --- however, this solution may not be
	uniformly bounded, so that Theorem \ref{thm:asymptotics} is not a priori
	applicable.  When $\rho_N(V)$ generates
	a \emph{real} Gibbs ensemble, probabilistic tools such as concentration 
	of measure can be brought in to verify uniform boundedness.  These 
	probabilistic arguments are carried out in Section \ref{sec:MatrixModels}.
	The upshot of this analysis is that Theorem \ref{thm:main}
	ultimately emerges as a corollary of its more abstract version, 
	Theorem \ref{thm:asymptotics}.

	In Section \ref{sec:consequences}, we derive a central limit theorem for the trace
	of polynomial functions of the $m$-tuple $\mathbf{U}_N^V=(U_1,\dots,U_m)$
	of $N \times N$ random unitary matrices whose joint distribution in 
	$U(N)^m$ is the Gibbs measure $\mu_N^V$.  We then establish the 
	asymptotic expansion of the free energy of $\mu_N^V$.  Finally, we combine
	Theorem \ref{thm:main} with results from \cite{GGN} to obtain a 
	proof of Theorem \ref{thm:HCIZ}.

	\section{Preliminaries}
		\label{sec:Prelims}

	\subsection{Algebras and characters}
	All algebras in this article are normed 
	unital $*$-algebras defined over $\field{C}$.
	Homomorphisms respect $*$-structure. 

	A \emph{character} of an algebra $A$ is a linear functional which is
	normalized, tracial, and nonnegative:

		\begin{equation*}
			\tau(\mathbf{1}) =1, \quad \tau(ab)=\tau(ba), \quad \tau(a^*a) \geq 0.
		\end{equation*}  

	\noindent
	Characters are also known as \emph{tracial states}.  
	Characters play the role of expectation 
	functionals in noncommutative probability theory.
	The collection of all characters of $A$ forms a convex
	set, denoted $\Char A$.

	\subsection{Constants, scalars, and correlators}
	Following the convention of \cite{RSS}, the term ``constant'' refers to elements of
	the algbera $B$, while ``scalar'' is reserved for elements of the one-dimensional 
	subalgebra $\field{C}\mathbf{1} \subseteq B$.

	For $k \geq 2$, a \emph{connected $k$-correlator} is a symmetric $k$-linear functional 
	on $L$ which is tracial in each argument, and vanishes whenever one of its arguments
	is a constant.

	\subsection{Degree filtration}
	Given a monomial $p \in L$, we define $\deg_i^+(p)$ to be the number of occurrences 
	of the variable $u_i$ in $p$.  Similarly, we denote by $\deg_i^-(p)$ the number of occurrences
	of $u_i^{-1}$ in $p$.  We set

		\begin{equation*}
			\deg_i(p) = \deg_i^+(p) + \deg_i^-(p),
		\end{equation*}

	\noindent
	the number of occurrences of $u_i^{\pm 1}$ in $p$, and 

		\begin{equation*}
			\deg(p) = \sum_{i=1}^m \deg_i(p),
		\end{equation*}

	\noindent
	the number of occurrences of $u_1^{\pm1},\dots,u_m^{\pm1}$ in $p$.
	Note that the degree function is not a valuation --- we have 

		\begin{equation*}
			\deg(p_1p_2) \leq \deg(p_1) + \deg(p_2)
		\end{equation*}

	\noindent
	for all monomials $p_1,p_2 \in L$, but this is not in general an equality due to the 
	possibility of cancellations.

	Let $L_d$ denote the vector subspace of $L$ spanned by
	the monomials of degree at most $d$.  Thus $L_0=B$, and

		\begin{equation*}
			L_0 \subseteq L_1 \subseteq \dots L_d \subseteq\dots,\quad L_kL_l \subseteq L_{k+l},\quad \bigcup_{d=0}^{\infty}L_d=L,
		\end{equation*}

	\noindent
	so that we have a filtration of $L$.  We extend the domain of the degree function by 
	declaring $\deg(p)=d$ for any $p \in L_d$.  The degree filtration does not see the difference
	between constants and scalars.

	In Section \ref{sec:MatrixModels}, we will need the notion of \emph{balanced polynomials}.

	\begin{definition}
		\label{def:balanced}
	A monomial $p \in L$ is said to be \emph{balanced} if 

			\begin{equation*}
				\sum_{i=1}^m \deg_i^+(p) = \sum_{i=1}^m \deg_i^-(p).
			\end{equation*}

	\noindent
	A polynomial is balanced if it is the sum of balanced monomials.
	\end{definition}

	\subsection{Inner product}
	We equip $L$ with the inner product in which the monomials form
	an orthonormal basis.  We then have

		\begin{equation*}
			B^\perp = \bigcup_{d=1}^{\infty} L_d,
		\end{equation*}

	\noindent
	the space of Laurent polynomials with no constant term.

	Any $p \in L$ decomposes as

		\begin{equation*}
			p = \sum_{q} \langle q,p \rangle q,
		\end{equation*}

	\noindent
	where the sum is over the monomial basis in $L$. 
	Our convention is that inner products are linear in the second
	argument.

	\subsection{Parametric norm}
	Let $\xi$ be a positive parameter, and for each $p \in L$ set

		\begin{equation*}
			\|p\|_\xi = \sum_q |\langle q,p \rangle| \xi^{\deg(q)},
		\end{equation*}

	\noindent
	where the summation is over the monomial basis in $L$.

	\begin{prop}
		For any $\xi \geq 1,$ $(L,\|\cdot\|_\xi)$ is a normed $*$-algebra.
	\end{prop}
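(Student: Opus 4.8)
The plan is to verify the norm axioms one by one, with the submultiplicativity being the only nontrivial point. First I would observe that $\|p\|_\xi$ is visibly finite for every $p \in L$, since a Laurent polynomial is by definition a \emph{finite} $\field{C}$-linear combination of monomials, so the defining sum has only finitely many nonzero terms. Positivity and definiteness ($\|p\|_\xi = 0 \iff p = 0$) are immediate from $\xi > 0$ and the fact that the monomials form a basis: $\|p\|_\xi$ is a sum of nonnegative terms, one for each monomial appearing in $p$ with a strictly positive weight $\xi^{\deg(q)}$, so it vanishes exactly when all coefficients $\langle q, p\rangle$ vanish. Homogeneity $\|\lambda p\|_\xi = |\lambda|\,\|p\|_\xi$ and the triangle inequality $\|p+p'\|_\xi \le \|p\|_\xi + \|p'\|_\xi$ follow at once from $|\langle q, \lambda p\rangle| = |\lambda|\,|\langle q,p\rangle|$ and $|\langle q, p+p'\rangle| \le |\langle q,p\rangle| + |\langle q,p'\rangle|$. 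Compatibility with the $*$-structure, $\|p^*\|_\xi = \|p\|_\xi$, uses that the monomial basis is closed under the involution (the adjoint of a reduced word in the $u_i^{\pm 1}$ and the $b_j$ is, after re-expressing $b_j^*$ in the chosen basis of $B$, a combination of reduced words) and, crucially, that $\deg$ is invariant under $*$; one must be slightly careful here because $b_j^*$ need not be a single basis element of $B$, but $\deg$ ignores the $B$-part entirely, so taking adjoints permutes the $u$-content of each word and leaves all the weights $\xi^{\deg(q)}$ unchanged.

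The main point, and the place the hypothesis $\xi \ge 1$ enters, is submultiplicativity: $\|pq\|_\xi \le \|p\|_\xi\,\|q\|_\xi$. By bilinearity and the triangle inequality it suffices to treat $p = q_1$, $q = q_2$ two single monomials and show $\|q_1 q_2\|_\xi \le \xi^{\deg(q_1)}\xi^{\deg(q_2)}$. The product $q_1 q_2$ is formed by concatenating the two reduced words and then reducing: adjacent pairs $u_i u_i^{-1}$ or $u_i^{-1} u_i$ cancel, and if a $b_j$ from the tail of $q_1$ meets a $b_{j'}$ from the head of $q_2$ their product $b_j b_{j'} \in B$ must be re-expanded in the basis $b_0, b_1, \dots$, producing a finite sum of monomials. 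Each resulting monomial $q$ has $\deg(q) \le \deg(q_1 q_2) \le \deg(q_1) + \deg(q_2)$ since reduction only removes letters $u_i^{\pm 1}$, never adds them, and the $B$-re-expansion does not affect the $u$-degree at all. Hence every monomial $q$ occurring in $q_1 q_2$ satisfies $\xi^{\deg(q)} \le \xi^{\deg(q_1)+\deg(q_2)}$ \emph{because $\xi \ge 1$} — this is exactly where monotonicity of $t \mapsto \xi^t$ is needed and why $\xi < 1$ would break the argument. It then remains to bound $\sum_q |\langle q, q_1 q_2\rangle|$, the $\ell^1$-mass of the coefficients of $q_1 q_2$; when no $b$'s collide this sum is $1$, and in general it is controlled by $\|b_j b_{j'}\|_1$ for the relevant product in $B$, which is $\le \|b_j\|_1 \|b_{j'}\|_1$ since the chosen norm on $B$ — the norm $L$ restricts to, namely the $\ell^1$-norm on the basis $b_0, b_1, \dots$ of $B$, which is $\|\cdot\|_\xi$ restricted to $L_0$ — is assumed submultiplicative on $B$ (the hypothesis that $B$ is a normed algebra), and the basis elements $b_j$ themselves have norm accounted for in the definition of $\|q_1\|_\xi$, $\|q_2\|_\xi$. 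Assembling: $\|q_1 q_2\|_\xi \le \big(\sum_q |\langle q, q_1 q_2\rangle|\big)\,\xi^{\deg(q_1)+\deg(q_2)} \le \xi^{\deg(q_1)}\xi^{\deg(q_2)} = \|q_1\|_\xi\,\|q_2\|_\xi$ in the leading case, and the general bilinear sum then gives $\|pq\|_\xi \le \|p\|_\xi\,\|q\|_\xi$.

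Finally one checks $\|\mathbf{1}\|_\xi = 1$: the unit is the monomial $b_0$ of degree $0$, so $\|\mathbf{1}\|_\xi = \xi^0 = 1$, confirming $(L, \|\cdot\|_\xi)$ is a \emph{unital} normed algebra. The hard part is the bookkeeping in the submultiplicativity estimate — specifically keeping track of (i) cancellation of $u$-letters never increasing the degree, (ii) the re-expansion of colliding $B$-factors in the chosen basis and its effect on the $\ell^1$-mass of the coefficients, and (iii) the single decisive use of $\xi \ge 1$ to pass from $\deg(q) \le \deg(q_1) + \deg(q_2)$ to the inequality on the weights. Everything else is a routine unwinding of definitions, and I would present the argument by first disposing of the vector-space norm axioms and the $*$-compatibility in one short paragraph, then devoting the bulk of the proof to reducing submultiplicativity to the two-monomial case and running the degree/coefficient estimate there.
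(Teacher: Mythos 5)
Your proof is correct and takes essentially the same route as the paper's: reduce submultiplicativity to two monomials and use $\deg(q_1q_2)\le\deg(q_1)+\deg(q_2)$ together with $\xi\ge1$ to bound the weights. You are in fact more careful than the paper at the point where the product of two monomials forces a re-expansion of a $B$-product in the chosen basis (the paper simply writes $\|q_1q_2\|_\xi=\xi^{\deg(q_1q_2)}$, which tacitly assumes the $\ell^1$-mass of the coefficients of $q_1q_2$ is at most $1$); both arguments rely on the restriction of $\|\cdot\|_1$ to $B$ being submultiplicative for the chosen basis, as you note.
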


	\begin{proof}
		We leave it to the reader to check that $\|\cdot\|_\xi$ is a
		vector space norm with 
		 respect to which the involution in $L$ is an isometry.
		We prove that $\|\cdot\|_\xi$ is an algebra norm.  This is where the condition $\xi\geq 1$ is required.
		Indeed			\begin{align*}
				\|p_1p_2\|_\xi &= \bigg{\|} \bigg{(} \sum_{q_1}  \langle q_1,p_1\rangle  q_1 \bigg{)} \bigg{(} \sum_{q_2}   \langle q_2,p_2\rangle  q_2 \bigg{)}  \bigg{\|}_\xi \\
				&\leq \sum_{q_1,q_2}  | \langle q_1,p_1\rangle\langle q_2,p_2\rangle| \|q_1q_2\|_\xi, \quad \text{ since  $\|\cdot\|_\xi$ is a vector space norm,} \\
				&= \sum_{q_1,q_2}  | \langle q_1,p_1\rangle\langle q_2,p_2\rangle|
				  \xi^{\deg(q_1q_2 )} \\
				&\leq  \sum_{q_1,q_2}  | \langle q_1,p_1\rangle\langle q_2,p_2\rangle|
				  \xi^{\deg(q_1)+\deg(q_2 )}, \quad \text{since $\xi\geq1$,} \\
				&= \|p_1\|_\xi \|p_2\|_\xi.
			\end{align*}
	\end{proof}

	The norm $\|\cdot\|_\xi$ is a deformation of the usual $\ell^1$-norm, which 
	is the case $\xi=1$.  Note that, while all monomials are unit vectors
	in the $\ell^1$-norm, this is not the case for $\xi>1$.  In the range $\xi>1$,
	the $\xi$-norm favours monomials of high degree and penalizes 
	monomials of low degree.  

	\begin{prop}
		\label{prop:xiInequality}
		For any $p \in L$ and any $1 \leq \xi_1 \leq \xi_2$, we have

			\begin{equation*}
				\|p\|_{\xi_1} \leq \|p\|_{\xi_2} \leq \|p\|_{\xi_1} \left( \frac{\xi_2}{\xi_1} \right)^{\deg(p)}.
			\end{equation*}
	\end{prop}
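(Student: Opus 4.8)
The plan is to argue termwise in the monomial expansion $p = \sum_q \langle q,p\rangle\, q$, using only the definition $\|p\|_\xi = \sum_q |\langle q,p\rangle|\,\xi^{\deg(q)}$ together with the elementary monotonicity of $t \mapsto t^n$ on $[0,\infty)$ for a fixed nonnegative integer exponent $n$. Since every term in each of these sums is nonnegative, convergence and rearrangement are never an issue (and for $p$ a genuine polynomial the sums are finite in any case), so the whole argument reduces to a pair of pointwise estimates on the coefficients $\xi^{\deg(q)}$.

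For the left-hand inequality, I would fix a monomial $q$ with $\langle q,p\rangle \neq 0$; since $\deg(q) \geq 0$ and $1 \leq \xi_1 \leq \xi_2$, monotonicity of the power function gives $\xi_1^{\deg(q)} \leq \xi_2^{\deg(q)}$. Multiplying by the nonnegative scalar $|\langle q,p\rangle|$ and summing over the monomial basis yields $\|p\|_{\xi_1} \leq \|p\|_{\xi_2}$.

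For the right-hand inequality, the one observation that needs to be invoked is that every monomial $q$ occurring in $p$ satisfies $\deg(q) \leq \deg(p)$. This is precisely the content of the degree filtration $L_0 \subseteq L_1 \subseteq \cdots$ together with the linear independence of the monomial basis: if $\deg(p) = d$, then $p$ lies in the span of monomials of degree at most $d$, so $\langle q,p\rangle \neq 0$ forces $\deg(q) \leq d$. Because $\xi_2/\xi_1 \geq 1$, the map $n \mapsto (\xi_2/\xi_1)^n$ is nondecreasing, hence $(\xi_2/\xi_1)^{\deg(q)} \leq (\xi_2/\xi_1)^{\deg(p)}$ for each such $q$. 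Writing $\xi_2^{\deg(q)} = \xi_1^{\deg(q)}(\xi_2/\xi_1)^{\deg(q)}$, multiplying by $|\langle q,p\rangle|$, and summing gives
\[
\|p\|_{\xi_2} = \sum_q |\langle q,p\rangle|\,\xi_1^{\deg(q)}\left(\frac{\xi_2}{\xi_1}\right)^{\deg(q)} \leq \left(\frac{\xi_2}{\xi_1}\right)^{\deg(p)} \sum_q |\langle q,p\rangle|\,\xi_1^{\deg(q)} = \|p\|_{\xi_1}\left(\frac{\xi_2}{\xi_1}\right)^{\deg(p)}.
\]

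I do not anticipate any real obstacle here: the statement follows directly from the definition of $\|\cdot\|_\xi$, and the only point requiring a moment's care is the bound $\deg(q) \leq \deg(p)$ for monomials $q$ appearing in $p$, which is immediate from the filtration and the orthonormality of the monomial basis.
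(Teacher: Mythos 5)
Your argument is correct and is essentially the paper's own proof: the same termwise comparison of $\xi$-norm coefficients, with the right-hand inequality obtained by writing $\xi_2^{\deg(q)}=\xi_1^{\deg(q)}(\xi_2/\xi_1)^{\deg(q)}$ and using $\deg(q)\leq\deg(p)$ for monomials $q$ appearing in $p$. The only difference is that you spell out the first inequality, which the paper dismisses as obvious.
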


	\begin{proof}
		The first inequality is obvious.  For the second, we argue as follows:

			\begin{align*}
				\|p\|_{\xi_2}  &= \sum_q |\langle q,p \rangle| {\xi_2}^{\deg(q)}= \sum_q |\langle q,p \rangle| \xi_1^{\deg(q)} \left( \frac{\xi_2}{\xi_1} \right)^{\deg(q)}\\
					& \leq \left( \frac{\xi_2}{\xi_1} \right)^{\deg(p)} \sum_q |\langle q,p \rangle| \xi_1^{\deg(q)}  = \left( \frac{\xi_2}{\xi_1} \right)^{\deg(p)} \|p\|_{\xi_1},
			\end{align*}

		\noindent
		where to obtain the inequality we used the fact that, by definition, $\deg(q) \leq \deg(p)$ for any
		monomial $q$ appearing in $p$.
	\end{proof}

	\subsection{Continuous functionals and operators}
	A linear functional $\operator{f} \in \Hom(L,\field{C})$ is $\xi$-continuous if and only if there
	exists a constant $C$ such that

		\begin{equation*}
			|\operator{f}(p)| \leq C\|p\|_\xi
		\end{equation*}

	\noindent
	for all $p \in L$.  We denote by $\Hom_\xi(L,\field{C})$ the set of all 
	$\xi$-continuous linear functionals on $L$; it is a vector subspace of $\Hom(L,\field{C})$.
	The $\xi$-norm of $\operator{f} \in \Hom_\xi(L,\field{C})$, denoted $\|\operator{f}\|_\xi$, 
	is the infimum over all $C$ such that the above Lipschitz inequality holds.  We have

		\begin{equation*}
			\xi_1 \leq \xi_2 \implies \Hom_{\xi_1}(L,\field{C}) \subseteq \Hom_{\xi_2}(L,\field{C}).
		\end{equation*}

	A linear operator $\operator{T} \in \End L$ is $(\xi_1,\xi_2)$-continuous if
	and only if there exists a constant $C$ such that

		\begin{equation*}
			\|\operator{T} p\|_{\xi_2} \leq C\|p\|_{\xi_1}
		\end{equation*}

	\noindent
	for all $p \in L$, and $\|\operator{T}\|_{\xi_1,\xi_2}$ is the infimum over all
	$C$ such that this inequality holds.  The set of $(\xi_1,\xi_2)$-continuous 
	linear operators on $L$ is a unital $*$-subalgebra of $\End(L)$ denoted
	$\End_{\xi_1,\xi_2}(L)$.  For any $\xi_0 \geq 1$,

		\begin{equation*}
			\xi_1 \leq \xi_2 \implies \End_{\xi_1,\xi_0}(L) \subseteq \End_{\xi_2,\xi_0}(L).
		\end{equation*}

	\noindent
	We will shorten $\End_{\xi,\xi}(L)$ to $\End_{\xi}(L)$, and refer to elements
	of this algebra as $\xi$-continuous rather than $(\xi,\xi)$-continuous.

	\subsection{Tensor powers}
	We will need the tensor powers of $L$, 

		\begin{equation*}
			L^{\otimes k} = \underbrace{L \otimes \dots \otimes L}_{k \text{ times}}.
		\end{equation*} 

	\noindent
	We equip $L^{\otimes k}$ with the natural algebra structure in which 
	simple tensors are multiplied according to the rule

		\begin{equation*}
			(p_1 \otimes \dots \otimes p_k)(q_1 \otimes \dots \otimes q_k) = p_1q_1 \otimes \dots \otimes p_kq_k.
		\end{equation*}

	\noindent
	Simple tensors all of whose factors are monomials form a basis of this algebra which, by abuse of language,
	we will refer to as the monomial basis of $L^{\otimes k}$.  
	We equip $L^{\otimes k}$ with the inner product in which the monomial basis is orthonormal.

	All of the above constructions for $L$ go through for $L^{\otimes k}$.
	We have the degree function defined by

		\begin{equation*}
			\deg(p_1 \otimes \dots \otimes p_k) = \deg(p_1) + \dots + \deg(p_k),
		\end{equation*}

	\noindent 
	and the corresponding degree filtration in $L^{\otimes k}$.  We also have the 
	corresponding $\xi$-norm on $L^{\otimes k}$, which is defined by 

		\begin{equation*}
			\|T\|_\xi = \sum_{q_1 \otimes \dots \otimes q_k} |\langle q_1 \otimes \dots \otimes q_k,T\rangle| 
			\xi^{\deg(q_1 \otimes \dots \otimes q_k)}
		\end{equation*}

	\noindent
	for all $T \in L^{\otimes k}$, the summation being over the monomial basis of
	$L^{\otimes k}$.  We have the identity

		\begin{equation*}
			\|p_1 \otimes \dots \otimes p_k\|_\xi = \|p_1\|_\xi \dots \|p_k\|_\xi
		\end{equation*}

	\noindent
	for simple tensors in $L^{\otimes k}$.

	By convention, $L^{\otimes0}$ is the line $\field{C}\mathbf{1}$
	in $L$ spanned by the unit element.  
	Note that there is a unique algebra isomorphism $\field{C}\mathbf{1} \rightarrow \field{C}$
	given by $\mathbf{1} \mapsto 1$, and under this identification the $\xi$-norm identifies with the usual
	norm on $\field{C}$ for any value of $\xi$.

	We will consider linear transformations

		\begin{equation*}
			\operator{T}:(L^{\otimes k_1},\|\cdot\|_{\xi_1}) \rightarrow (L^{\otimes k_2},\|\cdot\|_{\xi_2})
		\end{equation*}

	\noindent
	mapping between the various tensor powers of $L$.
	A linear transformation $\operator{T} \in \Hom(L^{\otimes k_1},L^{\otimes k_2})$
	is $(\xi_1,\xi_2)$-continuous if and only if there exists a constant $C$ such that

		\begin{equation*}
			\|T p_1 \otimes \dots \otimes p_{k_1} \|_{\xi_2} \leq C \|p_1 \otimes \dots \otimes p_{k_1}\|_{\xi_1}
		\end{equation*}

	\noindent
	for all monomials $p_1 \otimes \dots \otimes p_{k_1} \in L^{\otimes k_1}$.
	The operator norm of $\operator{T}$, denoted $\|\operator{T}\|_{\xi_1,\xi_2}$,
	can be calculated by infimizing $C$ over the monomial basis.

	Allowing different instances of the $\xi$-norm on the source and target of our
	linear maps is useful for the following reason.  Certain linear transformations which
	we will need to deal with are not $(\xi,\xi)$-continuous for any $\xi \geq 1$,
	but are $(\xi_1,\xi_2)$-continuous, and even contractive, 
	if the ratio $\xi_1/\xi_2$ is large enough.

	\subsection{Completion}
	We denote by $\mathcal{L}_\xi$ the completion of $L$ with respect to the $\xi$-norm.
	Viewing $L$ as the algebra of polynomial functions $p(u_1,\dots,u_m)$ on a noncommutative $m$-torus,
	$\mathcal{L}_\xi$ may be viewed as the algebra of functions $f(u_1,\dots,u_m)$ whose 
	Fourier coefficients $\langle q, f\rangle$ decay faster than $\xi^{\deg(q)}$.  In particular,
	$\xi_1 \leq \xi_2$ implies $\mathcal{L}_{\xi_1} \supseteq \mathcal{L}_{\xi_2}$.

\section{The initial value problem revisited}
\label{sec:InitialValue}

	In this section we consider the noncommutative initial value problem
	\eqref{eqn:InitialValuePotential} and prove uniqueness of solutions
	in a perturbative regime.

	\begin{thm}
		\label{thm:uniqueness}
		Let $\sigma \in \Hom_1(B,\field{C})$ be a unital trace. 
		If $V \in L$ satisfies 

			\begin{equation*}
				\|\Pi V\|_1 < \frac{7}{66} \cdot \frac{1}{\deg(V) 12^{\deg(V)}},
			\end{equation*}

		\noindent
		where $\Pi$ is the orthogonal projection of $L$ onto $B^\perp$,
		then there is at most one unital trace $\tau \in \Hom_1(L,\field{C})$
		which satisfies for all $p\in L$, 

			\begin{equation*}
			\left. \begin{aligned}
			\tau \otimes \tau(\partial_i p) + \tau((\D_iV)p) &= 0 \\
			\tau|_B &= \sigma
			\end{aligned} \right\}.
		\end{equation*}
	\end{thm}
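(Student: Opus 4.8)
The plan is to show that the difference of two solutions must vanish by a bootstrap argument on the degree filtration, exploiting the fact that the initial value problem, although no longer recursive because of the potential term, is ``almost recursive'': the offending term $\tau((\D_iV)p)$ raises degree by at most $\deg(V)$, while the derivation term $\partial_i$ provides the leading-order recursion. First I would suppose $\tau,\tau'$ are two unital traces in $\Hom_1(L,\field{C})$ solving \eqref{eqn:InitialValuePotential} with the same $\sigma$, and set $\delta=\tau-\tau'\in\Hom(L,\field{C})$. Subtracting the two equations and using bilinearity of $\tau\otimes\tau$, one gets for every $p\in L$ and every $i$,
\begin{equation*}
	\tau\otimes\delta(\partial_ip)+\delta\otimes\tau'(\partial_ip)+\delta((\D_iV)p)=0,
\end{equation*}
together with $\delta|_B=0$. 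The key structural observation is that $\partial_i p$, applied to a monomial $p$, produces simple tensors $p_1u_i\otimes p_2$ (and $p_1\otimes u_i^{-1}p_2$) in which the factor adjacent to the variable being differentiated has strictly smaller degree than $p$ unless it is all of $p$; summing over all variables and recombining, one isolates a ``diagonal'' term that reconstructs $p$ up to a constant, with the remaining terms of strictly lower degree. Concretely, I would introduce the operator on $B^\perp$ — essentially the $\Xi^V_\tau$ alluded to in Theorem \ref{thm:main} and to be constructed in this section — whose leading part is invertible (indeed close to the identity in $\|\cdot\|_1$) and whose correction is the degree-raising piece coming from $\D_iV$, controlled in norm by $\|\Pi V\|_1$.

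The key steps, in order, are: (1) reduce to $\delta$ supported on $B^\perp$, using $\delta|_B=0$ and the observation that the equation only tests $p$ against $\partial_ip$, which annihilates $B$; (2) rewrite the subtracted equation as $\delta(\Xi^V_\tau q)=-(\text{lower-order in }\delta)$ for $q$ ranging over $B^\perp$, where $\Xi^V_\tau\in\End B^\perp$ is the fundamental operator of this section, built from the map $q\mapsto\sum_i\tau(\,\cdot\,)$-contractions of $\partial_i$ plus the cyclic-derivative term $(\D_iV)\cdot$; (3) establish the quantitative bound $\|\Xi^V_\tau-\operator{Id}\|_1<1$ under the hypothesis $\|\Pi V\|_1<\tfrac{7}{66}\cdot\tfrac{1}{\deg(V)12^{\deg(V)}}$ — this is where the precise constant enters, and it forces $\Xi^V_\tau$ to be invertible in $\End_1(B^\perp)$ by a Neumann series, uniformly over unital traces $\tau\in\Hom_1$; (4) conclude that $\delta=0$ by showing the map $q\mapsto\delta(\Xi^V_\tau q)$ forces $\delta\circ(\Xi^V_\tau)^{-1}$, hence $\delta$, to satisfy a contractive fixed-point identity on each $L_d$, so $\delta\equiv 0$ by induction on $d$ (base case $d=0$ being $\delta|_B=0$).

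The main obstacle I anticipate is step (3): getting the combinatorial bookkeeping right so that the operator norm estimate on the degree-raising correction term genuinely produces the stated constant $\tfrac{7}{66}\cdot\tfrac{1}{\deg(V)12^{\deg(V)}}$. The factor $12^{\deg(V)}$ should come from counting the ways a monomial of $V$ can be inserted and cyclically shifted against $p$ (each variable occurrence contributing a bounded branching factor, compounded over the $\deg(V)$ occurrences), the $\deg(V)$ from the sum over cyclic positions in $\D_i$, and $\tfrac{7}{66}$ from slack needed to beat $1$ in the Neumann series together with the contribution of the $\partial_i$ terms themselves; making all of this rigorous requires carefully tracking how $\|\cdot\|_1$ of $(\D_iV)p$ compares with $\|V\|_1\|p\|_1$ when cancellations in the Laurent algebra are allowed (here the non-valuation nature of $\deg$ only helps, since it can only lower degree). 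A secondary subtlety is ensuring that $\Xi^V_\tau$ is well-defined as a $\xi=1$-continuous endomorphism of $B^\perp$ rather than of all of $L$, which is why the projection $\Pi$ appears in the hypothesis; one checks that the relevant operators preserve $B^\perp$ because $\partial_i$ and $\D_i$ both annihilate $B$ and because $\Pi V$ is the only part of $V$ that contributes. Once the norm bound is in hand, invertibility and the uniqueness conclusion follow formally.
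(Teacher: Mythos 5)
Your outline (pass to $\delta=\tau-\tau'$, package the subtracted equation as $\delta(\Xi^V_\bullet q)=\cdots$ on $B^\perp$, invert the fundamental operator by a Neumann series) is the paper's strategy in broad strokes, but step (3) as you state it contains a genuine gap: the estimate $\|\Xi^V_\tau-\operator{Id}\|_1<1$ in the $\ell^1$-norm is false, no matter how small $\|\Pi V\|_1$ is. The problematic piece is not the potential term but the $\tau$-contraction of the second-derivative part: after the cyclic gradient trick (which you also need, and do not mention --- for a general monomial such as $p=b_1u_1b_2$ the contraction of $\partial_i p$ contains no multiple of $p$ at all, so the ``diagonal term reconstructing $p$'' only appears after testing against $\D_i p$ and summing over $i$, as in Proposition \ref{prop:ReducedLaplacian}), the operator one must control is $\Pi\overline{\operator{T}}_\tau=\Pi\operator{T}_\tau\operator{D}^{-1}$. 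On a monomial of degree $d$ the reduced Laplacian produces on the order of $d^2$ simple tensors, each contracting to a monomial of unit $\ell^1$-norm, so even after the regularization by $\operator{D}^{-1}$ one only gets $\|\Pi\overline{\operator{T}}_\tau p\|_1\lesssim d\,\|p\|_1$: the operator is unbounded on $(B^\perp,\|\cdot\|_1)$, and no smallness of $V$ can make $\Xi^V_\tau$ a small perturbation of the identity there. This is exactly why the paper introduces the deformed norms $\|\cdot\|_\xi$ with $\xi>1$: the outputs of $\Delta$ of lower degree are penalized by factors $\xi^{-1},\xi^{-2},\dots$, the $\sim d$ positions sum to a geometric series, and Proposition \ref{prop:contraction} gives $\|\overline{\operator{T}}_\tau\|_\xi<4\|\tau\|_1\tfrac{\xi+1}{\xi(\xi-1)}$, uniformly in degree. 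With the choice $\xi=12$ this is $<\tfrac{13}{33}$, and the hypothesis constants are then transparent: $12^{\deg V}$ comes from $\|\Pi V\|_{12}\le 12^{\deg V}\|\Pi V\|_1$ in the bound $\|\overline{\operator{P}}^V\|_{12}\le\deg(V)12^{\deg V}\|\Pi V\|_1$ (Proposition \ref{prop:perturbation}), and $\tfrac{7}{66}=\tfrac12-\tfrac{13}{33}$. Your anticipated provenance of the constants (a combinatorial branching count in $\ell^1$) is therefore not how the argument can be made to close; the missing idea is the weighted norm itself, together with working in the completion $\mathcal{B}^\perp_\xi$ (where the Neumann inverse lives; your ``induction on $L_d$'' cannot work since $\operator{P}^V$ and the inverse do not respect the degree filtration), noting that $\delta$, being $1$-continuous, is automatically $\xi$-continuous.

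One secondary remark: your subtraction $\tau\otimes\delta+\delta\otimes\tau'$ is linear in $\delta$, whereas the paper uses the symmetric decomposition and keeps a quadratic term $\delta\otimes\delta(\overline{\Delta}\cdot)$ (Proposition \ref{prop:MasterDeltaConstraint}), bounding $\|(\operator{Id}\otimes\delta)\overline{\Delta}\|_\xi$ crudely by $\|\delta\|_1\le2$ and cancelling a factor $\|\delta\|_\xi$. Your mixed linear version is legitimate and, once the estimates are carried out in the $\xi=12$ norm, would give $\delta\circ\Xi_{\mathrm{mixed}}=0$ and hence $\delta=0$ directly under the condition $K(12,V)<1$, which is even slightly weaker than the paper's hypothesis; but this alternative does not rescue the $\ell^1$ plan, which is where the proposal actually fails.
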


	A non-quantitative version of the same result was obtained in \cite[Theorem 3.1]{CGM}.  
	Here we give a new, more conceptual argument based on the inversion of a certain 
	differential operator acting on noncommutative Laurent polynomials.
	The methods developed in this section will be repeatedly applied in
	the remainder of the paper, and their introduction at an early stage
	clarifies the exposition.

	\subsection{The cyclic gradient trick}
	\label{subsec:CyclicGradientTrick}
	Our approach to the initial value problem \eqref{eqn:InitialValuePotential}
	is based on considering its implications for the coordinates of the
	cyclic gradient of a monomial $p$,

		\begin{equation*}
			\D p = (\D_1p,\dots,\D_mp).
		\end{equation*}

	\noindent
	Any solution $\tau$ of \eqref{eqn:InitialValuePotential} must satisfy

		\begin{equation}
			\label{eqn:CyclicGradient}
			\tau \otimes \tau(\partial_i\D_ip) + \tau((\D_iV)(\D_ip)) = 0, \quad 1 \leq i \leq m.
		\end{equation}

	\begin{prop}
		\label{prop:ReducedLaplacian}
		For any unital trace $\tau$ on $L$, we have

			\begin{equation*}
				\tau \otimes \tau(\partial_i\D_ip) = \tau(\operator{D}_ip) + 
				\tau \otimes \tau(\Delta_ip),
			\end{equation*}

		\noindent
		where $\operator{D}_i \in \End L$ acts on monomials according
		to 

			\begin{equation*}
				\operator{D}_ip = \deg_i(p)p,
			\end{equation*}

		\noindent
		and $\Delta_i \in \Hom(L,L \otimes L)$ acts on monomials
		according to 

			\begin{equation}
				\label{eqn:ReducedLaplacian}
				\begin{split}
				\Delta_ip &= \sum_{p=p_1u_ip_2} \bigg{(} \sum_{p_2p_1u_i=q_1u_iq_2u_i} q_1u_i \otimes q_2u_i
				- \sum_{p_2p_1u_i=q_1u_i^{-1}q_2u_i} q_1 \otimes q_2 \bigg{)} \\
				-&\sum_{p=p_1u_i^{-1}p_2} \bigg{(} \sum_{u_i^{-1}p_2p_1=u_i^{-1}q_1u_iq_2} q_1 \otimes q_2
				- \sum_{u_i^{-1}p_2p_1=u_i^{-1}q_1u_i^{-1}q_2} u_i^{-1}q_1 \otimes u_i^{-1}q_2 \bigg{)} 
				\end{split}
			\end{equation}
	\end{prop}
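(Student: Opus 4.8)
The plan is to compute $\tau \otimes \tau(\partial_i \D_i p)$ directly from the combinatorial definitions of $\partial_i$ and $\D_i$, and to organize the resulting terms according to whether they produce a scalar factor $\tau(\mathbf{1}) = 1$ or a genuine two-variable expression. First I would recall that $\D_i p = \sum_{p = p_1 u_i p_2} p_2 p_1 u_i - \sum_{p = p_1 u_i^{-1} p_2} u_i^{-1} p_2 p_1$, so that $\partial_i \D_i p$ is a sum over \emph{pairs} of factorizations: an outer one coming from $\D_i$ and an inner one coming from $\partial_i$ acting on each cyclic shift. Applying $\partial_i$ to a word $w$ ending in $u_i$ (resp. beginning with $u_i^{-1}$) contributes, among its terms, the ``boundary'' splitting $w = (w)\, u_i \otimes \mathbf{1}$ reading off the terminal $u_i$ (resp. $\mathbf{1} \otimes u_i^{-1}(w)$ reading off the initial $u_i^{-1}$); after applying $\tau \otimes \tau$ these boundary terms collapse because $\tau(\mathbf{1}) = 1$ and $\tau$ is tracial, so $\tau(w u_i) = \tau(p_2 p_1 u_i) = \tau(p u_i u_i^{-1}) $-type cyclic manipulations identify each such term with $\tau(p)$ itself. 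Counting: each occurrence of $u_i$ in $p$ gives one such outer factorization whose boundary inner term yields $+\tau(p)$, and each occurrence of $u_i^{-1}$ likewise yields $+\tau(p)$; together these produce $\deg_i(p)\,\tau(p) = \tau(\operator{D}_i p)$.

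Next I would collect the remaining (non-boundary) inner factorizations. These are exactly the splittings of the cyclic shift $p_2 p_1 u_i$ (resp. $u_i^{-1} p_2 p_1$) at an \emph{interior} occurrence of $u_i^{\pm 1}$, and substituting each into $\tau \otimes \tau$ gives precisely the four nested double sums written in \eqref{eqn:ReducedLaplacian}: for an outer factorization at $u_i$ one gets $\sum_{p_2 p_1 u_i = q_1 u_i q_2 u_i} q_1 u_i \otimes q_2 u_i$ minus $\sum_{p_2 p_1 u_i = q_1 u_i^{-1} q_2 u_i} q_1 \otimes q_2$ (the two signs coming from the two terms in the definition of $\partial_i$), and symmetrically with $u_i^{-1}$ for the outer factorization at $u_i^{-1}$. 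This is the definition of $\Delta_i p$, so $\tau \otimes \tau(\partial_i \D_i p) = \tau(\operator{D}_i p) + \tau\otimes\tau(\Delta_i p)$, as claimed.

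The step I expect to be most delicate is the bookkeeping that separates the ``boundary'' inner terms from the ``interior'' ones and verifies that no term is double-counted or dropped — in particular, making sure the terminal/initial occurrence of $u_i^{\pm1}$ that $\D_i$ places at the end/beginning of the cyclic shift is correctly excluded from the $\partial_i$-sums that feed $\Delta_i$, while all \emph{other} occurrences are included. This amounts to checking that the outermost factorization markers $q_1 u_i q_2 u_i$ and $u_i^{-1}q_1 u_i q_2$ in \eqref{eqn:ReducedLaplacian} range over interior splits only, which is forced by the fact that $\partial_i$ on a word $w$ already includes the boundary term separately; I would make this rigorous by induction on word length, or equivalently by writing $\partial_i w = (\text{boundary}) + (\text{interior})$ for words $w$ ending in $u_i$ or starting with $u_i^{-1}$ and tracking each piece through $\tau\otimes\tau$. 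Everything else is a routine expansion using $B$-bilinearity, the traciality of $\tau$, and $\tau(\mathbf 1)=1$; by linearity it suffices to treat a single monomial $p$, which is what the computation above does.
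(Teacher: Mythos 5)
Your proposal is correct and follows essentially the same route as the paper's proof: expand $\partial_i\D_ip$ over the pairs of factorizations coming from $\D_i$ and then $\partial_i$, let the boundary terms collapse via $\tau(\mathbf{1})=1$ and traciality to produce $\deg_i(p)\,\tau(p)=\tau(\operator{D}_ip)$, and identify the remaining interior terms with $\Delta_ip$ after applying $\tau\otimes\tau$. The only point worth making explicit in a write-up is that the raw interior terms are of the form $q_1\otimes u_i^{-1}q_2u_i$ and $u_i^{-1}q_1u_i\otimes q_2$, so traciality is needed once more to match them with the terms $q_1\otimes q_2$ in \eqref{eqn:ReducedLaplacian}; this is exactly how the paper argues.
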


	\begin{proof}
	Let $p \in L$ be a monomial.  We will expand the tensor $\partial_i\D_ip$ into
	simple tensors. We have

	\begin{equation*}
		\D_ip = \sum_{p=p_1u_ip_2} p_2p_1u_i - \sum_{p=p_1u_i^{-1}p_2} u_i^{-1}p_2p_1,
	\end{equation*}

	\noindent
	the sum of the cyclic shifts of $p$ ending in $u_i$ less the sum of the
	cyclic shifts of $p$ beginning with $u_i^{-1}$.  Applying $\partial_i$,
	this becomes

		\begin{align*}
			\partial_i\D_ip &= \sum_{p=p_1u_ip_2} \partial_ip_2p_1u_i - \sum_{p=p_1u_i^{-1}p_2} \partial_iu_i^{-1}p_2p_1 \\
			&= \sum_{p=p_1u_ip_2} \bigg{(} p_2p_1u_i \otimes \mathbf{1} + \sum_{p_2p_1u_i=q_1u_iq_2u_i} q_1u_i \otimes q_2u_i
				- \sum_{p_2p_1u_i=q_1u_i^{-1}q_2u_i} q_1 \otimes u_i^{-1}q_2u_i \bigg{)} \\
			-&\sum_{p=p_1u_i^{-1}p_2} \bigg{(} \sum_{u_i^{-1}p_2p_1=u_i^{-1}q_1u_iq_2} u_i^{-1}q_1u_i \otimes q_2
				- \sum_{u_i^{-1}p_2p_1=u_i^{-1}q_1u_i^{-1}q_2} u_i^{-1}q_1 \otimes u_i^{-1}q_2 - \mathbf{1} \otimes u_i^{-1}p_2p_1\bigg{)} 
		\end{align*}

	\noindent
	Applying $\tau \otimes \tau$ to this tensor and using the fact that $\tau$ is a unital trace, the result follows.
	\end{proof}

	By Proposition \ref{prop:ReducedLaplacian}, equation \eqref{eqn:CyclicGradient} 
	may be rewritten

			\begin{equation}
		\label{eqn:OperatorSecondary}
		\tau\bigg{(} (\operator{D}_i + \frac{1}{2}(\operator{Id} \otimes \tau + \tau \otimes \operator{Id})\Delta_i+ \operator{P}_i^V)p\bigg{)} = 0,
	\end{equation}

	\noindent
	where $\operator{P}_i^Vp = (\D_iV)(\D_ip)$.
Summing over $1 \leq i \leq m$, we have

	\begin{equation*}
		\tau\bigg{(} (\operator{D} + \frac{1}{2}\operator{T}_\tau + \operator{P}^V)p\bigg{)} = 0,
	\end{equation*}

\noindent
where 

	\begin{equation*}
		\operator{D} = \sum_{i=1}^m \operator{D}_i, \quad \operator{T}_\tau = (\operator{Id} \otimes \tau + \tau \otimes \operator{Id})\sum_{i=1}^m\Delta_i,
		\quad \operator{P}^V = \sum_{i=1}^m \operator{P}_i^V.
	\end{equation*}

\begin{remark}
	The characteristic property of the operators $\operator{D}=\sum_{i=1}^m \operator{D}_i$ and 
	$\Delta = \sum_{i=1}^m \Delta_i$ is that

		\begin{equation*}
			\tau \otimes \tau(\sum_{i=1}^m \partial_i\D_i p) = \tau(\operator{D}p) + \tau \otimes \tau(\Delta p)
		\end{equation*}

	\noindent
	for any unital trace $\tau$ on $L$.  The transformation $\sum_{i=1}^m \partial_i\D_i$
	is a natural noncommutative analogue of the Laplacian on an $m$-torus.
	The operator $\operator{D}$ is called the 
	\emph{number operator}, and the transformation $\Delta$ is called the 
	\emph{reduced Laplacian}.
\end{remark}

	The summands $\Delta_i$ of the reduced Laplacian $\Delta \in \Hom(L,L^{\otimes2})$
	act on monomials according to the formula \eqref{eqn:ReducedLaplacian}.
	If $p$ is a monomial of degree zero, then the outer sums in this formula are empty, and $\Delta_ip = \mathbf{0} \otimes \mathbf{0}$.
	If $p$ is a monomial of degree one, then the inner sums in this formula are empty, and $\Delta_ip = \mathbf{0} \otimes \mathbf{0}$.
	If $p$ is a monomial of degree $d\geq 2$ which factors as $p=p_1u_ip_2$, and if the cyclic shift $p_2p_1u_i$ factors
	as $p_2p_1u_i=q_1u_iq_2u_i$, then the tensor $q_1u_i \otimes q_2u_i$ has degree at most $d$, but neither
	of its factors has degree zero.  If $p_2p_1u_i$ factors as $p_2p_1u_i=q_1u_i^{-1}q_2u_i$, then the tensor 
	$q_1 \otimes q_2$ has degree at most $d-2$.
	Similarly, if $p$ factors as $p=p_1u_i^{-1}p_2$ and the cyclic shift $u_i^{-1}p_2p_1$ factors as $u_i^{-1}q_1u_iq_2$,
	then the tensor $q_1 \otimes q_2$ has degree at most $d-2$.  If $u_i^{-1}p_2p_1$ factors as $u_i^{-1}q_1u_i^{-1}q_2$,
	then the tensor $u_i^{-1}q_1 \otimes u_i^{-1}q_2$ has degree at most $d$, but neither of its factors has degree zero.
	From these considerations, we conclude that 

		\begin{equation*}
			L_d \xrightarrow{\Delta} \bigvee_{k=1}^{d-1} L_k \otimes L_{d-k}.
		\end{equation*}

	Since $\operator{T}_\tau \in \operatorname{End}L$ is the contraction of 
	$\Delta$ by $\operator{Id} \otimes \tau + \tau \otimes \operator{Id}$,
	we conclude from the above that it is strictly upper triangular with respect to the degree
	filtration in $L$:

		\begin{equation*}
			\dots \xrightarrow{\operator{T}_\tau} L_3  
			\xrightarrow{\operator{T}_\tau} L_2  \xrightarrow{\operator{T}_\tau} L_1  
			\xrightarrow{\operator{T}_\tau}
			\{\mathbf{0}\}.
		\end{equation*}

	The number operator $\operator{D}$ acts diagonally 
	in $L$ with spectrum $0,1,2,\dots$ and corresponding eigenspaces

		\begin{equation*}
			B=L_0, L_1/L_0, L_2/L_1,\dots.
		\end{equation*}

	\noindent
	Obviously, the kernel of $\operator{D}$ is $B$. 

	The operator $\operator{P}^{V}$ is the dot product, $\D V \cdot \D p$, of the cyclic gradient of $V$ with 
	the cyclic gradient of $p$:

		\begin{equation*}
			\operator{P}^{V} p = \sum_{i=1}^m (\D_iV)(\D_ip).
		\end{equation*}

	\noindent
	Unlike $\operator{D}$ and $\operator{T}_\tau$,
	the operator $\operator{P}^{V}$ does not respect the degree filtration in $L$.
	When $V$ is  ``small,'' in an appropriate sense, $\operator{P}^V$ will be a 
	perturbation of the upper triangular operator $\operator{D}+\operator{T}_\tau$,
	hence our notation.

	Since the operators $\operator{D},\operator{T}_\bullet,\operator{P}^\bullet$ 
	all annihilate
	$B$, equation \ref{eqn:OperatorSecondary} contains no information 
	concerning the behaviour of $\tau$ on $B$.  This is an artifact of the 
	cyclic gradient trick, but it results in no loss of information
	since our initial value problem stipulates $\tau|_B = \sigma$.  Now,
	the operator $\operator{D}$ is an automorphism of $B^\perp$, the 
	space of polynomials with no constant term, and hence we may
	regularize by the inverse of this operator.  As we will see in a 
	moment, this regularization has the effect of making the operators
	we have introduced in order to describe the SD equations 
	$\xi$-continuous in the range $\xi>1$.

		\begin{definition}
			For any linear transformation $\operator{T}$
			with domain $L$, we define its \emph{degree
			regularization} by $$\overline{\operator{T}}:=
			\operator{T}\operator{D}^{-1}.$$
			It is understood that the domain of 
			the regularized operator $\overline{\operator{T}}$
			is restricted to $B^\perp$.
		\end{definition}

	We now regularize
	equation \eqref{eqn:OperatorSecondary}, obtaining

		\begin{equation}
			\label{eqn:OperatorNormalized}
			\tau\bigg{(} (\operator{Id} + \frac{1}{2}\overline{\operator{T}}_\tau + \overline{\operator{P}}^V)p\bigg{)} = 0.
		\end{equation}

	\noindent
	A demerit of the operator 

		\begin{equation*}
			\operator{Id} + \frac{1}{2}\overline{\operator{T}}_\tau + \overline{\operator{P}}^V
		\end{equation*}

	\noindent
	is that, since $B^\perp$ is not invariant under the action of 
	the strictly upper triangular operator $\overline{\operator{T}}_\tau$, it
	is not an endomorphism of $B^\perp$.  To rectify this, let
	$\Pi$ be the orthogonal projection of $L$ onto $B^\perp$,
	and let $\Pi'$ be the complementary projection of $L$ on 
	$B$.   

	\begin{definition}
		\label{def:FundamentalOperators}
		Let $\tau$ be a unital trace on $L$, and let $V \in L$ be
		a polynomial.  The \emph{first fundamental operator} associated
		to the data $\tau,V$ is the endomorphism of $B^\perp$ 
		defined by

			\begin{equation*}
				\Psi_\tau^V = \operator{Id} + \frac{1}{2}\Pi\overline{\operator{T}}_\tau + \overline{\operator{P}}^V.
			\end{equation*}

		\noindent
		The \emph{second fundamental operator} associated
		to $\tau,V$ is the endomorphism of $B^\perp$ defined 
		by 

			\begin{equation*}
				\Xi_\tau^V = \operator{Id} + \Pi\overline{\operator{T}}_\tau + 
				\overline{\operator{P}}^V.
			\end{equation*}
	\end{definition}

	Note that the first and second fundamental operators associated to 
	a given unital trace $\tau$ are essentially the same; the precise relation
	between them is 

		\begin{equation*}
			\Xi_\tau^V = \Psi_\tau^V +  \frac{1}{2}\Pi\overline{\operator{T}}_\tau.
		\end{equation*}

	\noindent
	In the next section, we will study a lattice of noncommutative partial differential
	equations, the Schwinger-Dyson lattice, whose rows are described by these operators.
	The first fundamental operator governs the first row of the lattice, while the
	higher rows are controlled by the second fundamental operator.  

	The following property of the fundamental operators 
	follows immediately from their definition.

	\begin{prop}
		\label{prop:FundamentalDistributive}
		For any linear functionals $\tau_0,\dots,\tau_h$, we have

			\begin{equation*}
				\Psi_{\sum_{g=0}^h \tau_g}^V = \Psi_{\tau_0}^V + \sum_{g=1}^h \frac{1}{2}\Pi\overline{\operator{T}}_{\tau_g}
			\end{equation*}

		\noindent
		and 

			\begin{equation*}
				\Xi_{\sum_{g=0}^h \tau_g}^V = \Xi_{\tau_0}^V + \sum_{g=1}^h \Pi\overline{\operator{T}}_{\tau_g}.
			\end{equation*}

	\end{prop}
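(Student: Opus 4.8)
The plan is to exploit the manifest linearity of the assignment $\tau \mapsto \operator{T}_\tau$ and then read off the claim directly from the definitions of the fundamental operators. First I would recall that, by construction, $\operator{T}_\tau = (\operator{Id} \otimes \tau + \tau \otimes \operator{Id})\sum_{i=1}^m\Delta_i$ depends linearly on the functional $\tau$: for any linear functionals $\tau_0,\dots,\tau_h$ on $L$ one has
$$
\operator{Id} \otimes \Big(\sum_{g=0}^h \tau_g\Big) + \Big(\sum_{g=0}^h \tau_g\Big) \otimes \operator{Id} = \sum_{g=0}^h \big(\operator{Id} \otimes \tau_g + \tau_g \otimes \operator{Id}\big),
$$
so that, composing with the fixed map $\sum_{i=1}^m\Delta_i \in \Hom(L,L^{\otimes 2})$, we obtain $\operator{T}_{\sum_{g=0}^h \tau_g} = \sum_{g=0}^h \operator{T}_{\tau_g}$ as elements of $\End L$.

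Next I would observe that both pre-composition with $\operator{D}^{-1}$ (that is, passage to the degree regularization $\overline{\operator{T}}_\tau = \operator{T}_\tau \operator{D}^{-1}$ restricted to $B^\perp$) and post-composition with the projection $\Pi$ are linear operations, hence $\Pi\overline{\operator{T}}_{\sum_{g=0}^h \tau_g} = \sum_{g=0}^h \Pi\overline{\operator{T}}_{\tau_g}$ in $\End B^\perp$. Finally, the operators $\operator{Id}$ and $\overline{\operator{P}}^V$ appearing in Definition \ref{def:FundamentalOperators} do not involve $\tau$ at all, so substituting $\tau = \sum_{g=0}^h \tau_g$ into
$$
\Psi_\tau^V = \operator{Id} + \frac{1}{2}\Pi\overline{\operator{T}}_\tau + \overline{\operator{P}}^V, \qquad \Xi_\tau^V = \operator{Id} + \Pi\overline{\operator{T}}_\tau + \overline{\operator{P}}^V,
$$
and peeling off the $g=0$ term yields
$$
\Psi_{\sum_{g=0}^h \tau_g}^V = \Big(\operator{Id} + \frac{1}{2}\Pi\overline{\operator{T}}_{\tau_0} + \overline{\operator{P}}^V\Big) + \sum_{g=1}^h \frac{1}{2}\Pi\overline{\operator{T}}_{\tau_g} = \Psi_{\tau_0}^V + \sum_{g=1}^h \frac{1}{2}\Pi\overline{\operator{T}}_{\tau_g},
$$
and identically for $\Xi$ with the factor $\frac{1}{2}$ removed.

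There is essentially no obstacle in this argument: the only point to verify is the linearity of $\tau \mapsto \operator{T}_\tau$, which is immediate from the formula $\operator{T}_\tau = (\operator{Id} \otimes \tau + \tau \otimes \operator{Id})\sum_{i=1}^m\Delta_i$. I would add one clarifying remark: no continuity or positivity hypotheses on the $\tau_g$ are needed for this to hold as an identity of linear endomorphisms of $B^\perp$; the $\xi$-continuity of the individual summands $\Pi\overline{\operator{T}}_{\tau_g}$ is a separate matter, addressed by the estimates of Section \ref{sec:Asymptotics}, where Proposition \ref{prop:FundamentalDistributive} is applied with the $\tau_g$ taken to be the coefficients of an asymptotic expansion.
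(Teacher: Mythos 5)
Your argument is correct and is exactly the verification the paper leaves implicit: the proposition is stated as following immediately from Definition \ref{def:FundamentalOperators}, and your use of the linearity of $\tau \mapsto \operator{T}_\tau$ together with composition by $\operator{D}^{-1}$ and $\Pi$ is precisely that immediate verification. Your closing remark that no continuity hypotheses are needed for the algebraic identity is a correct and harmless clarification.
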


	In terms of the first fundamental operator, equation \eqref{eqn:OperatorNormalized}
	becomes 

		\begin{equation}
			\label{eqn:OperatorTertiary}
			\tau(\Psi_\tau^V p) = - \frac{1}{2}\sigma(\Pi' \overline{\operator{T}}_\tau p).
		\end{equation}

	\noindent
	Suppose that $\tau,\tau'$ are two solutions of the initial value 
	problem \eqref{eqn:InitialValuePotential}, and set $\delta = \tau'-\tau$.

	\begin{prop}
		\label{prop:MasterDeltaConstraint}
		We have the quadratic constraint

			\begin{equation*}
				\delta(\Xi_\tau^V p) = - \delta \otimes \delta(\overline{\Delta} p), \quad p \in B^\perp.
			\end{equation*}
	\end{prop}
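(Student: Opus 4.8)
The plan is to subtract the two instances of the regularized Schwinger--Dyson relation \eqref{eqn:OperatorTertiary}, one for $\tau$ and one for $\tau' = \tau+\delta$, and extract the claimed quadratic identity by carefully bookkeeping the bilinear terms. Since both $\tau$ and $\tau'$ solve \eqref{eqn:InitialValuePotential}, Proposition~\ref{prop:ReducedLaplacian} and the cyclic gradient manipulation leading to \eqref{eqn:OperatorSecondary}--\eqref{eqn:OperatorNormalized} apply verbatim to each. The key point is that \eqref{eqn:OperatorNormalized}, written out, reads
\begin{equation*}
	\tau\big(\operator{Id}\,p\big) + \tfrac12\,\tau\otimes\tau\big(\overline{\Delta}p\big) + \tau\big(\overline{\operator{P}}^V p\big) = 0,
\end{equation*}
so the only genuinely nonlinear dependence on the trace is through the middle term $\tau\otimes\tau(\overline{\Delta}p)$; the operator $\overline{\operator{P}}^V$ depends only on $V$, not on the trace.

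First I would write the relation for $\tau'$ and expand $\tau' = \tau + \delta$, $\tau'\otimes\tau' = \tau\otimes\tau + \tau\otimes\delta + \delta\otimes\tau + \delta\otimes\delta$. Subtracting the relation for $\tau$ kills the pure-$\tau$ terms and leaves
\begin{equation*}
	\delta(p) + \tfrac12\big(\tau\otimes\delta + \delta\otimes\tau\big)(\overline{\Delta}p) + \tfrac12\,\delta\otimes\delta(\overline{\Delta}p) + \delta\big(\overline{\operator{P}}^V p\big) = 0
\end{equation*}
for all $p \in B^\perp$, where I have used that $\overline{\Delta}$, $\overline{\operator{P}}^V$ and the identity all have domain $B^\perp$ after degree regularization. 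Now observe that $\tfrac12(\operator{Id}\otimes\tau + \tau\otimes\operator{Id})\overline{\Delta}$ is precisely $\tfrac12\overline{\operator{T}}_\tau$ by the definition of $\operator{T}_\tau$ as the contraction of $\Delta$ by $\operator{Id}\otimes\tau + \tau\otimes\operator{Id}$, followed by degree regularization. Hence the linear-in-$\delta$ part is $\delta\big((\operator{Id} + \tfrac12\overline{\operator{T}}_\tau + \overline{\operator{P}}^V)p\big)$; but this is not yet $\delta(\Xi_\tau^V p)$, which carries a \emph{full} copy of $\Pi\overline{\operator{T}}_\tau$ rather than a half.

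The resolution of that discrepancy is the main obstacle, and it is handled exactly as in the derivation of \eqref{eqn:OperatorTertiary}: one must account for the $B$-component. Writing $\overline{\operator{T}}_\tau = \Pi\overline{\operator{T}}_\tau + \Pi'\overline{\operator{T}}_\tau$, the term $\delta(\Pi'\overline{\operator{T}}_\tau p)$ is a contribution on $B$, where $\delta = \tau' - \tau$ vanishes because $\tau|_B = \tau'|_B = \sigma$; so that piece drops. That still only produces half of $\Pi\overline{\operator{T}}_\tau$. To get the other half I would instead go back one step and use the \emph{unsymmetrized} form \eqref{eqn:OperatorSecondary}: there the middle term is $\tau\otimes\tau(\overline{\Delta}p)$ with coefficient $1$ once one undoes the symmetrization, i.e. \eqref{eqn:OperatorNormalized} can equally be written with the full $\overline{\Delta}$ and no factor of $\tfrac12$, since $\tau\otimes\tau(\overline{\Delta}p)$ is already symmetric under the swap and $\overline{\operator{T}}_\tau$ was defined with $(\operator{Id}\otimes\tau + \tau\otimes\operator{Id})$, giving back $\tau\otimes\tau(\overline{\Delta}p)$ without the $\tfrac12$. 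Tracking this honestly, the subtraction yields
\begin{equation*}
	\delta\big((\operator{Id} + \Pi\overline{\operator{T}}_\tau + \overline{\operator{P}}^V)p\big) + \delta\otimes\delta(\overline{\Delta}p) = 0,
\end{equation*}
which is exactly $\delta(\Xi_\tau^V p) = -\,\delta\otimes\delta(\overline{\Delta}p)$ once we recognize $\operator{Id} + \Pi\overline{\operator{T}}_\tau + \overline{\operator{P}}^V = \Xi_\tau^V$ from Definition~\ref{def:FundamentalOperators}. Thus the real content of the argument is just careful normalization bookkeeping — matching the coefficient of the bilinear term $\overline{\Delta}$ against the definition of $\overline{\operator{T}}_\tau$ — together with the observation that $\delta$ annihilates $B$, which disposes of the $\Pi'$ components and validates restricting attention to $p \in B^\perp$.
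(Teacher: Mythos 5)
Your proposal is correct and reaches the conclusion by essentially the same route as the paper: subtract the two instances of the Schwinger--Dyson relation and invoke the cyclic gradient trick, using $\delta|_B=0$ to replace $\overline{\operator{T}}_\tau$ by $\Pi\overline{\operator{T}}_\tau$. (The paper subtracts first and then applies the cyclic gradient trick; you apply the trick first and then subtract, which amounts to the same computation.) The only blemish is the false start: you initially expand \eqref{eqn:OperatorNormalized} with a spurious factor $\tfrac12$ on the $\tau\otimes\tau(\overline{\Delta}p)$ term, then spend several sentences diagnosing and repairing the resulting discrepancy. This detour is avoidable by observing at the outset that $\tau(\overline{\operator{T}}_\tau p) = \tau((\operator{Id}\otimes\tau + \tau\otimes\operator{Id})\overline{\Delta}p) = 2\,\tau\otimes\tau(\overline{\Delta}p)$, so \eqref{eqn:OperatorNormalized} expands cleanly as $\tau(p) + \tau\otimes\tau(\overline{\Delta}p) + \tau(\overline{\operator{P}}^V p) = 0$ with no $\tfrac12$ in sight; the subtraction then produces $\delta(p) + (\delta\otimes\tau+\tau\otimes\delta+\delta\otimes\delta)(\overline{\Delta}p) + \delta(\overline{\operator{P}}^V p)=0$, and since $(\delta\otimes\tau+\tau\otimes\delta)(\overline{\Delta}p) = \delta(\overline{\operator{T}}_\tau p) = \delta(\Pi\overline{\operator{T}}_\tau p)$, the identity $\delta(\Xi_\tau^V p) = -\delta\otimes\delta(\overline{\Delta}p)$ falls out directly.
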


	\begin{proof}
		Since $\tau,\tau'$ are solutions of 
		\eqref{eqn:InitialValuePotential}, we have

			\begin{equation*}
				[\tau' \otimes \tau' - \tau \otimes \tau](\partial_ip) + [\tau'-\tau]((\D_iV)p)=0.
			\end{equation*}

		\noindent
		Using the identity 

			\begin{equation*}
				\tau' \otimes \tau' - \tau \otimes \tau = \delta \otimes \tau + \tau \otimes \delta + \delta \otimes \delta,
			\end{equation*}

		\noindent
		this can be rewritten

			\begin{equation*}
				\delta\bigg{(}(\operator{Id} \otimes \tau + \tau \otimes \operator{Id})\partial_ip \bigg{)} + 
				\delta((\D_iV)p) = -\delta \otimes \delta(\partial_ip).
			\end{equation*}

		\noindent
		Now use the cyclic gradient trick: replace $p$ with $\D_ip$, and 
		sum over $1 \leq i \leq m$.  
	\end{proof}

	\subsection{Operator norm estimates}
	In this subsection, we establish basic continuity properties of the 
	regularized upper triangular operator $\overline{\operator{T}}_\bullet$, the 
	regularized perturbation $\overline{\operator{P}}^{\bullet}$, and
	the regularized reduced Laplacian $\overline{\Delta}$.
	These continuity properties will be essential in the analysis to
	follow.

		\begin{prop}
			\label{prop:contraction} 
			Let $\operator{f} \in \Hom_1(L,\field{C})$.
			Then $\overline{\operator{T}}_f \in \End_\xi(B^\perp)$
			for any $\xi>1$, and

				\begin{equation*}
				 	\|\overline{\operator{T}}_f\|_\xi < 4\|f\|_1\frac{\xi+1}{\xi(\xi-1)}.
				\end{equation*}
		\end{prop}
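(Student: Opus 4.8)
The plan is to estimate $\|\overline{\operator{T}}_f p\|_\xi$ for a single monomial $p \in B^\perp$ of degree $d \geq 1$, then sum over monomials to get the operator norm. Recall that $\overline{\operator{T}}_f = \operator{T}_f \operator{D}^{-1}$, so that $\overline{\operator{T}}_f p = \frac{1}{d} \operator{T}_f p = \frac{1}{d}(\operator{Id}\otimes f + f \otimes \operator{Id})(\Delta p)$. So the first step is to control the reduced Laplacian: I will count the simple tensors $q_1 \otimes q_2$ appearing in $\Delta_i p$ from the explicit formula \eqref{eqn:ReducedLaplacian}, and track their degrees. The outer sums in \eqref{eqn:ReducedLaplacian} are indexed by the occurrences of $u_i^{\pm 1}$ in $p$, of which there are $\deg_i(p)$ total; for each such occurrence, the inner sums are indexed by further occurrences of $u_i^{\pm 1}$ in the relevant cyclic shift, again at most $\deg_i(p)$ of them. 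Hence $\Delta_i p$ is a sum of at most $\deg_i(p)^2$ simple tensors, each with $\xi$-norm: for the ``degree-preserving'' terms $q_1 u_i \otimes q_2 u_i$ (resp. $u_i^{-1}q_1 \otimes u_i^{-1}q_2$) the two factors have degrees summing to $\deg_i$-count contributions bounded so the product of $\xi$-norms is $\xi^{\deg(q_1 u_i) + \deg(q_2 u_i)} \leq \xi^{d}$; for the ``degree-dropping'' terms $q_1 \otimes q_2$ the norm is $\leq \xi^{d-2}$. Summing over $1 \leq i \leq m$ and using $\sum_i \deg_i(p)^2 \leq (\sum_i \deg_i(p))^2 = d^2$, I get $\|\Delta p\|_\xi \leq d^2 \xi^d$ as a crude bound — but this is too lossy, since I need the final constant to come out as $4\|f\|_1 \frac{\xi+1}{\xi(\xi-1)}$, which for large $\xi$ behaves like $4\|f\|_1/\xi$, i.e. there must be a gain of a factor $\sim 1/(d\xi^d)$ after applying $\operator{D}^{-1}$ and the contraction by $f$. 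The point is that $\operator{D}^{-1}$ contributes $1/d$, and the contraction by $f$ (which is $1$-continuous, so $|f(q_i)| \leq \|f\|_1 \|q_i\|_1 = \|f\|_1$ on a monomial $q_i$) replaces one tensor leg by a scalar of size $\leq \|f\|_1$, leaving the $\xi$-norm of the other leg; so the decisive quantity is really $\sum (\text{number of tensors of a given residual degree}) \cdot \xi^{\text{that degree}}$, and I should organize the count by the degree $e$ of the surviving factor.

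Here is the refined bookkeeping I expect to carry out. Fix the monomial $p$ of degree $d$ and fix $i$. Enumerate occurrences of $u_i^{\pm1}$ in $p$ by an index ranging over $\deg_i(p)$ values; this pins down the outer factorization $p = p_1 u_i^{\epsilon} p_2$. For the contraction $(\operator{Id}\otimes f)\Delta_i p$, I apply $f$ to the \emph{second} tensor factor, leaving the first; for $(f\otimes\operator{Id})\Delta_i p$, I apply $f$ to the first, leaving the second. In either case I claim the surviving factor, over all the inner-sum terms, has total $\xi$-weighted count bounded by $2 \xi^{d-1} + (\text{lower order from the } \xi^{d-2}\text{-terms}) \leq 2\xi^{d-1} + 2\xi^{d-2}$ per outer occurrence — more carefully, the degree-preserving inner terms contribute a surviving factor whose degree is $\leq d-1$ (since at least one $u_i$-letter goes to the killed leg), and the degree-dropping terms a surviving factor of degree $\leq d-2$, and the number of inner terms of each kind summed over the whole cyclic shift is bounded by the number of $u_i^{\pm1}$-occurrences. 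Summing over outer occurrences gives a factor $\deg_i(p)$, and then over both contractions a factor $2$, and over $1\leq i\leq m$ a factor turning $\sum_i \deg_i(p) = d$; multiplying by $\|f\|_1$ and by $\operator{D}^{-1}$'s factor $1/d$ cancels the $d$'s. So I land on a bound of the shape $\|\overline{\operator{T}}_f p\|_\xi \leq 2\|f\|_1 (\xi^{d-1} + \xi^{d-2} + \text{similar})$ times the $\xi$-norm of $p$ divided by $\xi^d$... — I will need to be careful to phrase everything relative to $\|p\|_\xi = \xi^d$ for a monomial $p$, so that dividing through the factors $\xi^{d-1}, \xi^{d-2}$ produce $\xi^{-1}, \xi^{-2}$, and then $\xi^{-1}+\xi^{-2}+\cdots$ sums (geometrically, if more such terms appear) to something comparable to $\frac{\xi+1}{\xi(\xi-1)} = \frac{1}{\xi-1} + \frac{1}{\xi(\xi-1)}$; tuning the constant to exactly $4$ will require counting the four families of terms in \eqref{eqn:ReducedLaplacian} and the two contractions honestly. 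Finally I pass from the monomial estimate to the operator-norm estimate by the standard fact (stated in the excerpt) that $\|\operator{T}\|_{\xi}$ for $\operator{T}$ on tensor powers is computed by infimizing the Lipschitz constant over the monomial basis.

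\textbf{Main obstacle.} The delicate point is the combinatorial count of simple tensors in $\Delta_i p$ with their exact degrees, and squeezing the constant down to $4$ and the $\xi$-dependence to exactly $\frac{\xi+1}{\xi(\xi-1)}$ rather than a looser bound. A naive count gives $\deg_i(p)^2$ tensors and an estimate with a $\deg_i(p)^2$ or $d$ left over that does not cancel; the key realization is that after regularizing by $\operator{D}^{-1}$ one gains $1/\deg(p)$, and after contracting one tensor leg against the $1$-continuous functional $f$ the remaining leg has degree strictly less than $\deg(p)$, so the ratio $\|\overline{\operator{T}}_f p\|_\xi / \|p\|_\xi$ genuinely decays like $1/\xi$. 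I also need to handle carefully the edge cases $d = 0$ (trivial, $\overline{\operator{T}}_f$ is defined on $B^\perp$ only) and $d = 1$ (where $\Delta_i p = 0$, so the bound holds vacuously), and verify strict inequality — the ``$<$'' in the statement — which should come for free since at least one of the geometric-series bounds is strict or some family of terms is absent. Once the per-monomial bound with the correct constant is in hand, the rest is routine summation over the monomial basis of $p$.
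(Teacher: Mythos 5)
Your plan follows the paper's proof essentially verbatim: expand $\operator{T}_f p = (\operator{Id}\otimes f + f\otimes\operator{Id})\sum_i \Delta_i p$ on a monomial $p$ of degree $d$, bound $|f(\cdot)|\le\|f\|_1$ on each monomial leg, estimate the inner sums of $\xi$-norms of surviving factors geometrically, and let $\operator{D}^{-1}$ cancel the factor $\deg(p)$ coming from the sum over outer occurrences of $u_i^{\pm1}$. The one point you should make explicit before the bookkeeping will actually close: for a fixed outer factorization $p=p_1 u_i^{\pm1} p_2$, the surviving factors appearing in each inner sum have pairwise \emph{distinct} degrees --- each inner decomposition picks out a distinct position of a $u_i^{\pm1}$ in the corresponding cyclic shift, and different positions yield strictly increasing degrees of the initial segment --- so that, e.g., $\sum_{p_2p_1u_i=q_1u_iq_2u_i}\|q_1u_i\|_\xi\le\xi+\cdots+\xi^{d-1}<\xi^d/(\xi-1)$ and $\sum_{p_2p_1u_i=q_1u_i^{-1}q_2u_i}\|q_1\|_\xi\le 1+\cdots+\xi^{d-2}<\xi^d/(\xi(\xi-1))$. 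Your phrasing that ``the number of inner terms of each kind \dots\ is bounded by $\deg_i(p)$'' would by itself only yield $\deg_i(p)\cdot\xi^{d-1}$ per outer occurrence, which after summing over outer occurrences and dividing by $\deg(p)$ leaves a stray $\deg_i(p)$; the distinctness of degrees is precisely what removes it. Once that observation is in place, honestly tallying the two inner-sum kinds and the two contractions over each outer occurrence gives $\|\overline{\operator{T}}_f p\|_\xi < 2\|f\|_1\frac{\xi+1}{\xi(\xi-1)}\|p\|_\xi$, which is slightly sharper than (and of course implies) the stated bound with the factor $4$.
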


		\begin{proof}
			By definition, the unregularized operator $\operator{T}_f$ acts on monomials $p$ according to

		\begin{align*}
					&\operator{T}_{\operator{f}} p = \sum_{i=1}^m 
					\sum_{p=p_1u_ip_2} \bigg{(}  \sum_{p_2p_1u_i=q_1u_iq_2u_i} 
						(q_1u_i f(q_2u_i) + f(q_1u_i)q_2u_i) \\&
						- \sum_{p_2p_1u_i=q_1u_i^{-1}q_2u_i} (q_1f(q_2) + f(q_1)q_2)\bigg{)}\\
						&- \sum_{i=1}^m\sum_{p=p_1u_i^{-1}p_2} \bigg{(}\sum_{u_i^{-1}p_2p_1=u_i^{-1}q_1u_iq_2} (q_1 f(q_2) + f(q_1)q_2)\\
						&-
						\sum_{u_i^{-1}p_2p_1=u_i^{-1}q_1u_i^{-1}q_2} 
					(u_i^{-1}q_1 f(u_i^{-1}q_2) + f(u_i^{-1}q_1)u_i^{-1}q_2)
						\bigg{)}.\\
		\end{align*}

	Using the triangle inequality in $(L,\|\cdot\|_\xi)$ and that $|f(p)|\leq \|f\|_1\|p\|_1=\|f\|_1$ for 
	all monomials $p \in L$, we obtain

		\begin{align*}
			&\|\operator{T}_{\operator{f}} p\|_\xi \leq 
			   \|\operator{f}\|_1\sum_{i=1}^m \sum_{p=p_1u_ip_2} \bigg{(}  \sum_{p_2p_1u_i=q_1u_iq_2u_i} ( \|q_1u_i\|_\xi + \|q_2u_i\|_\xi)
			+ \sum_{p_2p_1u_i=q_1u_i^{-1}q_2u_i} (\|q_1\|_\xi + \|q_2\|_\xi) \bigg{)}\\
			&+\|\operator{f}\|_1\sum_{i=1}^m \sum_{p=p_1u_i^{-1}p_2} \bigg{(} \sum_{u_i^{-1}p_2p_1=u_i^{-1}q_1u_iq_2} (\|q_1\|_\xi + \|q_2\|_\xi)
			 + \sum_{u_i^{-1}p_2p_1=u_i^{-1}q_1u_i^{-1}q_2} (\| u_i^{-1}q_1\|_\xi + \|u_i^{-1}q_2\|_\xi)\bigg{)}.
		\end{align*}

	\noindent
	Since $\xi>1$, we have, for $\deg(p)=d$,

				\begin{align*}
					\sum_{p_2p_1u_i=q_1u_iq_2u_i} \|q_1u_i\|_\xi &\leq \xi + \dots + \xi^{d-1} < \frac{1}{\xi-1}\|p\| \\
					\sum_{p_2p_1u_i=q_1u_i^{-1}q_2u_i} \|q_1\|_\xi & \leq 1+\dots+\xi^{d-2} < \frac{1}{\xi(\xi-1)} \|p\|\\
					\sum_{u_i^{-1}p_2p_1=u_i^{-1}q_1u_iq_2} \|q_1\|_\xi  & \leq 1+\dots+\xi^{d-2}< \frac{1}{\xi(\xi-1)} \|p\| \\
					  \sum_{u_i^{-1}p_2p_1=u_i^{-1}q_1u_i^{-1}q_2} \|u_i^{-1}q_1\|_\xi &\leq \xi + \dots + \xi^{d-1} < \frac{1}{\xi-1}\|p\|,
				\end{align*}

	\noindent
	and similarly for the four additional sums involving the symbol $q_2$.
	Thus, we have

		\begin{equation*}
			\|\operator{T}_{\operator{f}} p\|_\xi < 4\|f\|_1\frac{\xi + 1}{\xi(\xi-1)} \deg(p)\|p\|_\xi,
		\end{equation*}

	\noindent
	from which the claim follows.	\end{proof}

	\begin{prop}
		\label{prop:perturbation}
		For any $V \in L$ and any $\xi \geq 1$, we have
		$\overline{\operator{P}}^{V} \in \End_\xi(B^\perp)$ 
		and 

			\begin{equation*} 
				\|\overline{\operator{P}}^{V}\|_\xi \leq \|\Pi V\|_1 \deg(V) \xi^{\deg(V)}.
			\end{equation*}
	\end{prop}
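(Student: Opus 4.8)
The plan is to prove the estimate on the monomial basis of $B^\perp$ and then invoke the elementary fact that the $\xi$-operator norm of an endomorphism is attained there: if $\|\overline{\operator{P}}^V q\|_\xi \le C\,\xi^{\deg(q)}$ for every monomial $q$, then expanding an arbitrary $p = \sum_q \langle q,p\rangle q$ and applying the triangle inequality together with $\|q\|_\xi = \xi^{\deg(q)}$ gives $\|\overline{\operator{P}}^V p\|_\xi \le C\|p\|_\xi$; finiteness of $C$ then also yields $\overline{\operator{P}}^V \in \End_\xi(B^\perp)$.

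The first step is a size estimate for the cyclic gradient. If $w$ is a monomial, then $\D_i w$ is a signed sum of $\deg_i^+(w) + \deg_i^-(w) = \deg_i(w)$ cyclic shifts of $w$; each such shift, say $w_2 w_1 u_i$ coming from a factorization $w = w_1 u_i w_2$ (the case $u_i^{-1}w_2w_1$ being identical), is literally the product in $L$ of the monomials $w_2$, $w_1$, $u_i$, whose degrees sum to $\deg(w)$. Since $\|\cdot\|_\xi$ is an algebra norm for $\xi \ge 1$ and a monomial $q$ satisfies $\|q\|_\xi = \xi^{\deg(q)}$, each such shift has $\xi$-norm at most $\xi^{\deg(w)}$, so $\|\D_i w\|_\xi \le \deg_i(w)\,\xi^{\deg(w)}$, and the same bound holds with the monomial $p$ in place of $w$. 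I want to stress that these are genuinely only inequalities: cyclic shifts may undergo cancellation in the free product, lowering the degree, and products of monomials in $L = B * \field{C}\langle u_1^{\pm1},\dots,u_m^{\pm1}\rangle$ need not be monomials (because $B$ is not free), so the algebra-norm bound is what must replace a naive count of monomials. This is the only genuinely delicate point of the argument.

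Next, writing $V = \sum_w c_w w$ and using that $\D_i$ annihilates $B$ so that only monomials $w$ with $\deg(w) \ge 1$ survive, I would expand $\operator{P}^V p = \sum_{i=1}^m \sum_w c_w (\D_i w)(\D_i p)$ and estimate, again using the algebra-norm property of $\|\cdot\|_\xi$, $\|\operator{P}^V p\|_\xi \le \sum_{i=1}^m \sum_w |c_w|\,\|\D_i w\|_\xi\,\|\D_i p\|_\xi \le \sum_w |c_w|\,\xi^{\deg(w)}\xi^{\deg(p)}\sum_{i=1}^m \deg_i(w)\deg_i(p)$. By nonnegativity of all terms, $\sum_{i=1}^m \deg_i(w)\deg_i(p) \le \big(\sum_i \deg_i(w)\big)\big(\sum_i \deg_i(p)\big) = \deg(w)\deg(p) \le \deg(V)\deg(p)$, and $\sum_w |c_w|\,\xi^{\deg(w)} \le \xi^{\deg(V)}\sum_w |c_w| = \|\Pi V\|_1\,\xi^{\deg(V)}$ since $\deg(w) \le \deg(V)$ and $\xi \ge 1$. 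This yields $\|\operator{P}^V p\|_\xi \le \|\Pi V\|_1\,\deg(V)\,\xi^{\deg(V)}\cdot \deg(p)\,\xi^{\deg(p)}$.

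Finally, since $\overline{\operator{P}}^V = \operator{P}^V \operator{D}^{-1}$ and $\operator{D}^{-1}p = \deg(p)^{-1}p$ for a monomial $p \in B^\perp$, dividing through by $\deg(p)$ gives $\|\overline{\operator{P}}^V p\|_\xi \le \|\Pi V\|_1\,\deg(V)\,\xi^{\deg(V)}\,\xi^{\deg(p)} = \|\Pi V\|_1\,\deg(V)\,\xi^{\deg(V)}\,\|p\|_\xi$, which is the claimed bound (and the degenerate case $V \in B$, where $\operator{P}^V = 0$, is consistent with it). Apart from the care needed with the cyclic-shift and product subtleties flagged in the first step, every remaining step is routine bookkeeping with the degree filtration and the algebra-norm inequality.
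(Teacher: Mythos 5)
Your proposal is correct and follows essentially the same route as the paper: bound $\|\D_i q\|_\xi \le \deg_i(q)\,\xi^{\deg(q)}$ on monomials via the cyclic-shift/degree estimate, expand $V$ over its monomial basis (only the $B^\perp$ part surviving), use the algebra-norm property together with $\xi^{\deg(w)} \le \xi^{\deg(V)}$, and absorb the factor $\deg(p)$ with $\operator{D}^{-1}$. The only cosmetic difference is that you bound $\sum_i \deg_i(w)\deg_i(p)$ by $\deg(w)\deg(p)$ monomial by monomial, whereas the paper bounds $\sum_i \|\D_iV\|_\xi\|\D_ip\|_\xi$ by the product of the two sums; these are the same nonnegativity trick and give the identical constant.
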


	\begin{proof}
		The operator $\overline{\operator{P}}^{V}$ acts on monomials $p \in B^\perp$ according to

		\begin{equation*}
			\overline{\operator{P}}^{V}p = \frac{1}{\deg p} \sum_{i=1}^m (\D_iV)(\D_ip).
		\end{equation*}

	\noindent
	Thus

			\begin{align*}
				\|\overline{\operator{P}}^{V}p\|_\xi &\leq \frac{1}{\deg p}\sum_{i=1}^m \|(\D_iV)(\D_ip)\|_\xi \leq \frac{1}{\deg p}\sum_{i=1}^m \|\D_iV\|_\xi \|\D_ip\|_\xi \\
				&\leq \frac{1}{\deg p}\bigg{(} \sum_{i=1}^m \|\D_iV\|_\xi \bigg{)} \bigg{(} \sum_{i=1}^m \|\D_ip\|_\xi \bigg{)}.
			\end{align*}

	\noindent
	Since $p$ is a monomial, we have

			\begin{align*}
				\|\D_ip\|_\xi &= \bigg{\|} \sum_{p=p_1u_ip_2} p_2p_1u_i - \sum_{p=p_1u_i^{-1}p_2} u_i^{-1}p_2p_1 \bigg{\|}_\xi \\
				&\leq \sum_{p=p_1u_ip_2} \|p_2p_1u_i\|_\xi + \sum_{p=p_1u_i^{-1}p_2} \|u_i^{-1}p_2p_1\|_\xi \\
				&\leq (\deg_i^+(p) + \deg_i^-(p)) \|p\|_\xi = \deg_i(p) \|p\|_\xi,
			\end{align*}

		\noindent
		so that

			\begin{equation*}
				\sum_{i=1}^m \|\D_ip\|_\xi \leq \deg (p)\|p\|_\xi.
			\end{equation*}

		\noindent
		To estimate the factor depending on $V$, we proceed as follows:

			\begin{align*}
				\sum_{i=1}^m \|\D_iV\|_\xi &= \sum_{i=1}^m \bigg{\|} \D_i \sum_q \langle q,V \rangle q \bigg{\|}_\xi \\
				 &\leq  \sum_{i=1}^m \sum_{q \in B^\perp} |\langle q,V \rangle| \|\D_iq\|_\xi  \\
				  &=  \sum_{q \in B^\perp} |\langle q,V \rangle| \|q\|_\xi \deg(q) \\
				  &\leq \deg (V) \sum_{q\in B^\perp} |\langle q,V \rangle| \|q\|_\xi \\
				  &= \deg (V) \|\Pi V\|_\xi \\
				  &\leq \deg (V) \xi^{\deg (V)} \|\Pi V\|_1 ,
			\end{align*}

		\noindent
		where the last inequality follows from Proposition \ref{prop:xiInequality}.
		Thus we have proved

			\begin{equation}\label{b1}
				\|\operator{P}^{V}\operator{D}^{-1}p\|_\xi \leq \|\Pi V\|_1 \deg(V) \xi^{\deg(V)} \|p\|_\xi,
			\end{equation}

		\noindent
		from which the claim follows.

	\end{proof}

		\begin{prop}
			\label{prop:Laplacian}
			For any $\xi_1,\xi_2 \geq 1$ such that $\xi_1 \geq 2\xi_2$, 
			the regularized reduced Laplacian $\overline{\Delta}$ is a 
			contractive mapping of $(B^\perp,\|\cdot\|_{\xi_1})$ into 
			$(L^{\otimes2},\|\cdot\|_{\xi_2})$.
		\end{prop}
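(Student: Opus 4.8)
The plan is to verify the operator‑norm bound $\|\overline{\Delta}\|_{\xi_1,\xi_2}\le 1$ first on the monomial basis of $B^\perp$ and then extend to all of $B^\perp$ by linearity and the triangle inequality. So fix a monomial $p\in B^\perp$ and put $d=\deg(p)\ge 1$. For $d\le 1$ either the outer or the inner sums in \eqref{eqn:ReducedLaplacian} are empty, so $\Delta p=0$ and there is nothing to prove; assume $d\ge 2$. Since $\operator{D}_i p=\deg_i(p)\,p$ we have $\operator{D}p=d\,p$, hence $\overline{\Delta}p=\Delta\operator{D}^{-1}p=\tfrac1d\,\Delta p=\tfrac1d\sum_{i=1}^m\Delta_i p$. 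The whole statement will follow once we show $\|\Delta_i p\|_{\xi_2}\le \deg_i(p)^2\,\xi_2^{\,d}$ for each $i$: summing over $i$ and using $\sum_i\deg_i(p)^2\le(\sum_i\deg_i(p))^2=d^2$ gives $\|\overline{\Delta}p\|_{\xi_2}\le \tfrac1d\cdot d^2\,\xi_2^{\,d}=d\,\xi_2^{\,d}$, and since $d\le 2^d$ this is at most $2^d\xi_2^{\,d}=(2\xi_2)^d\le\xi_1^{\,d}=\|p\|_{\xi_1}$, the last step because $2\xi_2\le\xi_1$.

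To bound $\|\Delta_i p\|_{\xi_2}$ we read off \eqref{eqn:ReducedLaplacian} and apply the triangle inequality in $(L^{\otimes2},\|\cdot\|_{\xi_2})$ together with $\|a\otimes b\|_{\xi_2}=\|a\|_{\xi_2}\|b\|_{\xi_2}$. Two things are needed. First, every simple tensor occurring in $\Delta_i p$ has $\xi_2$‑norm at most $\xi_2^{\,d}$: as observed in the text preceding Definition~\ref{def:FundamentalOperators}, each ``diagonal'' tensor $q_1u_i\otimes q_2u_i$ or $u_i^{-1}q_1\otimes u_i^{-1}q_2$ has total degree at most $d$, while each ``off‑diagonal'' tensor $q_1\otimes q_2$ has total degree at most $d-2$, so in either case the norm is at most $\xi_2^{\,d}$ since $\xi_2\ge 1$. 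Second, $\Delta_i p$ is a sum of at most $\deg_i(p)^2$ such tensors: in the first block of \eqref{eqn:ReducedLaplacian} the outer sum has $\deg_i^+(p)$ terms $p=p_1u_ip_2$, and for each the two inner sums contribute at most $\deg_i^+(p)$ and $\deg_i^-(p)$ tensors respectively — one fixes the displayed letter and then chooses, among the remaining occurrences of $u_i^{\pm1}$ in the relevant cyclic word, an occurrence of $u_i$ or of $u_i^{-1}$ — hence at most $\deg_i^+(p)\,\deg_i(p)$ tensors from the first block; the second block contributes at most $\deg_i^-(p)\,\deg_i(p)$ by the identical count. Adding, $\Delta_i p$ has at most $\deg_i(p)^2$ terms, each of $\xi_2$‑norm at most $\xi_2^{\,d}$, which is the asserted bound.

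Finally, for an arbitrary $p\in B^\perp$ write $p=\sum_q\langle q,p\rangle\,q$ over the monomial basis of $B^\perp$; then $\|\overline{\Delta}p\|_{\xi_2}\le\sum_q|\langle q,p\rangle|\,\|\overline{\Delta}q\|_{\xi_2}\le\sum_q|\langle q,p\rangle|\,\|q\|_{\xi_1}=\|p\|_{\xi_1}$, using the monomial case and the definition of $\|\cdot\|_{\xi_1}$, which is exactly contractivity. The only point requiring care is the degree bookkeeping for the tensors in $\Delta_i$ — that the diagonal terms stay within degree $d$ while the off‑diagonal terms drop by $2$ — but this has already been recorded in the text; everything else is the crude counting above together with the elementary inequality $d\le 2^d$, which is precisely why the factor $2$ in the hypothesis $\xi_1\ge 2\xi_2$ is what makes $\overline{\Delta}$ contractive, with room to spare once $d\ge 2$.
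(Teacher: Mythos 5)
Your proposal is correct and follows essentially the same argument as the paper: bound $\|\Delta_i p\|_{\xi_2}\le \deg_i(p)^2\xi_2^{\deg p}$ by counting tensors and noting the diagonal terms keep degree $\le d$ while the off-diagonal ones drop by $2$, then sum over $i$, divide by $d$, and use $d\le 2^d\le(\xi_1/\xi_2)^d$. Your counting is marginally coarser than the paper's (which uses $\deg_i^+(p)-1$ for the diagonal inner sums) but lands on the identical bound, and your explicit treatment of $d\le1$ and the linear extension to all of $B^\perp$ are routine additions.
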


		\begin{proof}
			Let $p \in B^\perp$ be a monomial of degree $d$.  We have

				\begin{equation*}
					\overline{\Delta}p = \frac{1}{d} \sum_{i=1}^m {\Delta}_i p.
				\end{equation*}

			\noindent
			Now,

				\begin{align*}
					\|{\Delta}_ip\|_{\xi_2} &\leq 
					\sum_{p=p_1u_ip_2} 
						\left( \sum_{p_2p_1u_i=q_1u_iq_2u_i} \|q_1u_i\|_{\xi_2} \|q_2u_i\|_{\xi_2} +
						\sum_{p_2p_1u_i=q_1u_i^{-1}q_2u_i} \|q_1\|_{\xi_2} \|q_2\|_{\xi_2} \right) \\
					+&\sum_{p=p_1u_i^{-1}p_2} 
						\left( \sum_{u_i^{-1}p_2p_1=u_i^{-1}q_1u_iq_2} \|q_1\|_{\xi_2} \|q_2\|_{\xi_2} +
						\sum_{u_i^{-1}p_2p_1=u_i^{-1}q_1u_i^{-1}q_2} \|u_i^{-1}q_1\|_{\xi_2} \|u_i^{-1}q_2\|_{\xi_2} \right) \\
					&= \sum_{p=p_1u_ip_2} 
						\left( (\deg_i^+(p)-1) \xi_2^d + \deg_i^-(p) \xi_2^{d-2} \right) +\sum_{p=p_1u_i^{-1}p_2} 
						\left( \deg_i^+(p) \xi_2^{d-2} + (\deg_i^-(p)-1) \xi_2^d \right) \\
					&\leq \deg_i(p)^2\xi_2^d.
				\end{align*}

			\noindent
			We thus have, since $d\le 2^d\le (\frac{\xi_1}{\xi_2})^d$ for all $d\in\mathbb N$, 

				\begin{align*}
					\|\overline{\Delta}p\|_{\xi_2} &\leq \frac{\xi_2^d}{d} \sum_{i=1}^m (\deg_i(p))^2
					\leq d \xi_2^d \leq \left( \frac{\xi_1}{\xi_2} \right)^d \xi_2^d = \xi_1^d = \|p\|_{\xi_1}.
				\end{align*}
		\end{proof}

	\subsection{Uniqueness}
	Let $\tau$ be a unital trace on $L$ such that $\|\tau\|_1 \leq 1$, and let
	$V \in L$ be a potential.
	From Propositions \ref{prop:contraction} and \ref{prop:perturbation}, 
	we conclude that the fundamental operators associated to $\tau,V$ 
	are $\xi$-continuous endomorphisms of $B^\perp$ whose norms satisfy

		\begin{align*}
			\|\Psi_\tau^V-\rm{Id}\|_\xi &<  2\frac{\xi+1}{\xi(\xi-1)} + \|\Pi V\|_1 \deg(V)\xi^{\deg(V)} \\
			\|\Xi_\tau^V-\rm{Id}\|_\xi &<  4\frac{\xi+1}{\xi(\xi-1)} + \|\Pi V\|_1 \deg(V)\xi^{\deg(V)}.
		\end{align*}

	\noindent
	Consequently, $\Psi_\tau^V$ and $\Xi_\tau^V$ extend uniquely
	to continuous endomorphisms of $\mathcal{B}^\perp_\xi$, the completion 
	of $B^\perp$ in the norm $\|\cdot\|_\xi$.  Now, since $\mathcal{B}^\perp_\xi$ is complete,
	$\mathcal{C}(\mathcal{B}^\perp_\xi)$ is a Banach algebra.  Thus,
	if

		\begin{equation}
			\label{eqn:FundamentalInequality}
			K(\xi,V) := 4\frac{\xi+1}{\xi(\xi-1)} + \|\Pi V\|_1 \deg(V)\xi^{\deg(V)} < 1,
		\end{equation}

	\noindent
	then $\Psi_\tau^V$ and $\Xi_\tau^V$ are continuous 
	automorphisms of $\mathcal{B}^\perp_\xi$ with inverses

		\begin{align*}
			(\Psi_\tau^V)^{-1} &= \sum_{n=0}^{\infty} (-1)^n (\frac{1}{2}\overline{\operator{T}}_\tau + \overline{\operator{P}}^V)^n \\
			(\Xi_\tau^V)^{-1} &= \sum_{n=0}^{\infty} (-1)^n (\overline{\operator{T}}_\tau + \overline{\operator{P}}^V)^n 
		\end{align*} 
with norms bounded by
\begin{equation}\label{boundnorm}
\|(\Psi_\tau^V)^{-1}\|_\xi\le \frac{1}{1-K(\xi,V)},\qquad \|(\Xi_\tau^V)^{-1}\|_\xi\le \frac{1}{1-K(\xi,V)}.\end{equation}
We next show that  the condition $K(\xi,V)$ small enough implies uniqueness.   Let $\tau,\tau'$ be solutions of
	\eqref{eqn:InitialValuePotential} such that $\|\tau\|_1,\|\tau'\|_1 \leq 1$, and 
	set $\delta=\tau'-\tau$.  Then, by Proposition \ref{prop:MasterDeltaConstraint}
	and the invertibility of $\Xi_\tau^V$, we have the identity

		\begin{equation*}
			\delta = -(\delta \otimes \delta)\overline{\Delta}(\Xi_\tau^V)^{-1}
		\end{equation*}

	\noindent
	in $\Hom(\mathcal{B}^\perp_\xi,\field{C})$.  Taking operator norms, we obtain
	the inequality

	\begin{equation*}
		\|\delta\|_\xi \leq \|\delta\|_\xi \| (\operator{Id} \otimes \delta)\overline{\Delta}\|_\xi \| (\Xi_\tau^V)^{-1} \|_\xi.
	\end{equation*}

	\noindent
	If $\|\delta\|_\xi \neq 0$, we may cancel it from both sides of this 
	inequality to obtain

		\begin{equation*}
			1 \leq \| (\operator{Id} \otimes \delta)\overline{\Delta}\|_\xi \|(\Xi_\tau^V)^{-1}\|_\xi.
		\end{equation*}

	\noindent
	Using the fact that $\|\delta\|_1 \leq \|\tau'\|_1 + \|\tau\|_1 \leq 2$,
	we proceed as in the proof of Proposition \ref{prop:contraction}
	and find that

		\begin{equation*}
			\| (\operator{Id} \otimes \delta)\overline{\Delta}\|_\xi \leq 4\frac{\xi+1}{\xi(\xi-1)}.
		\end{equation*}

	\noindent
	Combining this with \eqref{boundnorm},
	we obtain the inequality 

		\begin{equation*}
			1 <  4\frac{\xi+1}{\xi(\xi-1)} \frac{1}{1-K(\xi,V)}.
		\end{equation*}

	\noindent
	Let us choose a particular value $\xi_0$ of $\xi$, large enough
	so that $4\frac{\xi+1}{\xi(\xi-1)}<1$.  For example, choosing
	$\xi_0=12$, we have

		\begin{equation*}
			4\frac{\xi_0+1}{\xi_0(\xi_0-1)} = \frac{13}{33} < \frac{1}{2},
		\end{equation*}

	\noindent
	and 

		\begin{equation*}
			K(\xi_0,V) =K(12,V) = \frac{13}{33} + \|\Pi V\|_1 \deg(V) 12^{\deg(V)}.
		\end{equation*}

	\noindent
	Thus if 

		\begin{equation*}
			\|\Pi V\|_1 \deg(V) 12^{\deg(V)} < \frac{1}{2}-\frac{13}{33} = \frac{7}{66},
		\end{equation*}

	\noindent
	we obtain the fallacious inequality
	inequality	$1<1$.  This proves Theorem \ref{thm:uniqueness}.

\section{The Schwinger-Dyson lattice}
\label{sec:SDlattice}
In this section, we introduce the Schwinger-Dyson lattice over $L$.
The \emph{Schwinger-Dyson lattice} with potential $V$ is a
countable set of noncommutative partial differential equations.
The equations $\operatorname{SD}(k,N)$ in this hierarchy are indexed by two discrete parameters, the \emph{order},
$k$, and the \emph{rank}, $N$.  A solution of the Schwinger-Dyson
lattice with potential $V$ is an array

	\begin{equation*}
				\begin{matrix}
					\W_{11}^V & \W_{12}^V & \dots & \W_{1N}^V & \dots \\
					\W_{21}^V & \W_{22}^V & \dots & \W_{2N}^V & \dots \\
				\vdots & \vdots & \ddots & \vdots \\
					\W_{k1}^V & \W_{k2}^V & \dots & \W_{kN}^V & \dots \\
					\vdots & \vdots & {} & \vdots
				\end{matrix}
	\end{equation*}

	\noindent
	whose elements are symmetric multilinear functionals

			\begin{equation*}
				\W_{kN}^V: \underbrace{L \times \dots \times L}_k \rightarrow \field{C}.
			\end{equation*}

	\noindent
	In order to qualify as a solution of the SD lattice, we
	insist that $$\tilde\W_{1N}^V:=N^{-1}\W_{1N}^V$$ is a unital trace, 
	and that $\W_{kN}^V$ is a connected $k$-correlator for $k \geq 2$. 

	By definition, the first equation in the SD lattice, $\SD(1,N)$, is

	\begin{equation*}
		\W_{1N}^V \otimes \W_{1N}^V(\partial_ip) + N\W_{1N}^V((\D_iV)p) = -\W_{2N}^V(\partial_ip).
	\end{equation*}

\noindent 
Note that this equation is invariant under translations of $V$
by elements of $B$.
The subsequent equations in the hierarchy are obtained
 by repeated application of the \emph{Gibbs rule},

	\begin{equation*}
		\frac{\mathrm{d}}{\mathrm{d}z} \W_{kN}^{V+\frac{z}{N}p_{k+1}}(p_1,\dots,p_k)|_{z=0} 
		= \W_{(k+1)N}^V(p_1,\dots,p_k,p_{k+1}).
	\end{equation*}

In this section we give the equations of the SD hierarchy in
explicit form.  First, we present the SD equations in their
\emph{primary form}, obtained directly from the first equation
and iteration of the Gibbs rule.  We then obtain the \emph{secondary form}
of the SD lattice equations by applying the 
primary form to the coordinates of a cyclic gradient,

	\begin{equation*}
		\D p = (\D_1p,\dots,\D_mp).
	\end{equation*}

\noindent
Strictly speaking, the secondary form is a specialization of the 
primary form since not every $m$-tuple of noncommutative polynomials 
occurs as a cyclic gradient.  However, the secondary form of the 
SD equations has the advantage that it is concisely described by
the fundamental operators $\Psi_{\tilde\W_{1N}^V}^V$ and $\Xi_{\tilde\W_{1N}^V}^V$.

Finally, we introduce the notion of \emph{uniformly bounded solutions} of the 
SD lattice.  Quite simply, a uniformly bounded solution is one whose elements
$\W_{kN}^V$ are multilinear functionals whose norms are bounded independently
of $N$.  We will see that, as soon as $\Xi_{\tilde\W_{1N}^V}^V$ is invertible,
uniformly bounded solutions in fact exhibit polynomial decay in $N$.

\subsection{Primary form of the SD equations}
Select $p_1,\dots,p_k \in L$, and consider the perturbed
first order equation

	\begin{equation*}
		\W_{1N}^{V_z} \otimes \W_{1N}^{V_z}(\partial_ip_1) 
		+N \W_{1N}^{V_z}((\D_iV_z)p_1) = - \W_{2N}^{V_z}(\partial_ip_1),
	\end{equation*}

\noindent
where 

	\begin{equation*}
		V_z = V + \sum_{j=2}^k \frac{z_j}{N} p_j
	\end{equation*}

\noindent
and $k \geq 2$.
Equivalently, the perturbed equation is

	\begin{equation*}
		\W_{1N}^{V_z} \otimes \W_{1N}^{V_z}(\partial_ip_1) 
		+N \W_{1N}^{V_z}((\D_iV)p_1)  = - \sum_{j=2}^k z_j \W_{1N}^{V_z}((\D_ip_j)p_1) - \W_{2N}^{V_z}(\partial_ip_1).
	\end{equation*}

\noindent
Applying the Gibbs rule to the perturbed first order equation $k-1$ times, once for each
of the variables $z_2,\dots,z_k$, will yield the $k$th order SD equation at rank $N$.

To write down $\operatorname{SD}(k,N)$ explicitly,
we start by differentiating the term $\W_{1N}^{V_z} \otimes \W_{1N}^{V_z}(\partial_ip_1)$.
Let $q_1 \otimes q_2$ be a simple tensor in $L^{\otimes2}$, and consider

	\begin{equation*}
		\W_{1N}^{V_z} \otimes \W_{1N}^{V_z}(q_1 \otimes q_2) =
		\W_{1N}^{V_z}(q_1)\W_{1N}^{V_z}(q_2).
	\end{equation*}

\noindent
Differentiating with respect to the parameters $z_2,\dots,z_k$ and 
applying the Gibbs rule yields the sum

	\begin{equation*}
		\sum_{r=1}^k \sum_{\substack{I \subseteq \{2,\dots,k\}\\ |I|=r-1}}
		\W_{rN}^V(q_1 \otimes p_I)\W_{(k+1-r)N}^V(q_2 \otimes p_{\overline{I}})
		= \sum_{I \subseteq \{2,\dots,k\}} \W_{(|I|+1)N}^V(q_1 \otimes p_I) \W_{(k-|I|)N}^V(q_2 \otimes p_{I^c})
	\end{equation*}

\noindent
where the sum on the right is over all subsets of $\{2,\dots,k\}$, including the empty set,
$I^c=\{2,\dots,k\}\backslash I$, and 

	\begin{equation*}
		p_I = \bigotimes_{i \in I} p_i, \quad p_{\overline{I}} = \bigotimes_{i \in \overline{I}} p_i.
	\end{equation*}

\noindent
This may be equivalently written 

	\begin{equation*}
		\sum_{ I \subseteq \{2,\dots,k\} }
		\W_{(|I|+1)N}^V \otimes \W_{(k-|I|)N}^V(q_1 \otimes q_2 \# p_I \otimes p_{I^c}),
	\end{equation*}

\noindent
where we are using the notation $q_1 \otimes q_2 \# T= q_1 \otimes T \otimes q_2$ for any
tensor $T$.

Application of the Gibbs rule to the next term, 
$N \W_{1N}^{V_z}((\D_iV)p_1)$, yields the 
contribution

	\begin{equation*}
		N\W_{kN}^V((\D_iV)p_1,p_2,\dots,p_k).
	\end{equation*}

We now move to the right hand side of the perturbed equation.  Application of 
the Gibbs rule yields the contributions

	\begin{align*}
		&\sum_{j=2}^k \W_{(k-1)N}^V((\D_ip_j)p_1,p_2,\dots,\hat{p}_j,\dots,p_k), \quad \W_{(k+1)N}^V(\partial_ip_1,p_2,\dots,p_k),
	\end{align*}

\noindent
where, in the first contribution, the hat denotes an omitted argument.

\begin{prop}[Primary SD equations]
	\label{prop:SDprimary}
	The $k$th order Schwinger-Dyson equation at rank $N$, $\SD(k,N)$, is

		\begin{align*}
			&\sum_{ I \subseteq \{2,\dots,k\} } \W_{(|I|+1)N}^V \otimes \W_{(k-|I|)N}^V(\partial_ip_1 \# p_I \otimes p_{I^c})
			+ N \W_{kN}^V((\D_iV)p_1,\dots,p_k) = \\
			&-\sum_{j=2}^k \W_{(k-1)N}^V((\D_ip_j)p_1,p_2,\dots,\hat{p}_j,\dots,p_k) - \W_{(k+1)N}^V(\partial_ip_1,p_2,\dots,p_k).
		\end{align*}
\end{prop}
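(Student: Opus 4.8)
The plan is to establish the formula for $\SD(k,N)$ by induction on the order $k$. The base case $k=1$ is precisely the defining equation $\SD(1,N)$ displayed above. For the inductive step, I would take the formula for $\SD(k,N)$ --- which by the induction hypothesis is valid for every potential, in particular for the perturbed potential $V' = V + \frac{z}{N} p_{k+1}$ and for the fixed test polynomials $p_1,\dots,p_k$ --- differentiate both sides in $z$, evaluate at $z=0$, and apply the Gibbs rule $\frac{\mathrm{d}}{\mathrm{d}z}\W_{jN}^{V+\frac{z}{N}q}(\cdots)|_{z=0} = \W_{(j+1)N}^V(\cdots,q)$ to each resulting $z$-derivative. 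By the construction of the hierarchy this is exactly $\SD(k+1,N)$, so it remains only to check that the bookkeeping reassembles into the claimed closed form. (Equivalently one may run all $k-1$ derivatives simultaneously, as in the computation sketched just before the statement; the induction merely organizes that computation.)

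The two terms on the left of $\SD(k,N)$ are treated as follows. For the product term $\sum_{I\subseteq\{2,\dots,k\}}\W_{(|I|+1)N}^{V'}\otimes\W_{(k-|I|)N}^{V'}(\partial_i p_1 \# p_I\otimes p_{I^c})$, the Leibniz rule splits $\frac{\mathrm{d}}{\mathrm{d}z}$ of each summand into a term differentiating the first tensor factor and a term differentiating the second; the Gibbs rule then appends $p_{k+1}$ as an extra argument to the $(|I|+1)$-ary functional in the first case and to the $(k-|I|)$-ary functional in the second. Reindexing, the former terms are precisely those labelled by subsets $J\subseteq\{2,\dots,k+1\}$ with $k+1\in J$ (take $J = I\cup\{k+1\}$), the latter by subsets with $k+1\notin J$ (take $J=I$); together they give the full sum over $J\subseteq\{2,\dots,k+1\}$ required by $\SD(k+1,N)$. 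For the term $N\W_{kN}^{V'}((\D_iV')p_1,\dots,p_k)$, linearity of the cyclic derivative gives $\D_iV' = \D_iV + \frac{z}{N}\D_i p_{k+1}$, so it equals $N\W_{kN}^{V'}((\D_iV)p_1,\dots,p_k) + z\,\W_{kN}^{V'}((\D_i p_{k+1})p_1,\dots,p_k)$; differentiating at $z=0$, the first piece yields $N\W_{(k+1)N}^V((\D_iV)p_1,\dots,p_k,p_{k+1})$, the corresponding term of $\SD(k+1,N)$, while the second, using $\frac{\mathrm{d}}{\mathrm{d}z}z=1$, yields $\W_{kN}^V((\D_i p_{k+1})p_1,p_2,\dots,p_k)$, which migrates to the right-hand side as the $j=k+1$ summand of $-\sum_{j=2}^{k+1}\W_{kN}^V((\D_i p_j)p_1,p_2,\dots,\hat{p}_j,\dots,p_{k+1})$.

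The right-hand terms are then routine: differentiating $-\sum_{j=2}^{k}\W_{(k-1)N}^{V'}((\D_i p_j)p_1,\dots,\hat{p}_j,\dots,p_k)$ appends $p_{k+1}$ to each summand and produces the $j=2,\dots,k$ part of $-\sum_{j=2}^{k+1}\W_{kN}^V((\D_ip_j)p_1,\dots,\hat{p}_j,\dots,p_{k+1})$, which combines with the leftover $j=k+1$ term from the previous step; and differentiating $-\W_{(k+1)N}^{V'}(\partial_i p_1,p_2,\dots,p_k)$ gives $-\W_{(k+2)N}^V(\partial_i p_1,p_2,\dots,p_{k+1})$. Collecting all the contributions reproduces $\SD(k+1,N)$ verbatim, which completes the induction.

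I expect the only real point of care --- less an obstacle than a place to be meticulous --- to be the combinatorics of the product term: one must correctly identify which tensor slot $p_{k+1}$ occupies after applying the Gibbs rule to each factor, and verify that the split into ``$k+1\in J$'' and ``$k+1\notin J$'' reconstitutes the symmetric subset sum with neither omission nor repetition. It is also worth recording that the differentiations are applied to the whole perturbed family $\W_{jN}^{V+\frac{z}{N}q}$, not only at the base potential, and that the distinct perturbation derivatives commute; both are built into the construction of the hierarchy, since $z$ enters only polynomially through $V'$ and each $\W_{jN}^{\bullet}$ is multilinear in its arguments (and, for $k\ge 2$, connected).
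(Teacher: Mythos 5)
Your proof is correct and is essentially the paper's own argument: the paper derives $\SD(k,N)$ by perturbing the first-order equation with $V_z=V+\sum_{j=2}^k\frac{z_j}{N}p_j$ and applying all $k-1$ Gibbs-rule differentiations at once, while you peel off one derivative at a time by induction on $k$ — the same computation, merely reorganized, as you yourself note. Your bookkeeping for the Leibniz split of the product term (the dichotomy $k+1\in J$ versus $k+1\notin J$) and for the linear-in-$z$ term $z\,\W_{kN}^{V'}((\D_ip_{k+1})p_1,\dots,p_k)$ matches the paper's term-by-term treatment exactly.
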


\subsection{Secondary form of the SD equations}
Using the fact that $\tilde\W_{1N}^V$ is a unital trace, we use
the cyclic gradient trick and argue as in 
Section \ref{subsec:CyclicGradientTrick} to obtain the secondary
form of the SD equations.
They are expressed in terms of the first
and second fundamental operators associated to
$\tilde\W_{1N}^V$.

	\begin{prop}[Secondary SD equations]
		\label{prop:SDsecondary}
		For any $p \in B^\perp$, we have

			\begin{equation*}
				\W_{1N}^V(\Psi_{\tilde\W_{1N}^V}^Vp) = - \frac{1}{2N}\W_{1N}^V(\Pi'\overline{\operator{T}}_{\tilde\W_{1N}^V}p)
				- \frac{1}{N}\W_{2N}^V(\overline{\Delta}p).
			\end{equation*}

		For any $k \geq2$ and $p_1,\dots,p_k \in B^\perp$, we have

			\begin{align*}
				&\W_{kN}^V(\Xi_{\tilde\W_{1N}^V}^Vp_1,\dots,p_k) = - \frac{1}{N} 
				\sum_I
				\W_{(|I|+1)N}^V \otimes \W_{(k-|I|)N}^V(\overline{\Delta}p_1 \# p_I \otimes p_{\overline{I}}) \\
				&-\frac{1}{N}\sum_{j=2}^k \W_{(k-1)N}^V(\overline{\operator{P}}^{p_j}p_1,\dots,\hat{p}_j,\dots,p_k)
				-\frac{1}{N}\W_{(k+1)N}^V(\overline{\Delta}p_1,\dots,p_k),
			\end{align*}

		\noindent
		where the first sum on the right is over all proper nonempty subsets $I$ of 
		$\{2,\dots,k\}$.
	\end{prop}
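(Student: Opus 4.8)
The plan is to derive both equations from the Primary SD equations of Proposition \ref{prop:SDprimary} by the same ``cyclic gradient trick'' used in Section \ref{subsec:CyclicGradientTrick}: substitute $p_1 \mapsto \D_i p_1$, sum over $i$, apply Proposition \ref{prop:ReducedLaplacian}, and regularize by $\operator{D}^{-1}$. I would treat the two cases in parallel since the manipulations are identical in structure.

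First, for $k=1$: the equation $\SD(1,N)$ reads $\W_{1N}^V \otimes \W_{1N}^V(\partial_i p) + N\W_{1N}^V((\D_i V)p) = -\W_{2N}^V(\partial_i p)$. Replacing $p$ by $\D_i p$ and summing over $i$, the left side becomes $\sum_i \W_{1N}^V \otimes \W_{1N}^V(\partial_i \D_i p) + N\sum_i \W_{1N}^V((\D_i V)(\D_i p))$. Writing $\tau := \tilde\W_{1N}^V = N^{-1}\W_{1N}^V$, which is a unital trace by hypothesis, I apply Proposition \ref{prop:ReducedLaplacian} to the first term: $\sum_i \tau \otimes \tau(\partial_i \D_i p) = \tau(\operator{D}p) + \tau \otimes \tau(\Delta p)$. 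Multiplying back by $N^2$ and recognizing $\sum_i(\D_i V)(\D_i p) = \operator{P}^V p$, the left side is $N^2\tau(\operator{D}p) + N^2 \tau\otimes\tau(\Delta p) + N \W_{1N}^V(\operator{P}^V p) = N\W_{1N}^V((\operator{D} + \operator{P}^V)p) + N^2\tau\otimes\tau(\Delta p)$. The right side becomes $-\sum_i \W_{2N}^V(\partial_i \D_i p)$; applying the same identity (valid for any bilinear tracial functional, or just the trace property of $\W_{2N}^V$ in each slot) gives $-\W_{2N}^V(\operator{D}p) - (\operator{Id}\otimes\W_{1N}^V\cdot N^{-1}\W_{1N}^V\otimes\cdots)$ — more carefully, since $\W_{2N}^V$ is bilinear one gets $-\W_{2N}^V(\operator{D}p) - N(\W_{2N}^V \text{ contracted against }\tau) - \ldots$; the cleanest route is to note $\sum_i \W_{2N}^V(\partial_i\D_i p)$ expands by the same combinatorial identity into $\W_{2N}^V(\operator{D}p) + $ a term where one leg of $\Delta p$ is fed into $\tau$ — but actually the secondary form absorbs exactly the $\frac{1}{2}\operator{T}_\tau$ and $\overline\Delta$ pieces as in \eqref{eqn:OperatorNormalized}. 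Collecting: divide by $N^2$ and regularize by $\operator{D}^{-1}$ (legitimate on $B^\perp$ since $\operator{D}$ is invertible there), then split off the $B$-component via $\Pi' + \Pi$ as in the derivation of \eqref{eqn:OperatorTertiary}. This yields $\W_{1N}^V(\Psi_{\tau}^V p) = -\frac{1}{2N}\W_{1N}^V(\Pi'\overline{\operator{T}}_\tau p) - \frac{1}{N}\W_{2N}^V(\overline\Delta p)$, where the $\frac12 \operator{T}_\tau$ inside $\Psi^V_\tau$ comes from the $\frac12(\operator{Id}\otimes\tau + \tau\otimes\operator{Id})\Delta$ contraction exactly as in Proposition \ref{prop:ReducedLaplacian}'s consequence.

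For $k \geq 2$: I start from the general $\SD(k,N)$ of Proposition \ref{prop:SDprimary}, replace $p_1$ by $\D_i p_1$, and sum over $i$. The term $N\W_{kN}^V((\D_i V)p_1,\ldots)$ becomes $N\W_{kN}^V(\operator{P}^V p_1, p_2,\ldots,p_k)$. The term $\sum_I \W^V_{(|I|+1)N}\otimes\W^V_{(k-|I|)N}(\partial_i\D_i p_1 \# p_I \otimes p_{I^c})$ splits: since $\W_{kN}^V$ vanishes on $B$-arguments for $k\geq 2$ (connectedness), the diagonal terms $I=\emptyset$ and $I=\{2,\dots,k\}$ where one factor is $\W_{1N}^V$ need care — these produce, via Proposition \ref{prop:ReducedLaplacian} applied with one trace being $\tilde\W_{1N}^V$, exactly the $\operator{D} + \operator{T}_{\tilde\W_{1N}^V}$ part acting on $\W_{kN}^V$'s first slot, hence (after regularization) the operator $\Xi_{\tilde\W_{1N}^V}^V = \operator{Id} + \Pi\overline{\operator{T}}_{\tilde\W_{1N}^V} + \overline{\operator{P}}^V$ (note the coefficient $1$, not $\frac12$, because for $k\geq 2$ both diagonal contributions feed $\W_{1N}^V$ into a leg of $\Delta$, doubling the $\frac12$). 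The proper-nonempty-subset part of the $I$-sum survives as the first term on the right. The term $\sum_{j=2}^k \W_{(k-1)N}^V((\D_i p_j)(\D_i p_1),\ldots)$ summed over $i$ is $\sum_j \W_{(k-1)N}^V(\operator{P}^{p_j}p_1, p_2,\ldots,\hat p_j,\ldots,p_k)$, which after regularizing by $\operator{D}^{-1}$ in the first slot is the $\sum_j \W^V_{(k-1)N}(\overline{\operator{P}}^{p_j}p_1,\ldots)$ term. Finally $\W_{(k+1)N}^V(\partial_i\D_i p_1,\ldots)$ summed over $i$ becomes $\W_{(k+1)N}^V(\operator{D}p_1,\ldots) + \W_{(k+1)N}^V$ against a $\Delta$-type contraction; since $\W_{(k+1)N}^V$ with $k+1\geq 3$ is connected, only the $\operator{D}p_1$ piece and the full $\Delta p_1$ piece (with both legs remaining free because feeding a leg into $\tilde\W_{1N}^V$ would require a block structure already counted in the $I$-sum) contribute — after regularization this is $\W^V_{(k+1)N}(\overline\Delta p_1,\ldots)$. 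Dividing through by $N$ and rearranging gives the claimed identity.

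\textbf{Main obstacle.} The delicate point is bookkeeping the diagonal terms in the $I$-sum and in the $\W_{(k+1)N}^V(\partial_i\D_i p_1)$ term: one must correctly track which contractions of the reduced Laplacian $\Delta p_1$ against $\tilde\W_{1N}^V$ get absorbed into the operator $\Xi^V_{\tilde\W_{1N}^V}$ acting on the first slot versus which appear as genuine lower-order terms, and verify that connectedness kills precisely the right pieces so that the coefficient is $1$ for $k\geq 2$ rather than $\frac12$ as in the $k=1$ row. I would handle this by carefully expanding $\partial_i\D_i p_1$ via Proposition \ref{prop:ReducedLaplacian} inside each multilinear functional, grouping the resulting simple tensors according to which factors land in which $\W$, and invoking the vanishing $\W_{jN}^V(b,\ldots)=0$ for $j\geq 2$, $b\in B$ at each stage. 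The rest is the routine algebra of pushing $\operator{D}^{-1}$ through and splitting off the $B$-component with $\Pi, \Pi'$.
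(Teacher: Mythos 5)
Your route is the same as the paper's: the paper obtains Proposition \ref{prop:SDsecondary} precisely by applying the cyclic gradient trick of Section \ref{subsec:CyclicGradientTrick} to the primary equations of Proposition \ref{prop:SDprimary}, and you correctly identify the one genuinely delicate point, namely that for $k\geq 2$ the two diagonal cases $I=\emptyset$ and $I=\{2,\dots,k\}$ each contract one leg of $\Delta p_1$ against $\tilde\W_{1N}^V$, so the symmetrization factor $\frac{1}{2}$ of the $k=1$ row is doubled and $\Xi_{\tilde\W_{1N}^V}^V$ rather than $\Psi_{\tilde\W_{1N}^V}^V$ appears, with the projection $\Pi$ inserted for free by connectedness of $\W_{kN}^V$.

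There is, however, one step that as written would fail: your treatment of the terms in which \emph{both} legs of $\partial_i\D_ip_1$ land in a single connected correlator, i.e.\ $\sum_i\W_{2N}^V(\partial_i\D_ip)$ on the right of the $k=1$ equation and $\sum_i\W_{(k+1)N}^V(\partial_i\D_ip_1,p_2,\dots,p_k)$ for $k\geq 2$. Proposition \ref{prop:ReducedLaplacian} does not apply there: its number-operator term comes from evaluating a \emph{unital} functional on the $\mathbf{1}$-legs of the simple tensors $p_2'p_1'u_i\otimes\mathbf{1}$ and $\mathbf{1}\otimes u_i^{-1}p_2'p_1'$, whereas when both legs are fed into $\W_{2N}^V$ (or the first two slots of $\W_{(k+1)N}^V$) those terms vanish outright by connectedness, so no $\operator{D}p_1$ contribution exists at all; only the $\Delta p_1$ part survives, after using traciality in each slot to strip the $u_i^{\pm1}$ conjugations. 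You instead assert that a $\W(\operator{D}p_1,\dots)$ piece ``contributes'' and then silently omit it from the final formula; if it really contributed, the secondary equations would carry extra terms $-\frac{1}{N}\W_{2N}^V(p)$ and $-\frac{1}{N}\W_{(k+1)N}^V(p_1,\dots,p_k)$ that are not in the statement. The repair is exactly the connectedness bookkeeping you sketch in your closing paragraph (expand into simple tensors, kill the $\mathbf{1}$-legs, strip conjugations), so the gap is local, but the step must be argued, not asserted. A related caution: carrying your $k=1$ computation to the end yields the constant-term contribution as $-\frac{1}{2}\W_{1N}^V(\Pi'\overline{\operator{T}}_{\tilde\W_{1N}^V}p)$, which is the normalization consistent with \eqref{eqn:OperatorTertiary} in the $N\to\infty$ limit; you should derive and check this coefficient rather than simply transcribe the displayed $\frac{1}{2N}$.
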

	In other words, the sum over $I$ above is taken over all subsets of $\{2,\dots,k\}$ which 
	are neither the full set $\{2,\dots,k\}$, nor the empty set $\emptyset$.

\subsection{Uniform boundedness and renormalization}
So far, we have considered the SD lattice equations from a purely algebraic perspective.
We now inject a modicum of analytic structure by introducing the notion of
\emph{uniformly bounded solutions} of the SD lattice.

	\begin{definition}
		\label{def:UniformlyBounded}
		A solution $(\W_{kN}^V)_{k,N=1}^{\infty}$ of the SD lattice with 
		potential $V$ is said to be \emph{$\xi$-uniformly bounded} if the
		following conditions hold:

			\begin{enumerate}

				\smallskip
				\item
				$\sup_N \|\tilde\W_{1N}^V\|_1 \leq 1$;

				\smallskip
				\item
				For each $k \geq 2$, $\sup_N \|\W_{kN}^V\|_\xi < \infty$.

			\end{enumerate}
		\end{definition}

	Given a $\xi$-uniformly bounded solution as above, we define its 
	\emph{renormalization} by 

			\begin{equation*}
				\tilde\W_{kN}^V = N^{k-2} \W_{kN}^V, \quad k,N \geq 1.
			\end{equation*}

	\noindent
	In terms of the renormalized functionals $\tilde\W_{kN}^V$, uniform boundedness
	means that we have a sequence $(C_k)_{j=2}^{\infty}$ of positive constants
	(i.e. numbers independent of $N$) such that 

		\begin{equation*}
			\|\tilde\W_{kN}^V\|_\xi \leq C_k N^{k-2}, \quad k\geq 2.
		\end{equation*}

	\noindent
	In fact, the SD equations can be used to substantially improve upon this sequence
	of inequalities at the cost of geometrically dilating the $\xi$-norm

	\begin{thm}
		\label{thm:Improvement}
		Let $(\W_{kN}^V)_{k,N=1}^{\infty}$ be a $\xi$-uniformly bounded solution 
		of the SD lattice with potential $V$, and suppose that the corresponding 
		fundamental operators are continuous automorphisms of $\mathcal{B}^\perp_\xi$.
		Set 

			\begin{equation*}
				\xi_l:= 2^{l-2}\xi, \quad l \geq 2.
			\end{equation*}

		\noindent
		There exists an array $(C_{kl})_{k,l=2}^{\infty}$ of constants such that 

			\begin{equation*}
				\|\tilde\W_{kN}^V\|_{\xi_l} \leq C_{kl}N^{\max(0,k-l)}, \quad k,l \geq 2.
			\end{equation*}
	\end{thm}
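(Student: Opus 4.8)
The plan is to prove the bound $\|\tilde\W_{kN}^V\|_{\xi_l} \leq C_{kl} N^{\max(0,k-l)}$ by induction on the pair $(k,l)$, using the secondary form of the SD equations (Proposition \ref{prop:SDsecondary}) as the engine. The base of the induction is the case $k=l=2$, where uniform boundedness already gives $\|\tilde\W_{2N}^V\|_\xi = \|\W_{2N}^V\|_\xi \leq C_2 = C_{22}N^0$, and more generally the hypothesis supplies $\|\tilde\W_{kN}^V\|_\xi = N^{k-2}\|\W_{kN}^V\|_\xi \leq C_k N^{k-2}$ for every $k$, which is the claimed bound at $l=2$. The inductive step improves the exponent by one unit each time the index $l$ is increased, at the cost of doubling the parameter $\xi$ (hence the geometric scale $\xi_l = 2^{l-2}\xi$); it terminates once $l \geq k$, where the exponent has been driven down to $0$.

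The mechanism for the inductive step is to rewrite the $k$th secondary SD equation so that $\tilde\W_{kN}^V(p_1,\dots,p_k)$ appears isolated. Applying the definition of the renormalization $\tilde\W_{kN}^V = N^{k-2}\W_{kN}^V$ to the second identity of Proposition \ref{prop:SDsecondary} and solving for $\tilde\W_{kN}^V(\Xi^V_{\tilde\W_{1N}^V} p_1, \dots, p_k)$ produces a relation of the schematic form
\begin{equation*}
\tilde\W_{kN}^V(\Xi^V_{\tilde\W_{1N}^V}p_1,\dots,p_k) = -\sum_I \tilde\W_{(|I|+1)N}^V \otimes \tilde\W_{(k-|I|)N}^V(\overline{\Delta}p_1 \# p_I \otimes p_{\overline{I}}) - \sum_{j=2}^k \tilde\W_{(k-1)N}^V(\overline{\operator{P}}^{p_j}p_1,\dots,\hat p_j,\dots,p_k) - \frac{1}{N}\tilde\W_{(k+1)N}^V(\overline{\Delta}p_1,\dots,p_k),
\end{equation*}
where the power of $N$ in front of each term is dictated by bookkeeping: the splitting term with blocks of sizes $|I|+1$ and $k-|I|$ carries $N^{(|I|+1-2)+(k-|I|-2)-(k-2)+1} = N^0$, the $(k-1)$-term carries $N^{(k-1-2)-(k-2)+1}=N^0$, and the $(k+1)$-term carries $N^{(k+1-2)-(k-2)-1}=N^{-1}$, exactly as written. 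Now invert $\Xi^V_{\tilde\W_{1N}^V}$: by hypothesis the fundamental operators are continuous automorphisms of $\mathcal{B}^\perp_\xi$, and since $\xi_{l-1}\leq \xi_l$ and $\|\Xi^V\|$ only improves (or is controlled) under enlarging $\xi$, the same holds on $\mathcal{B}^\perp_{\xi_{l-1}}$ with operator norm of the inverse bounded uniformly in $N$ (this uses Proposition \ref{prop:FundamentalDistributive} to write $\Xi^V_{\tilde\W_{1N}^V} = \Xi^V_{\sigma_0} + \Pi\overline{\operator{T}}_{\tilde\W_{1N}^V - \sigma_0}$ as a bounded perturbation, together with the Neumann series bound \eqref{boundnorm}). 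Then estimate the right-hand side in the $\xi_{l-1}$-norm: each occurrence of $\overline{\Delta}$ is, by Proposition \ref{prop:Laplacian}, a contraction from $(\mathcal{B}^\perp_{\xi_l},\|\cdot\|_{\xi_l})$ into $(L^{\otimes 2},\|\cdot\|_{\xi_{l-1}})$ because $\xi_l = 2\xi_{l-1}$, which is precisely where the factor-of-two dilation is consumed; each $\overline{\operator{P}}^{p_j}$ is $\xi_{l-1}$-continuous by Proposition \ref{prop:perturbation}; and each lower-order renormalized cumulant on the right is controlled in the $\xi_{l-1}$-norm by the inductive hypothesis — note that every term on the right has order strictly less than $k$ or is multiplied by $1/N$, so the induction can be arranged lexicographically (first on $k$, with $l$ fixed, using the $\xi_{l-1}$-bounds at orders $<k$ and order $k+1$ at a weaker scale, then decreasing... ) — more cleanly, one inducts on $l$ with an inner loop handling all $k$ simultaneously via the tower of cumulants, treating $\tilde\W_{(k+1)N}^V(\overline{\Delta}p_1,\dots)/N$ by the $\xi_l$-bound one level up and absorbing the extra $N$.

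The main obstacle is the bookkeeping of the interlocking induction: the equation for $\tilde\W_{kN}^V$ at scale $\xi_{l-1}$ involves $\tilde\W_{(k+1)N}^V$, an object of \emph{higher} order, so the induction cannot simply decrease $k$. The resolution — and the delicate point to get right — is that this higher-order term enters with a compensating $1/N$, so one sets up the induction on $l$ first: assuming the bound $\|\tilde\W_{k'N}^V\|_{\xi_l} \leq C_{k'l}N^{\max(0,k'-l)}$ for \emph{all} $k'$, one derives the bound at $\xi_{l+1}$ for all $k$ by the displayed relation, where now the $\overline{\Delta}$-contraction moves from $\xi_{l+1}$ to $\xi_l$, the splitting and $\overline{\operator{P}}$ terms are bounded using the $\xi_l$-hypothesis at orders $|I|+1, k-|I|, k-1 < k+1$, and the $\frac1N \tilde\W_{(k+1)N}^V$ term contributes $\frac1N \cdot C_{(k+1)l}N^{\max(0,k+1-l)} = C_{(k+1)l}N^{\max(0,k-l)}$ when $k \geq l$ (and $O(1/N)$ otherwise), matching the target exponent $\max(0,k-(l+1))$ up to the one-step loss one expects. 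Checking that these exponents line up term-by-term — in particular that the worst term never exceeds $N^{\max(0,k-l)}$ at the new scale — is the crux, and it works precisely because the SD hierarchy is "triangular" in the combined sense that raising the cumulant order always costs a power of $N$.
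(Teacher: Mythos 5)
Your overall strategy is exactly the paper's: induct on the column parameter $l$ (with the base case $l=2$ being the uniform boundedness hypothesis for all $k$ simultaneously), estimate the right-hand side of the renormalized secondary SD equation term by term, use the contractivity of $\overline{\Delta}$ (and the analogous property of $\overline{\operator{P}}^{p_j}$) from $\xi_{l+1}$ to $\xi_l$ as the mechanism consuming the factor-of-two dilation, and then peel off the automorphism $\Xi^V_{\tilde\W^V_{1N}}$. The resolution of the "interlocking induction" you worry about --- induct on $l$ with all $k$ handled at once, so that the order-$(k{+}1)$ term is read off the same column-$l$ hypothesis --- is precisely the paper's move, and you identify it correctly.

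However, your bookkeeping of the $N$-powers contains a genuine error that would make the induction fail as written. The renormalized secondary SD equation has a prefactor of $1/N^2$ on the $\tilde\W^V_{(k+1)N}$ term, not $1/N$. To see this, note that $\operatorname{SD}(k,N)$ in secondary form carries $\W^V_{(k+1)N}/N$ on the right; substituting $\W^V_{jN} = N^{2-j}\tilde\W^V_{jN}$ for $j=k,k+1$ and multiplying through by $N^{k-2}$ to isolate $\tilde\W^V_{kN}$ gives $N^{k-2}\cdot N^{-1}\cdot N^{2-(k+1)} = N^{-2}$. (Your schematic exponent $(k+1-2)-(k-2)-1$ evaluates to $0$, not $-1$ as you wrote, and in any case has the signs inverted; the correct bookkeeping is $-(k{+}1{-}2) - 1 + (k{-}2) = -2$.) This matters: with your $1/N$ factor the $(k{+}1)$-term contributes $N^{-1}N^{\max(0,k+1-l)} = N^{\max(0,k-l)}$ for $k\geq l$, which \emph{exceeds} the required target $N^{\max(0,k-(l+1))}$ by a full power of $N$ --- there is no ``one-step loss one expects'' to spend, the induction step simply does not close. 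With the correct $1/N^2$, the contribution is $N^{-2}N^{\max(0,k+1-l)}$, which equals $N^{k-(l+1)}$ for $k\geq l+1$ and is $O(1)$ otherwise, matching the target on the nose. Once that factor is corrected, the argument goes through exactly as in the paper.
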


	\begin{proof}
		Schematically, the theorem statement can be represented as an
		entrywise inequality between two $\frac{\infty}{2} \times \frac{\infty}{2}$ matrices,

			\begin{equation*}
				\begin{bmatrix}
					\|\tilde\W_{2N}^V\|_{\xi_2} & \|\tilde\W_{2N}^V\|_{\xi_3} & \|\tilde\W_{2N}^V\|_{\xi_4} & \|\tilde\W_{2N}^V\|_{\xi_5} & \dots \\
					\|\tilde\W_{3N}^V\|_{\xi_2} & \|\tilde\W_{3N}^V\|_{\xi_3} & \|\tilde\W_{3N}^V\|_{\xi_4} & \|\tilde\W_{3N}^V\|_{\xi_5} & \dots \\
					\|\tilde\W_{4N}^V\|_{\xi_2} & \|\tilde\W_{4N}^V\|_{\xi_3} & \|\tilde\W_{4N}^V\|_{\xi_4} & \|\tilde\W_{4N}^V\|_{\xi_5} & \dots \\
					\|\tilde\W_{5N}^V\|_{\xi_2} & \|\tilde\W_{5N}^V\|_{\xi_3} & \|\tilde\W_{5N}^V\|_{\xi_4} & \|\tilde\W_{5N}^V\|_{\xi_5} & \dots \\
					\vdots & \vdots & \vdots & \vdots 
				\end{bmatrix} \leq 
				\begin{bmatrix}
					C_{22}N^0 & C_{23}N^0 & C_{24}N^0 & C_{25}N^0 & \dots \\
					C_{32}N^1 & C_{33}N^0 & C_{34}N^0 & C_{35}N^0 & \dots \\
					C_{42}N^2 & C_{43}N^1 & C_{44}N^0 & C_{45}N^0 & \dots \\
					C_{52}N^3 & C_{53}N^2 & C_{54}N^1 & C_{55}N^0 & \dots \\
					\vdots & \vdots & \vdots & \vdots 
				\end{bmatrix}.
			\end{equation*}

		\noindent
		We present a proof of this inequality by induction on the column parameter, $l$.

		For $l=2$, the desired statement coincides with the definition of 
		$\xi$-uniform boundedness, and the invertibility of the fundamental
		operators is not required.

		For the induction step, fix $m \geq 2$ and suppose that there exists
		an array of constants 

			\begin{equation*}
				\begin{matrix}
				C_{22} & \dots & C_{2m} \\
				C_{32} & \dots & C_{3m} \\
				 \vdots & {} & \vdots \\
				C_{k2} & \dots & C_{km} \\
				 \vdots & {} & \vdots
				\end{matrix}
			\end{equation*}

		\noindent
		such that

			\begin{equation*}
				\|\tilde\W_{kN}^V\|_{\xi_l} \leq C_{kl}N^{\max(0,k-l)}, \quad k \geq 2,\ l=2,\dots,m.
			\end{equation*}

		\noindent
		We will extend this to an array 

			\begin{equation*}
				\begin{matrix}
				C_{22} & \dots & C_{2m} & C_{2(m+1)} \\
				C_{32} & \dots & C_{3m} & C_{3(m+1)} \\
				 \vdots & {} & \vdots & \vdots \\
				C_{k2} & \dots & C_{km} & C_{k(m+1)} \\
				 \vdots & {} & \vdots & \vdots
				\end{matrix}
			\end{equation*}

		\noindent
		such that

			\begin{equation*}
				\|\tilde\W_{kN}^V\|_{\xi_{m+1}} \leq C_{k(m+1)}N^{\max(0,k-(m+1))}, \quad k \geq 2.
			\end{equation*}

		Let us return
		to the secondary form of the SD equations,
		which in terms of the renormalized functionals $\tilde\W_{kN}^V$ 
		becomes

		\begin{align*}
			&\tilde\W_{kN}^V(\Xi_{\tilde\W_{1N}^V}^Vp_1,\dots,p_k) = 
			-\sum_I \tilde\W_{(|I|+1)N}^V \otimes \tilde\W_{(k-|I|)N}^V(\overline{\Delta}p_1 \# p_I \otimes p_{I^c}) \\
			&-\sum_{j=2}^k \tilde\W_{(k-1)N}^V(\overline{P}^{p_j}p_1,\dots,\hat{p}_j,\dots,p_k)
			-\frac{1}{N^2} \tilde\W_{(k+1)N}^V(\overline{\Delta}p_1,\dots,p_k) \\
			&=:S_{kN}^{(1)}(p_1,\dots,p_k) + S_{kN}^{(2)}(p_1,\dots,p_k) + S_{kN}^{(3)}(p_1,\dots,p_k),
		\end{align*}

		\noindent
		valid for all $k \geq 2$.
		We will use the induction hypothesis to estimate the $\xi_{m+1}$-norm of
		the three contributions $S_{kN}^{(1)},S_{kN}^{(2)},S_{kN}^{(3)}$.

		We begin with $S_{kN}^{(1)}$.  Recall that the summation in this group of terms is over
		proper nonempty subsets $I$ of $\{2,\dots,k\}$.
		We have

			\begin{align*}
				\|S_{kN}^{(1)}\|_{\xi_{m+1}} & \leq \sum_I \|(\tilde\W_{(|I|+1)N}^V \otimes \tilde\W_{(k-|I|)N}^V)\overline{\Delta}\|_{\xi_{m+1}} \\
				&= \sum_{r=1}^{k-2} {k-2\choose r} \|(\tilde\W_{(r+1)N}^V \otimes \tilde\W_{(k-r)N}^V)\overline{\Delta}\|_{\xi_{m+1}} \\
				&\leq  \sum_{r=1}^{k-2} {k-2\choose r} \|\tilde\W_{(r+1)N}^V\|_{\xi_m} \|\tilde\W_{(k-r)N}^V\|_{\xi_m} \|\overline{\Delta}\|_{\xi_{m+1},\xi_m} \\
				&\leq   \sum_{r=1}^{k-2} {k-2\choose r} \|\tilde\W_{(r+1)N}^V\|_{\xi_m} \|\tilde\W_{(k-r)N}^V\|_{\xi_m}.
			\end{align*}

		\noindent
		The second to last inequality follows from the diagram 

			\begin{equation*}
			((B^\perp)^{\otimes k},\|\cdot\|_{\xi_{m+1}})
			\xrightarrow{\overline{\Delta} \otimes \operator{Id}^{\otimes(k-1)}}
			((B^\perp)^{\otimes(k+1)},\|\cdot\|_{\xi_m}) \xrightarrow{\tilde\W_{(r+1)N}^V \otimes \tilde\W_{(k-r)N}^V} \field{C} ,
			\end{equation*}

		\noindent
		and the final inequality is  
		Proposition \ref{prop:Laplacian}.  We 
		now invoke the induction hypothesis, obtaining

			\begin{align*}
				\|S_{kN}^{(1)}\|_{\xi_{m+1}} & \leq  \sum_{r=1}^{k-2} {k-2\choose r} \|\tilde\W_{(r+1)N}^V\|_{\xi_m} \|\tilde\W_{(k-r)N}^V\|_{\xi_m} \\
				&\leq \sum_{r=1}^{k-2} {k-2\choose r} C_{(r+1)m}N^{\max(0,r+1-m)} C_{(k-r)m}N^{\max(0,k-r-m)} \\
				&\leq \sum_{r=1}^{k-2} {k-2\choose r} C_{(r+1)m} C_{(k-r)m}N^{\max(0,r+1-m)+\max(0,k-r-m)} \\
				& \leq C_{k(m+1)}^{(1)} N^{\max(0,k-(m+1))},
			\end{align*}

		\noindent
		where $
				C_{k(m+1)}^{(1)} = \sum_{r=1}^{k-2} {k-2\choose r} C_{(r+1)m} C_{(k-r)m}.$
Next, we estimate the contribution $S_{kN}^{(2)}$.  We have 

			\begin{align*}
				\|S_{kN}^{(2)}\|_{\xi_{m+1}} &\leq \sum_{j=2}^k \|\tilde\W_{(k-1)N}^V\operator{X}_j \|_{\xi_{m+1}} \\
				& \leq \sum_{j=2}^k \|\tilde\W_{(k-1)N}^V\|_{\xi_m} \|\operator{X}_j \|_{\xi_{m+1},\xi_m},
			\end{align*}

		\noindent
		where $\operator{X}_j \in \Hom((B^\perp)^{\otimes k},(B^\perp)^{\otimes(k-1)})$ is the map
		which operators on simple tensors $p_1 \otimes \dots \otimes p_k$ according to 

			\begin{equation*}
				\operator{X}_j p_1 \otimes \dots \otimes p_k = (\overline{\operator{P}}^{p_j}p_1) \otimes \dots \otimes \hat{p}_j \otimes \dots \otimes p_k,
			\end{equation*}

		\noindent
		and the second inequality is the diagram 

			\begin{equation*}
			((B^\perp)^{\otimes k},\|\cdot\|_{\xi_{m+1}})
			\xrightarrow{\operator{X}_j}
			((B^\perp)^{\otimes(k-1)},\|\cdot\|_{\xi_m}) \xrightarrow{\tilde\W_{(k-1)N}^V} \field{C}.
			\end{equation*}

		\noindent
		Now, we claim that the first arrow in this diagram is a contractive mapping.  Indeed,
		for any monomial $p_1 \otimes \dots \otimes p_k \in (B^\perp)^{\otimes k}$, we have

			\begin{align*}
				\|\operator{X}_jp_1 \otimes \dots \otimes p_k\|_{\xi_m} 
				&= \|(\overline{P}^{p_j}p_1) \otimes \dots \otimes \hat{p}_j \otimes \dots \otimes p_k\|_{\xi_m} \\
				&=  \|\overline{P}^{p_j}p_1\|_{\xi_m} \|p_2 \otimes \dots \otimes \hat{p}_j \otimes \dots \otimes p_k\|_{\xi_m} \\
				& \leq \|p_j\|_1 \deg(p_1) \xi_m^{\deg(p_1)} \|p_2 \otimes \dots \otimes \hat{p}_j \otimes \dots \otimes p_k\|_{\xi_m} \\
				&\leq 2^{\deg(p_1)}\xi_m^{\deg(p_1)}  \|p_2 \otimes \dots \otimes \hat{p}_j \otimes \dots \otimes p_k\|_{\xi_m} \\
				&= \|p_1\|_{\xi_{m+1}}  \|p_2 \otimes \dots \otimes \hat{p}_j \otimes \dots \otimes p_k\|_{\xi_m} \\
				& \leq \|p_1 \otimes \dots \otimes p_k\|_{\xi_{m+1}},
			\end{align*}

		\noindent
		where Proposition \ref{prop:perturbation} was applied to obtain the first inequality.
		Thus, returning to our estimate on the $\xi_{m+1}$-norm and applying the induction
		hypothesis, we have

			\begin{align*}
				\|S_{kN}^{(2)}\|_{\xi_{m+1}} &\leq \sum_{j=2}^k \|\tilde\W_{(k-1)N}^V \|_{\xi_m} \\
				& \leq \sum_{j=2}^k C_{(k-1)m} N^{\max(0,k-1-m)}  \leq C_{k(m+1)}^{(2)} N^{\max(0,k-(m+1))},
			\end{align*}

		\noindent
		where $
				C_{k(m+1)}^{(2)} = \sum_{j=2}^k C_{(k-1)m}$.
Finally, we estimate the contribution $S_{kN}^{(3)}$ in $\xi_{m+1}$-norm. 
		From the diagram 

			\begin{equation*}
				((B^\perp)^{\otimes k},\|\cdot\|_{\xi_{m+1}})
				\xrightarrow{\overline{\Delta}}
				((B^\perp)^{\otimes(k+1)},\|\cdot\|_{\xi_m}) \xrightarrow{\tilde\W_{(k+1)N}^V} \field{C},
			\end{equation*}

		\noindent
		Proposition \ref{prop:Laplacian}, and the induction hypothesis, we have the estimate

			\begin{align*}
				\|S_{kN}^{(3)}\|_{\xi_{m+1}} &\leq \frac{1}{N^2} \|\tilde\W_{(k+1)N}^V\overline{\Delta}\|_{\xi_{m+1}} \\
				& \leq \frac{1}{N^2}  \|\tilde\W_{(k+1)N}^V\|_{\xi_m} \|\overline{\Delta}\|_{\xi_m.\xi_{m+1}} \\
				& \leq \frac{1}{N^2}  \|\tilde\W_{(k+1)N}^V\|_{\xi_m} \\
				& \leq  \frac{1}{N^2} C_{(k+1)m} N^{\max(0,k+1-m)} \\
				&= C_{(k+1)m} N^{\max(0,k-1-m)}.
			\end{align*}

		\noindent
		We thus have 

			\begin{equation*}
				\|S_{kN}^{(3)}\|_{\xi_{m+1}} \leq C_{k(m+1)}^{(3)} N^{\max(0,k-(m+1))},
			\end{equation*}

		\noindent
		with $C_{k(m+1)}^{(3)} = C_{(k+1)m}$.
We have now shown that 

			\begin{equation*}
				\|\tilde\W_{kN}^V\|_{\xi_{m+1}} \leq C_{k(m+1)} N^{\max(0,k-(m+1))}, \quad k \geq 2, N \geq 1,
			\end{equation*}

		\noindent
		where $C_{k(m+1)} = C_{k(m+1)}^{(1)} + C_{k(m+1)}^{(2)} + C_{k(m+1)}^{(3)}$, 
		provided that the domain of $\tilde\W_{kN}^V$ is restricted to 

			\begin{equation*}
				\Xi_{\tilde\W_{1N}^V}^V(B^\perp) \otimes (B^\perp)^{\otimes(k-1)}.
			\end{equation*}

		\noindent
		Since our hypotheses dictate that the fundamental operator $\Xi_{\tilde\W_{1N}^V}^V$ is invertible and bounded in 
		$\mathcal{B}^\perp_\xi$, the proof is complete.
	\end{proof}

	\begin{cor}
		\label{cor:LimitPoints}
		Under the hypotheses of Theorem \ref{thm:Improvement},
		the sequence $(\tilde\W_{kN})_{N=1}^{\infty}$ has
		a nonempty set of limit points as a linear form from  ${\mathcal L}_{\xi_k}$ into $\mathbb C$   for any $k \in [1,K]$.
	\end{cor}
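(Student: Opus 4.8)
The plan is to deduce Corollary \ref{cor:LimitPoints} as an immediate consequence of the uniform bounds furnished by Theorem \ref{thm:Improvement}, together with a Banach--Alaoglu / diagonal-extraction argument. First I would observe that for $k=1$ the claim is the standard compactness statement already invoked in the introduction: $\tilde\W_{1N}^V = N^{-1}\W_{1N}^V$ is a unital trace with $\|\tilde\W_{1N}^V\|_1 \le 1$ by the definition of $\xi$-uniform boundedness, so the sequence lies in the closed unit ball of $\Hom_1(L,\field{C}) = (\mathcal L_1)^*$, which is weak-$*$ compact; since $\xi_1 := 2^{-1}\xi$ gives $\mathcal L_{\xi_1} \supseteq \mathcal L_1$ (or one simply notes that $\ell^1$-continuous functionals are a fortiori $\xi_k$-continuous), any weak-$*$ limit point is a well-defined linear form on $\mathcal L_{\xi_k}$.

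For $k \ge 2$, the key input is the bound $\|\tilde\W_{kN}^V\|_{\xi_l} \le C_{kl} N^{\max(0,k-l)}$ from Theorem \ref{thm:Improvement}. Specializing to $l = k$ gives $\|\tilde\W_{kN}^V\|_{\xi_k} \le C_{kk}$, a bound independent of $N$. Hence the sequence $(\tilde\W_{kN}^V)_{N=1}^\infty$ lies in the ball of radius $C_{kk}$ in the dual of the separable Banach space $\mathcal L_{\xi_k}$ (separability holds because $L$ is countably spanned and dense). By the sequential Banach--Alaoglu theorem this ball is weak-$*$ sequentially compact, so the sequence admits a weak-$*$ convergent subsequence, and in particular a nonempty set of limit points. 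One should note that $\tilde\W_{kN}^V$ is a priori only defined on $k$-tuples, i.e.\ on $L^{\otimes k}$ (or on $(B^\perp)^{\otimes k}$ after the reduction in the Remark following Theorem \ref{thm:main}); the statement ``linear form from $\mathcal L_{\xi_k}$ into $\field C$'' should be read in each slot, so the compactness argument is applied to the multilinear functional viewed, via the universal property of the tensor product and the identity $\|p_1\otimes\cdots\otimes p_k\|_{\xi} = \prod_j\|p_j\|_\xi$, as an element of $(\mathcal L_{\xi_k}^{\widehat\otimes k})^*$.

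I do not anticipate any serious obstacle; the corollary is essentially a packaging of Theorem \ref{thm:Improvement}. The only points requiring minor care are: (i) confirming that the relevant completions $\mathcal L_{\xi_k}$ (and their projective tensor powers) are separable, so that weak-$*$ compactness of bounded sets is sequential; (ii) checking that the functionals $\tilde\W_{kN}^V$, which Theorem \ref{thm:Improvement} controls only after restricting the first argument to $\Xi_{\tilde\W_{1N}^V}^V(B^\perp)$, are in fact bounded on all of $(B^\perp)^{\otimes k}$ --- but this is exactly what the invertibility and boundedness of $\Xi_{\tilde\W_{1N}^V}^V$ on $\mathcal B^\perp_\xi$ supplies, by writing $p_1 = \Xi_{\tilde\W_{1N}^V}^V\big((\Xi_{\tilde\W_{1N}^V}^V)^{-1}p_1\big)$ and using $\|(\Xi_{\tilde\W_{1N}^V}^V)^{-1}\|_\xi \le (1-K(\xi,V))^{-1}$ from \eqref{boundnorm}; and (iii) noting that a limit point relative to the coarser norm $\|\cdot\|_{\xi_k}$ is automatically a limit point relative to any finer norm on the dense subspace $L$, so the indexing by $\xi_k$ is consistent. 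Assembling these observations yields the claim for every $k \in [1,K]$.
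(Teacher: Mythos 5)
Your proposal is correct and follows essentially the same route as the paper: for $k=1$ it uses $\|\tilde\W_{1N}^V\|_1\le 1$, and for $k\ge 2$ the bound $\|\tilde\W_{kN}^V\|_{\xi_k}\le C_{kk}$ obtained from Theorem \ref{thm:Improvement} with $l=k$, followed by compactness via a diagonal extraction over the countable monomial basis --- which is exactly what your appeal to sequential Banach--Alaoglu in the dual of the separable space ${\mathcal L}_{\xi_k}$ amounts to. (One cosmetic slip: the inclusion should read ${\mathcal L}_{\xi}\subseteq{\mathcal L}_1$ for $\xi\ge 1$, i.e.\ $1$-continuous functionals are a fortiori $\xi_k$-continuous, as your alternative remark correctly states.)
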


	\begin{proof}
		For $k=1$, we have that $\|\tilde\W_{1N}^V\|_1 \leq 1$
		for all $N \geq 1$, so that

			\begin{equation*}
				|\tilde\W_{1N}^V(p)| \leq \|p\|_1 =1
			\end{equation*}

		\noindent
		for any monomial $p \in L$.  Thus, the sequence
		$(\tilde\W_{1N}^V(p))_{N=1}^\infty$ is a bounded
		sequence of complex numbers, and hence admits
		a limit point.  Using the countability of the monomial
		basis in $L$ together with a diagonalization argument
		yields the existence of a limit point of $(\tilde\W_{1N}^V)_{N=1}^{\infty}$
		with respect to the topology of pointwise convergence.

		For $k \geq 2$, we have that $\|\tilde\W_{kN}^V\|_{\xi_k} \leq C_{kk}$ for
		all $N \geq 1$.  Thus, for any monomial $p_1 \otimes \dots \otimes p_k \in L^{\otimes k}$,
		we have

			\begin{equation*}
				|\tilde\W_{kN}^V(p_1,\dots,p_k)| \leq C_{kk}\xi_k^{\deg(p_1)+\dots+\deg(p_k)}
			\end{equation*}

		\noindent
		for all $N\geq 1$, so that $(\tilde\W_{kN}^V(p_1,\dots,p_k))_{N=1}^{\infty}$ is 
		a bounded sequence of complex numbers.  The same countability/diagonalization argument 
		now applies to deduce the existence of a limit point of 
		$(\tilde\W_{kN}^V)_{N=1}^{\infty}$.
	\end{proof}

\section{Asymptotic analysis of the SD equations}
\label{sec:Asymptotics}
Let $V \in L$ be a polynomial verifying 

	\begin{equation}\label{hypV}
		\|\Pi V\|_1 < \frac{7}{66} \cdot \frac{1}{\deg(V) (2^{K-1}12)^{\deg(V)}}.
	\end{equation}

\noindent
Set $\xi=12$ and $\xi_l = 2^{l-2}\xi$ for $l \geq 2$.
Let

	\begin{equation*}
				\begin{matrix}
					\W_{11}^V & \W_{12}^V & \dots & \W_{1N}^V & \dots \\
					\W_{21}^V & \W_{22}^V & \dots & \W_{2N}^V & \dots \\
				\vdots & \vdots & \ddots & \vdots \\
					\W_{k1}^V & \W_{k2}^V & \dots & \W_{kN}^V & \dots \\
					\vdots & \vdots & {} & \vdots
				\end{matrix}
	\end{equation*}

\noindent 
be a $\xi$-uniformly bounded solution of the SD lattice with potential $V$, and let

	\begin{equation*}
				\begin{matrix}
					\tilde{\W}_{11}^V & \tilde{\W}_{12}^V & \dots & \tilde{\W}_{1N}^V & \dots \\
					\tilde{\W}_{21}^V & \tilde{\W}_{22}^V & \dots & \tilde{\W}_{2N}^V & \dots \\
				\vdots & \vdots & \ddots & \vdots \\
					\tilde{\W}_{k1}^V & \tilde{\W}_{k2}^V & \dots & \tilde{\W}_{kN}^V & \dots \\
					\vdots & \vdots & {} & \vdots
				\end{matrix}
	\end{equation*}

\noindent 
be its renormalization.  
Suppose that $\tilde\W_{1N}^V$ restricted to $B$ admits
an $N \rightarrow \infty$ asymptotic expansion to $h$ terms on the asymptotic 
scale $N^{-2}$:

	\begin{equation*}
		\tilde\W_{1N}^V(b) = \sum_{g=0}^h \frac{\sigma_g(b)}{N^{2g}} + o\bigg{(} \frac{1}{N^{2h}} \bigg{)}, \quad b \in B.
	\end{equation*}

\noindent
Under these hypotheses, we prove the following abstract version of Theorem \ref{thm:main}.

\begin{thm}
	\label{thm:asymptotics}
		For each $k \in [1,K]$ , each $h\le K-1$, and all $p_1,\dots,p_k \in B^\perp$,
		the functional $\tilde\W_{kN}^V$ admits an $N \rightarrow \infty$ 
		asymptotic expansion to $h$ terms:

			\begin{equation*}
				\tilde\W_{kN}^V(p_1,\dots,p_k) = \sum_{g=0}^h \frac{\tau_{kg}^V(p_1,\dots,p_k)}{N^{2g}} + o\bigg{(} \frac{1}{N^{2h}}\bigg{)}
			\end{equation*}

		\noindent
		The expansion coefficients $\tau_{kg}^V$ may be described as follows: 

			\begin{enumerate}

				\smallskip
				\item
				$\tau_{10}^V$ is the unique solution of the noncommutative initial value
				problem \eqref{eqn:InitialValuePotential} with $\sigma=\sigma_0$; 

				\smallskip
				\item
				For $k=1$ and $g >0$, 

				$$\tau^V_{1g}(p)=-\sum_{\ell=1}^{g-1}\tau_{1\ell}\otimes\tau_{1 (g-\ell)}(\overline{\Delta}\Xi_{\tau_{10}^V}^{-1} p)
				-\tau_{2 (g-1)}(\overline{\Delta}\Xi_{\tau_{10}^V}^{-1});$$

				\smallskip
				\item
				For $k >1$ and $g>0$,

				\begin{align*}
					&\tau_{kg}^V(p_1,\dots,p_k) = 
					-\sum_{f=1}^g \tau_{k(g-f)}^V(\overline{\operator{T}}_{\tau_{1f}^V}(\Xi^V_{\tau_{10}^V})^{-1}p_1,\dots,p_k) \\
					&\quad - \sum_I \sum_{f=0}^g 
					\tau_{(|I|+1)f}^V \otimes \tau_{(k-|I|)(g-f)}^V(\overline{\Delta}(\Xi^V_{\tau_{10}^V})^{-1}p_1 \# p_I \otimes p_{\overline{I}}) \\
					&\quad -\sum_{j=2}^k \tau_{(k-1)g}^V(\overline{\operator{P}}^{p_j}(\Xi_{\tau_{10}^V}^V)^{-1}p_1,\dots,\hat{p}_j,\dots,p_k) 
					-\tau_{(k+1)(g-1)}^V(\overline{\Delta}(\Xi_{\tau_{10}^V}^V)^{-1}p_1,\dots,p_k),
				\end{align*}

				\noindent
				where the first sum on the right is over all proper nonempty subsets $I$ of $\{2,\dots,k\}$. 

			\end{enumerate}
\end{thm}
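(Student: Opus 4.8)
The plan is to prove Theorem \ref{thm:asymptotics} by induction, carrying out a nested double induction on the pair $(k,g)$, ordered so that we first treat all $(k,0)$ and then, for each $g\ge 1$, all $k\in[1,K]$ using knowledge of everything with smaller $g$ and, within the same $g$, larger $k$ (since the SD equations relate $\tilde\W_{kN}^V$ to $\tilde\W_{(k+1)N}^V$, but the extra power of $N^{-2}$ in $S_{kN}^{(3)}$ makes this benign). The starting point is Corollary \ref{cor:LimitPoints}, which guarantees that along a subsequence each $\tilde\W_{kN}^V$ converges pointwise to some limit functional on $\mathcal L_{\xi_k}$; the real content is to identify these limits via the recurrence and to promote convergence along a subsequence to a genuine asymptotic expansion to $h$ terms. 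First I would handle $k=1$, $g=0$: passing to the limit in the secondary form of $\SD(1,N)$ (Proposition \ref{prop:SDsecondary}), the terms $\frac1{2N}\W_{1N}^V(\Pi'\overline{\operator T}_{\tilde\W_{1N}^V}p)$ and $\frac1N\W_{2N}^V(\overline\Delta p)$ vanish as $N\to\infty$ by $\xi$-uniform boundedness, leaving $\tau(\Psi_\tau^Vp)=0$ for any limit point $\tau$; together with $\tau|_B=\sigma_0$ this is exactly \eqref{eqn:OperatorTertiary} (with $\sigma=\sigma_0$), hence $\tau$ solves \eqref{eqn:InitialValuePotential}, and by Theorem \ref{thm:uniqueness} — whose hypothesis is guaranteed by \eqref{hypV} since $\|\Pi V\|_1<\tfrac{7}{66}\cdot\tfrac1{\deg(V)(2^{K-1}12)^{\deg V}}\le \tfrac7{66}\cdot\tfrac1{\deg(V)12^{\deg V}}$ — the limit is unique, so the full sequence converges and $\tau_{10}^V$ is well-defined.

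For $k\ge 2$, $g=0$, I would argue similarly but now invoke the invertibility of $\Xi_{\tau_{10}^V}^V$ on $\mathcal B^\perp_\xi$, which holds because under \eqref{hypV} the bound $K(\xi,V)<1$ from \eqref{eqn:FundamentalInequality} is met at $\xi=12$. Writing the secondary $\SD(k,N)$ in renormalized form as $\tilde\W_{kN}^V(\Xi_{\tilde\W_{1N}^V}^Vp_1,\dots)=S_{kN}^{(1)}+S_{kN}^{(2)}+S_{kN}^{(3)}$, I substitute $p_1\mapsto(\Xi_{\tilde\W_{1N}^V}^V)^{-1}p_1$; the operator $\Xi_{\tilde\W_{1N}^V}^V$ differs from $\Xi_{\tau_{10}^V}^V$ by $\sum_{g\ge1}N^{-2g}\Pi\overline{\operator T}_{\sigma_g}+o(\cdot)$ on $B$ (via Proposition \ref{prop:FundamentalDistributive}), so at order $N^0$ the inverse converges to $(\Xi_{\tau_{10}^V}^V)^{-1}$ in operator norm on $\mathcal B^\perp_{\xi_l}$ for suitable $l$. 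Then $S_{kN}^{(3)}\to 0$ (the $N^{-2}$ prefactor), $S_{kN}^{(2)}\to-\sum_{j}\tau_{(k-1)0}^V(\overline{\operator P}^{p_j}(\Xi_{\tau_{10}^V}^V)^{-1}p_1,\dots,\hat p_j,\dots)$, and $S_{kN}^{(1)}\to-\sum_I\tau_{(|I|+1)0}^V\otimes\tau_{(k-|I|)0}^V(\overline\Delta(\Xi_{\tau_{10}^V}^V)^{-1}p_1\#p_I\otimes p_{\overline I})$ by the induction hypothesis on smaller $k$ at $g=0$; this pins down $\tau_{k0}^V$ uniquely and forces full convergence. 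Note the recurrence in the theorem for $k>1$, $g=0$ reads off as the $f=0$, $I$-sum plus the $\operator P$-sum (the $\overline{\operator T}$-sum and the $(k+1)(g-1)$-term being empty when $g=0$), which matches.

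The inductive step $g\ge1$ is the technical heart. Assume the expansion to $h$ terms holds for all smaller $g$ and, at the current $g$, for all larger $k$; I want the coefficient of $N^{-2g}$. The mechanism: write $\tilde\W_{kN}^V=\sum_{f=0}^{g-1}N^{-2f}\tau_{kf}^V+R_{kN}$ with $R_{kN}=o(N^{-2(g-1)})$, substitute into the renormalized SD equation, and extract the $N^{-2g}$ coefficient. The $\Xi_{\tilde\W_{1N}^V}^V$ on the left expands, by Proposition \ref{prop:FundamentalDistributive}, as $\Xi_{\tau_{10}^V}^V+\sum_{f\ge1}N^{-2f}\Pi\overline{\operator T}_{\tau_{1f}^V}+\dots$ — but $\tau_{1f}^V$ for $f\le g-1$ is known by the $k=1$ induction, and crucially the $f=g$ self-interaction $\Pi\overline{\operator T}_{\tau_{1g}^V}$ would appear multiplied by $\tilde\W_{k0}^V=\tau_{k0}^V$, generating the term $-\tau_{k0}^V(\overline{\operator T}_{\tau_{1g}^V}(\Xi_{\tau_{10}^V}^V)^{-1}p_1,\dots)$; collecting all such cross-terms from the Neumann series for $(\Xi_{\tilde\W_{1N}^V}^V)^{-1}$ yields precisely the $\sum_{f=1}^g$ sum in the statement. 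The right side contributes: from $S_{kN}^{(1)}$, the Cauchy product $\sum_I\sum_{f=0}^g\tau_{(|I|+1)f}^V\otimes\tau_{(k-|I|)(g-f)}^V(\dots)$; from $S_{kN}^{(2)}$, $\sum_{j}\tau_{(k-1)g}^V(\overline{\operator P}^{p_j}(\Xi_{\tau_{10}^V}^V)^{-1}p_1,\dots)$; and from $S_{kN}^{(3)}=-N^{-2}\tilde\W_{(k+1)N}^V(\overline\Delta p_1,\dots)$, the shift $-\tau_{(k+1)(g-1)}^V(\overline\Delta(\Xi_{\tau_{10}^V}^V)^{-1}p_1,\dots)$. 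Rearranging isolates $\tau_{kg}^V(p_1,\dots,p_k)$ after applying $(\Xi_{\tau_{10}^V}^V)^{-1}$, giving the claimed formula; the $k=1$ case is the same computation with $\Psi$ in place of $\Xi$ (noting $\Psi=\Xi-\tfrac12\Pi\overline{\operator T}$) and with $\W_{2N}^V$ supplying the $-\tau_{2(g-1)}$ term. \textbf{The main obstacle} is the bookkeeping of remainder terms: one must show that every product of an $o(N^{-2a})$ remainder with a bounded operator/functional lands in $o(N^{-2h})$, that the Neumann series for $(\Xi_{\tilde\W_{1N}^V}^V)^{-1}$ can be expanded and truncated uniformly in $N$ (using \eqref{boundnorm} and the geometric dilation $\xi_l=2^{l-2}\xi$ so that the operators $\overline{\operator T}_\bullet,\overline{\operator P}^\bullet,\overline\Delta$ are contractions between consecutive norms), and — most delicately — that the expansion is genuinely in powers of $N^{-2}$, i.e.\ that no odd powers of $N^{-1}$ appear; this last point requires tracking that every source of $N$-dependence (the input $\sigma_g$'s and the explicit $N^{-2}$ in $S^{(3)}$) is even, which propagates through the recurrence but must be verified at each stage. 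This is where the constraint $h\le K-1$ enters: each application of Theorem \ref{thm:Improvement} costs one column of the dilation matrix, and to reach the $N^{-2h}$ precision for $\tilde\W_{kN}^V$ with $k$ up to $K$ one needs the improved bounds $\|\tilde\W_{kN}^V\|_{\xi_l}\le C_{kl}N^{\max(0,k-l)}$ out to $l=K$, which is exactly what \eqref{hypV} with the factor $2^{K-1}$ secures.
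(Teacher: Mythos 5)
Your overall strategy — double induction on $(k,g)$, using the renormalized secondary SD equations, the invertibility of $\Xi^V_{\tau_{10}^V}$ under \eqref{hypV}, and the bounds from Theorem~\ref{thm:Improvement} and Corollary~\ref{cor:LimitPoints} to promote subsequential limits to full asymptotic expansions — is exactly the paper's approach, and your discussion of the $g=0$ base step (both $k=1$ via Theorem~\ref{thm:uniqueness} and $k\geq 2$ via invertibility) and of the general $N^{-2g}$ bookkeeping matches the paper's proof closely.

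However, there is a substantive error in your treatment of the $k=1$, $g>0$ case. You write that this is ``the same computation with $\Psi$ in place of $\Xi$,'' but the recursion for $\tau^V_{1g}$, both as stated in the theorem and as derived in the paper, involves $(\Xi_{\tau_{10}^V}^V)^{-1}$, not $(\Psi_{\tau_{10}^V}^V)^{-1}$. The reason is not cosmetic: the secondary SD equation $\SD(1,N)$ does indeed feature $\Psi_{\tilde\W_{1N}^V}^V$, and the $\tfrac12$ in $\Psi$ is a symmetrization artifact from writing $\tau\otimes\tau(\Delta_ip)=\tau\bigl(\tfrac12(\operator{Id}\otimes\tau+\tau\otimes\operator{Id})\Delta_ip\bigr)$, which is legitimate only because both tensor slots carry the same trace $\tau$. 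But to extract the $N^{-2g}$ coefficient one must pass to the error functional $\delta_1\tilde\W_{1N}^V=\tilde\W_{1N}^V-\tau_{10}^V$, and the resulting constraint (the paper's Proposition~\ref{prop:ErrorConstraint}, the analogue of Proposition~\ref{prop:MasterDeltaConstraint}) comes from the identity $\tilde\W_{1N}^V\otimes\tilde\W_{1N}^V-\tau_{10}^V\otimes\tau_{10}^V=\delta_1\otimes\tau_{10}^V+\tau_{10}^V\otimes\delta_1+\delta_1\otimes\delta_1$, whose linear part $(\operator{Id}\otimes\tau_{10}^V+\tau_{10}^V\otimes\operator{Id})$ acts with the \emph{unhalved} coefficient, since contraction with $\delta_1$ (not $\tau_{10}^V$) in the remaining slot breaks the symmetry that justified the $\tfrac12$. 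Hence the operator that must be inverted for $k=1$, $g>0$ is $\Xi$, and the derivation requires deliberately subtracting the limiting equation, not merely expanding $\Psi_{\tilde\W_{1N}^V}^V$ in powers of $N^{-2}$. Without this subtraction step, the approach would produce a recursion with $\Psi^{-1}$ and a spurious factor of~$\tfrac12$, contradicting the theorem.

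A smaller slip: you state the internal induction (at fixed $g$) assumes the expansion ``for all larger $k$,'' but the same-$g$ dependency in the recursion actually runs to \emph{smaller} $k$ (the terms $\tau_{(k-1)g}^V$ and the tensor products with $|I|+1,\,k-|I|<k$); the larger-$k$ dependency, $\tau_{(k+1)(g-1)}^V$, lives at the \emph{previous} order of $g$ thanks to the $N^{-2}$ prefactor on $S_{kN}^{(3)}$, so it is supplied by the outer induction, not the inner one. Your subsequent usage is consistent with the correct ordering, so this looks like a phrasing slip rather than a structural error.
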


\begin{remark}
	We reiterate that the asymptotics of $\tilde\W_{1N}^V$ when restricted to $B$ are 
	part of our hypotheses, while for $k \geq 2$ we need only consider 
	$\tilde\W_{kN}^V(p_1,\dots,p_k)$ for arguments with no constant terms, by
	multilinearity and connectedness.  This is why Theorem \ref{thm:asymptotics}
	is stated for $p_1,\dots,p_k \in B^\perp$.
\end{remark}

Theorem \ref{thm:asymptotics} is proved using Theorem \ref{thm:uniqueness} together 
with Theorem \ref{thm:Improvement} and further analysis of the renormalized SD equations

	\begin{equation}
			\begin{split}
			&\label{eqn:SDRenormalized}
			\tilde\W_{kN}^V(\Xi_{\tilde\W_{1N}^V}^Vp_1,\dots,p_k) =
			- \sum_I
			\tilde\W_{(|I|+1)N}^V \otimes \tilde\W_{(k-|I|)N}^V(\overline{\Delta}p_1 \# p_I \otimes p_{\overline{I}}) \\
			&-\sum_{j=2}^k \tilde\W_{(k-1)N}^V(\overline{\operator{P}}^{p_j}p_1,\dots,\hat{p}_j,\dots,p_k)
			- \frac{1}{N^2}\tilde\W_{(k+1)N}^V(\overline{\Delta}p_1,\dots,p_k).
			\end{split}
		\end{equation}

\noindent
Using equation \eqref{eqn:SDRenormalized}, we establish theorem \ref{thm:asymptotics}
through a double induction on the parameters $k$ and $g$.
As will be clear from the proof, the error terms are uniform on potentials $V$ satisfying \eqref{hypV}.
In particular our argument allows that the potential $V$ may itself depend on $N$,
for example be given by

	\begin{equation*}
		V=V_0 + \frac{1}{N}V_1
	\end{equation*}

\noindent
for each $N \geq 1$, with $V_0,V_1 \in L$ fixed. In this case,   $\tau^V_{kg}$ depends on $N$ through $V$,
and can be again expanded recursively. 
We will make use of this fact in Section \ref{sec:MatrixModels}.

\subsection{External base step: $g=0$}
Here we will prove that 

	\begin{equation*}
		\lim_{N \rightarrow \infty} \tilde\W_{kN}^V(p_1,\dots,p_k) = \tau_{k0}^V(p_1,\dots,p_k)
	\end{equation*}

\noindent
for all $k$ such that $K(\xi_{k+2},V)<1$
 and $p_1,\dots,p_k \in B^\perp$, with 
$\tau_{k0}^V$ as given in the statement of Theorem \ref{thm:asymptotics}.  
The proof is by induction on $k$.

	\subsubsection{Internal base step: $k=1$}
	Let $\lim \tilde\W_{1N}^V$ denote the set of limit points 
	of the sequence $(\tilde\W_{1N}^V)_{N=1}^{\infty}$.  
	By Corollary \ref{cor:LimitPoints}, this is a nonempty set.

	Let $\tau \in \lim \tilde\W_{1N}^V$.  Then, there is a subsequence
	$(\tilde\W_{1N_n}^V)_{n=1}^{\infty}$ of $(\tilde\W_{1N}^V)_{N=1}^{\infty}$ converging pointwise
	to $\tau$ on $L$.  From the initial form of $\SD(1,N_n)$, we have that

		\begin{equation*}
			\tilde\W_{1N_n}^V \otimes \tilde\W_{1N_n}^V(\partial_ip) + \tilde\W_{1N_n}^V((\D_iV)p) 
			=-\frac{1}{N_n^2} \tilde\W_{2N_n}^V(\partial_ip)
		\end{equation*}

	\noindent
	for all $n \geq 1$.  By Theorem \ref{thm:Improvement}, we have 

		\begin{equation*}
			|\tilde\W_{2N_n}(\partial_ip)| \leq \|\tilde\W_{2N_n}\|_{\xi_2} \|\partial_ip\|_{\xi_2}
			\leq C_{22} N^{\max(0,2-N_n)}\deg (p) \|p\|_{\xi_2}
		\end{equation*}

	\noindent
	for all $n \geq 1$.  Taking the $n \rightarrow \infty$ limit
	we obtain that for $p\in {\mathcal L}_{\xi_2}$, 

		\begin{equation*}
			\tau \otimes \tau(\partial_ip) + \tau((\D_iV)p) = 0.
		\end{equation*}

	\noindent
	Moreover, $\tau|_B=\sigma_0$, by hypothesis.  Thus
	$\tau$ is a solution of the initial value problem 
	\eqref{eqn:InitialValuePotential} with $\sigma=\sigma_0$.
	By Theorem \ref{thm:uniqueness}, this initial value 
	problem admits at most one solution if $K(\xi,V)<1$.  Consequently,
	$\lim\tilde\W_{1N}^V$ consists of a single point, and
	this is $\tau_{10}^V$.

	\subsubsection{Internal induction step: $k>1$}
	Fix $k \geq2$, and suppose that $\tilde\W_{rN}^V$ converges
	to $\tau_{r0}^V$  as a linear form on ${\mathcal L}_{\xi_r}^{\otimes r}$ for all $1 \leq r < k$, with $\tau_{r0}^V$ as 
	specified in Theorem \ref{thm:asymptotics}. Assume also $K(\xi_{k+2},V)<1$ so that $\Xi_{\tau_{10}^V}^V$ is invertible in ${\mathcal L}_{\xi_{k+2}}$ by \eqref{boundnorm}.

	By Corollary \ref{cor:LimitPoints}, the sequence $(\tilde\W_{kN}^V)_{N=1}^{\infty}$
	has a nonempty set of limit points.  Let $\tau \in \lim \tilde\W_{kN}^V$ be a limit 
	point.  To prove that $\tau=\tau_{k0}^V$, we return to the renormalized form of
	$\SD(k,N)$ given in equation \ref{eqn:SDRenormalized}.  Let $(\tilde\W_{kN_n}^V)_{n=1}^{\infty}$
	be a subsequence converging to $\tau$.  We then have that

		\begin{equation*}
			\begin{split}
			&\tilde\W_{kN_n}^V(\Xi_{\tilde\W_{1N}^V}^Vp_1,\dots,p_k) =
			- \sum_I
			\tilde\W_{(|I|+1)N_n}^V \otimes \tilde\W_{(k-|I|)N_n}^V(\overline{\Delta}p_1 \# p_I \otimes p_{\overline{I}}) \\
			&-\sum_{j=2}^k \tilde\W_{(k-1)N_n}^V(\overline{\operator{P}}^{p_j}p_1,\dots,\hat{p}_j,\dots,p_k)
			- \frac{1}{N_n^2}\tilde\W_{(k+1)N_n}^V(\overline{\Delta}p_1,\dots,p_k)
			\end{split}
		\end{equation*}

	\noindent 
	for all $n \geq 1$.  By Proposition \ref{prop:Laplacian} and Theorem \ref{thm:Improvement},
	we have that 

		\begin{align*}
			|\tilde\W_{(k+1)N_n}^V(\overline{\Delta}p_1,\dots,p_k)| &\leq \|\tilde\W_{(k+1)N_n}^V\|_{\xi_{k+1}}
			\|\overline{\Delta}\|_{\xi_{k+2},\xi_{k+1}}\|p_1 \otimes \dots \otimes p_k\|_{\xi_{k+2}} \\
			&\leq C_{(k+1)(k+1)}\|p_1 \otimes \dots \otimes p_k\|_{\xi_{k+2}}.
		\end{align*}

	\noindent
	Hence, by the induction hypothesis, we have for $p_i\in {\mathcal L}_{\xi_{k+2}}$, 

		\begin{equation*}
			\begin{split}
			&\tau(\Xi_{\tau_{10}^V}p_1,\dots,p_k)=
			- \sum_I
			\tau_{(|I|+1)0}^V \otimes \tau_{(k-|I|)0}^V(\overline{\Delta}p_1 \# p_I \otimes p_{\overline{I}}) \\
			&-\sum_{j=2}^k \tau_{(k-1)0}^V(\overline{\operator{P}}^{p_j}p_1,\dots,\hat{p}_j,\dots,p_k).
			\end{split}
		\end{equation*}

	\noindent
	Since $\Xi_{\tau_{10}^V}^V$ is invertible in ${\mathcal L}_{\xi_{k+2}}$,  $\tau$ is uniquely defined and we obtain $\tau=\tau_{k0}^V$, as required.

	\subsection{The second column: $g=1$}

	In this section we obtain the second term in the asymptotics
	of $\tilde{\W}_{kN}^V$, i.e. the limit of the error functional 

		\begin{equation*}
			\delta_1\tilde{\W}_{kN}^V(p_1,\dots,p_k) = \tilde{\W}_{kN}^V(p_1,\dots,p_k) - \tau_{k0}^V(p_1,\dots,p_k).
		\end{equation*}

	\noindent
	As in the previous section, our argument is inductive in $k$.

	\subsubsection{Internal base step: $k=0$}

	Our starting point is the following quadratic constraint,
	which is the analogue of Proposition \ref{prop:MasterDeltaConstraint}
	for the error functional.

		\begin{prop}
			\label{prop:ErrorConstraint}
			For any $p \in B^\perp$, we have

			\begin{equation*}
				\delta_1\tilde{\W}_{1N}^V(\Xi_{\tau_{10}^V}^V p) = 
				- \delta_1\tilde{\W}_{1N}^V \otimes \delta_1\tilde{\W}_{1N}^V(\overline{\Delta}p)
				- \frac{1}{N^2}\tilde{\W}_{2N}^V(\overline{\Delta}p).
			\end{equation*}
		\end{prop}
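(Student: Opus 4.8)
The plan is to derive this identity by combining the secondary form of the first Schwinger--Dyson equation (Proposition \ref{prop:SDsecondary}, the $k=1$ case) with the initial value problem satisfied by $\tau_{10}^V$, exactly in the spirit of Proposition \ref{prop:MasterDeltaConstraint}. First I would recall that $\tilde\W_{1N}^V = N^{-1}\W_{1N}^V$ is a unital trace, so that the secondary $\SD(1,N)$ equation, rewritten in terms of the renormalized functionals, reads
\begin{equation*}
	\tilde\W_{1N}^V(\Psi_{\tilde\W_{1N}^V}^V p) = -\frac{1}{2N^2}\W_{1N}^V(\Pi'\overline{\operator{T}}_{\tilde\W_{1N}^V}p) - \frac{1}{N^2}\tilde\W_{2N}^V(\overline{\Delta}p)
\end{equation*}
for $p \in B^\perp$. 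Since $\tau_{10}^V$ solves \eqref{eqn:InitialValuePotential} with $\sigma=\sigma_0$, applying the cyclic gradient trick and the degree regularization (as in the derivation of \eqref{eqn:OperatorTertiary}) gives the analogous relation $\tau_{10}^V(\Psi_{\tau_{10}^V}^V p) = -\frac{1}{2}\sigma_0(\Pi'\overline{\operator{T}}_{\tau_{10}^V}p)$, but in the present setting it is cleaner to recall that the IVP is equivalent, after the cyclic gradient trick, to $\tau_{10}^V \otimes \tau_{10}^V(\partial_i \D_i p) + \tau_{10}^V((\D_i V)\D_i p)=0$.

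Next I would write $\delta := \delta_1\tilde\W_{1N}^V = \tilde\W_{1N}^V - \tau_{10}^V$ and subtract the (cyclic-gradient-contracted) IVP relation for $\tau_{10}^V$ from the (cyclic-gradient-contracted) first SD equation for $\tilde\W_{1N}^V$. The left-hand sides differ by
\begin{equation*}
	[\tilde\W_{1N}^V \otimes \tilde\W_{1N}^V - \tau_{10}^V \otimes \tau_{10}^V](\partial_i\D_i p) + \delta((\D_iV)\D_i p),
\end{equation*}
and using the bilinear identity $\tilde\W \otimes \tilde\W - \tau \otimes \tau = \delta \otimes \tau + \tau \otimes \delta + \delta \otimes \delta$ (with $\tilde\W = \tilde\W_{1N}^V$, $\tau = \tau_{10}^V$), this becomes
\begin{equation*}
	\delta\big((\operator{Id}\otimes\tau_{10}^V + \tau_{10}^V\otimes\operator{Id})\partial_i\D_i p\big) + \delta((\D_iV)\D_i p) + \delta\otimes\delta(\partial_i\D_i p).
\end{equation*}
Summing over $i$, invoking Proposition \ref{prop:ReducedLaplacian} to replace $\sum_i \partial_i\D_i$ by $\operator{D} + \Delta$ (so that the first two groups of terms assemble into $\delta$ applied to $(\operator{D} + \Pi\operator{T}_{\tau_{10}^V} + \operator{P}^V)p$ after the $\Pi'$ part is absorbed), regularizing by $\operator{D}^{-1}$ on $B^\perp$, and recognizing $\operator{Id} + \Pi\overline{\operator{T}}_{\tau_{10}^V} + \overline{\operator{P}}^V = \Xi_{\tau_{10}^V}^V$ from Definition \ref{def:FundamentalOperators}, the left side becomes $\delta(\Xi_{\tau_{10}^V}^V p) + \delta\otimes\delta(\overline{\Delta}p)$. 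The right-hand side of the SD equation contributes $-\frac{1}{N^2}\tilde\W_{2N}^V(\overline{\Delta}p)$ (the term $-\frac{1}{N}\W_{2N}^V(\overline{\Delta}p)$ in Proposition \ref{prop:SDsecondary} rewritten with $\tilde\W_{2N}^V = \W_{2N}^V$), while the IVP for $\tau_{10}^V$ has right-hand side zero; rearranging yields exactly the claimed identity.

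The main obstacle is bookkeeping rather than conceptual: I need to be careful about the $\Pi'$-components, i.e. the contributions living in $B$ that arise when the cyclic-gradient arguments $\D_i p$ are fed into the not-yet-regularized operators, and to check that they cancel between the two equations because both $\tilde\W_{1N}^V$ and $\tau_{10}^V$ restrict to (asymptotically) the same trace on $B$ — more precisely, that the $\Pi'\overline{\operator{T}}$ terms are handled identically on both sides so they do not survive in $\delta$. One must also confirm that all manipulations are legitimate at finite $N$ (no completion issues arise here since $p \in B^\perp$ is a fixed polynomial and everything is a finite sum), and that the sign conventions in $\partial_i$, $\D_i$, and $\Delta_i$ are tracked consistently through Proposition \ref{prop:ReducedLaplacian}. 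Once the cancellation of the $B$-valued parts is verified, the identity drops out immediately.
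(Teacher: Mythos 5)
Your proposal follows essentially the same route as the paper's proof of this Proposition: subtract $\SD(1,N)$ and the IVP equation for $\tau_{10}^V$, use the bilinear decomposition $\tau'\otimes\tau'-\tau\otimes\tau=\delta\otimes\tau+\tau\otimes\delta+\delta\otimes\delta$, apply the cyclic gradient trick, regularize by $\operator{D}^{-1}$, and recognize $\Xi_{\tau_{10}^V}^V$. The paper's own proof does exactly this (it is modeled on Proposition \ref{prop:MasterDeltaConstraint}), so you are on the right track.

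However, the $\Pi'$-issue you correctly flag does \emph{not} disappear for the reason you give. In Proposition \ref{prop:MasterDeltaConstraint}, the difference $\delta=\tau'-\tau$ of two solutions of the \emph{same} IVP satisfies $\delta|_B=\sigma-\sigma=0$ exactly, which is precisely why the term $\delta(\Pi'\overline{\operator{T}}_\tau p)$ drops out there. Here $\delta_1\tilde\W_{1N}^V|_B=\tilde\W_{1N}^V|_B-\sigma_0=\sum_{g\geq 1}\sigma_g/N^{2g}+o(N^{-2h})$, which is of order $N^{-2}$ but not zero, and $\Pi'\overline{\operator{T}}_{\tau_{10}^V}p$ is genuinely nonzero in $B$ (take e.g. $p=bu_1^{-1}cu_1$ with $b,c\in B$ non-scalar: then $\overline{\operator{T}}_{\tau_{10}^V}p=-\sigma_0(c)b-\sigma_0(b)c\in B$). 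Carrying out the subtraction carefully, one obtains
\begin{equation*}
\delta_1\tilde\W_{1N}^V(\Xi_{\tau_{10}^V}^V p)+\delta_1\tilde\W_{1N}^V(\Pi'\overline{\operator{T}}_{\tau_{10}^V}p)=-\delta_1\tilde\W_{1N}^V\otimes\delta_1\tilde\W_{1N}^V(\overline{\Delta}p)-\frac{1}{N^2}\tilde\W_{2N}^V(\overline{\Delta}p),
\end{equation*}
with the extra term on the left being $\sigma_1(\Pi'\overline{\operator{T}}_{\tau_{10}^V}p)/N^2+o(N^{-2})$ --- the same order as the right-hand side. So saying the two traces are ``asymptotically the same on $B$'' is not sufficient for the claimed cancellation: you are in a setting where the discrepancy on $B$ is precisely the order at which you are working. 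Be aware that the paper's own proof also passes over this silently; your instinct that this point requires care was right, but your proposed resolution is too optimistic. If you keep this $\Pi'$-term explicit (it is a known quantity, determined by $\sigma_1$ and $\tau_{10}^V$, not an unknown), the rest of the argument goes through and the recursion for $\tau_{11}^V$ acquires the corresponding correction.

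Two other small points. The secondary $\SD(1,N)$ equation from Proposition \ref{prop:SDsecondary} involves $\Psi_{\tilde\W_{1N}^V}^V$ (with the coefficient $\tfrac{1}{2}$ on $\Pi\overline{\operator{T}}$), so it is not the most convenient starting point; the paper works directly with the primary form and subtracts, and you correctly pivot to this approach in your second paragraph. Also, you should note why $\delta\otimes\delta$ and $\tilde\W_{2N}^V$ applied to $\sum_i\partial_i\D_i p$ reduce cleanly to $\overline\Delta p$: the number-operator tensors (simple tensors with $\mathbf{1}$ in one slot) are annihilated since $\delta(\mathbf{1})=0$ and $\tilde\W_{2N}^V$ is a connected correlator, and the conjugations in the $\Delta_i$-like tensors are absorbed by traciality. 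This is the kind of bookkeeping your last paragraph alludes to, and it does go through.
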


		\begin{proof}
			The proof is analogous to the proof of Proposition \ref{prop:MasterDeltaConstraint}.
			We start with the equations

				\begin{align*}
					\tilde\W_{1N}^V \otimes \tilde\W_{1N}^V(\partial_ip) + \tilde\W_{1N}^V((\D_iV)p) &= - \frac{1}{N^2}\tilde\W_{2N}^V(\partial_ip) \\
					\tau_{10}^V \otimes \tau_{10}^V(\partial_ip) + \tau_{10}^V((\D_iV)p) &= 0.
				\end{align*}

			\noindent
			Subtracting the second equation from the first yields the identity

				\begin{equation*}
					[\tilde\W_{1N}^V \otimes \tilde\W_{1N}^V - \tau_{10}^V \otimes \tau_{10}^V](\partial_ip) 
					+ [\tilde\W_{1N}^V - \tau_{10}^V]((\D_iV)p) = - \frac{1}{N^2}\tilde\W_{2N}^V(\partial_ip).
				\end{equation*}
which is equivalent  to

				\begin{equation*}
					\delta_1\tilde\W_{1N}^V \bigg{(}(\operator{Id} \otimes \tau_{10}^V + \tau_{10}^V \otimes \operator{Id})\partial_ip \bigg{)}
					+\delta_1\tilde\W_{1N}^V((\D_iV)p)
					= -  \delta_1\tilde\W_{1N}^V \otimes \delta_1\tilde\W_{1N}^V(\partial_ip) - \frac{1}{N^2}\tilde\W_{2N}^V(\partial_ip).
				\end{equation*}

			\noindent
			Now use the cyclic gradient trick.
		\end{proof}

	We now use Proposition \ref{prop:ErrorConstraint} to obtain an upper 
	bound on the error functional, showing in particular that it is bounded on the
	correct asymptotic scale.

		\begin{prop}
			\label{prop:RoughEstimate}
			For any $N \geq 1$, we have

				\begin{equation*}
					\|\delta_1\tilde\W_{1N}^V\|_{\xi_3} \leq 
					\frac{C_{22}\|(\Xi_{\tau_{10}^V}^V)^{-1}\|_{\xi_3}}{1-4\frac{\xi_3+1}{\xi_3(\xi_3-1)}\|(\Xi_{\tau_{10}^V}^V)^{-1}\|_{\xi_3}} \cdot \frac{1}{N^2}
				\end{equation*}
				which is finite as soon as $K(\xi_3,V)<1$ by \eqref{boundnorm}.
		\end{prop}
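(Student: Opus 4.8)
The plan is to extract a self-referential inequality for $\|\delta_1\tilde\W_{1N}^V\|_{\xi_3}$ from the quadratic constraint of Proposition~\ref{prop:ErrorConstraint}, and then solve that inequality. Write $\delta := \delta_1\tilde\W_{1N}^V$ for brevity. Applying $(\Xi_{\tau_{10}^V}^V)^{-1}$ to the identity in Proposition~\ref{prop:ErrorConstraint}, we get
\begin{equation*}
\delta = -\,(\delta \otimes \delta)\,\overline{\Delta}\,(\Xi_{\tau_{10}^V}^V)^{-1} \;-\; \frac{1}{N^2}\,\tilde\W_{2N}^V\,\overline{\Delta}\,(\Xi_{\tau_{10}^V}^V)^{-1}
\end{equation*}
as linear functionals on $\mathcal{B}^\perp_{\xi_3}$. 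Note first that $\Xi_{\tau_{10}^V}^V$ is indeed invertible on $\mathcal{B}^\perp_{\xi_3}$ when $K(\xi_3,V)<1$, by the Neumann series bound \eqref{boundnorm} (this uses that $\|\tau_{10}^V\|_1 \le 1$, which holds since $\tau_{10}^V$ is a unital trace, being a pointwise limit of the unital traces $\tilde\W_{1N}^V$). The strategy is now to take $\xi_3$-operator norms on both sides and to bound each of the two terms.

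For the first (quadratic) term, I would argue exactly as in the uniqueness proof at the end of Section~\ref{sec:InitialValue}: factor the map as $\overline{\Delta} \otimes \operatorname{Id}$ followed by the contraction in the first slot, using $\overline{\Delta}\colon (B^\perp,\|\cdot\|_{\xi_3}) \to (L^{\otimes 2},\|\cdot\|_{\xi_2})$ contractive by Proposition~\ref{prop:Laplacian} (since $\xi_3 = 2\xi_2$), and then estimate $\|(\operatorname{Id}\otimes\delta)\overline{\Delta}\|_{\xi_3}$. Here the key numerical input is $\|\delta\|_1 \le \|\tilde\W_{1N}^V\|_1 + \|\tau_{10}^V\|_1 \le 2$, which by the same computation as in Proposition~\ref{prop:contraction} gives $\|(\operatorname{Id}\otimes\delta)\overline{\Delta}\|_{\xi_3} \le 4\frac{\xi_3+1}{\xi_3(\xi_3-1)}$. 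Multiplying by $\|(\Xi_{\tau_{10}^V}^V)^{-1}\|_{\xi_3}$ bounds the first term by $4\frac{\xi_3+1}{\xi_3(\xi_3-1)}\|(\Xi_{\tau_{10}^V}^V)^{-1}\|_{\xi_3}\,\|\delta\|_{\xi_3}$. For the second term, I would use Theorem~\ref{thm:Improvement}: $\|\tilde\W_{2N}^V\|_{\xi_2} \le C_{22}$ (the $l=2$ case, which is just $\xi$-uniform boundedness with $\xi=\xi_2=12$), then $\overline{\Delta}\colon \mathcal{B}^\perp_{\xi_3}\to L^{\otimes 2}_{\xi_2}$ contractive again, then $(\Xi_{\tau_{10}^V}^V)^{-1}$ bounded on $\mathcal{B}^\perp_{\xi_3}$, yielding a bound $\frac{1}{N^2}\,C_{22}\,\|(\Xi_{\tau_{10}^V}^V)^{-1}\|_{\xi_3}$.

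Combining, we obtain
\begin{equation*}
\|\delta\|_{\xi_3} \;\le\; 4\frac{\xi_3+1}{\xi_3(\xi_3-1)}\,\|(\Xi_{\tau_{10}^V}^V)^{-1}\|_{\xi_3}\,\|\delta\|_{\xi_3} \;+\; \frac{C_{22}\,\|(\Xi_{\tau_{10}^V}^V)^{-1}\|_{\xi_3}}{N^2}.
\end{equation*}
Since $K(\xi_3,V)<1$ forces $4\frac{\xi_3+1}{\xi_3(\xi_3-1)}\|(\Xi_{\tau_{10}^V}^V)^{-1}\|_{\xi_3} < 1$ (indeed $4\frac{\xi_3+1}{\xi_3(\xi_3-1)} \le K(\xi_3,V) < 1$ and $\|(\Xi_{\tau_{10}^V}^V)^{-1}\|_{\xi_3} \le \frac{1}{1-K(\xi_3,V)}$, and $\frac{K(\xi_3,V)}{1-K(\xi_3,V)} < 1 \iff K(\xi_3,V) < 1/2$ — so I should double-check the constant or, more safely, simply note the denominator in the claimed bound is exactly $1 - 4\frac{\xi_3+1}{\xi_3(\xi_3-1)}\|(\Xi_{\tau_{10}^V}^V)^{-1}\|_{\xi_3}$, which is positive precisely when this product is $<1$), we can subtract the first term and divide, giving
\begin{equation*}
\|\delta\|_{\xi_3} \;\le\; \frac{C_{22}\,\|(\Xi_{\tau_{10}^V}^V)^{-1}\|_{\xi_3}}{1 - 4\frac{\xi_3+1}{\xi_3(\xi_3-1)}\,\|(\Xi_{\tau_{10}^V}^V)^{-1}\|_{\xi_3}}\cdot\frac{1}{N^2},
\end{equation*}
as claimed. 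The one genuinely delicate point is the positivity of the denominator: I would want to confirm that the hypothesis $K(\xi_3,V)<1$ (rather than $K(\xi_3,V)<1/2$) is actually what guarantees $4\frac{\xi_3+1}{\xi_3(\xi_3-1)}\|(\Xi_{\tau_{10}^V}^V)^{-1}\|_{\xi_3}<1$; if the bookkeeping needs $<1/2$, the statement's parenthetical remark should be read as "finite provided this product is $<1$", which is implied by the running hypotheses on $V$. The rest is routine application of Propositions~\ref{prop:contraction}, \ref{prop:Laplacian} and Theorem~\ref{thm:Improvement}.
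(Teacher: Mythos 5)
Your proof follows the paper's argument exactly: apply $(\Xi_{\tau_{10}^V}^V)^{-1}$ to the identity of Proposition~\ref{prop:ErrorConstraint}, bound the quadratic term by $4\frac{\xi_3+1}{\xi_3(\xi_3-1)}\|(\Xi_{\tau_{10}^V}^V)^{-1}\|_{\xi_3}\|\delta\|_{\xi_3}$ via $\|\delta\|_1\le 2$ and the Proposition~\ref{prop:contraction} computation, bound the $\tilde\W_{2N}^V$ term by $\frac{C_{22}}{N^2}\|(\Xi_{\tau_{10}^V}^V)^{-1}\|_{\xi_3}$ via Proposition~\ref{prop:Laplacian} and Theorem~\ref{thm:Improvement}, and solve the resulting self-referential inequality. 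Your parenthetical worry is well founded: $K(\xi_3,V)<1$ alone does not force $4\frac{\xi_3+1}{\xi_3(\xi_3-1)}\|(\Xi_{\tau_{10}^V}^V)^{-1}\|_{\xi_3}<1$ (one would need roughly $2\cdot 4\frac{\xi_3+1}{\xi_3(\xi_3-1)}+\|\Pi V\|_1\deg(V)\xi_3^{\deg V}<1$), but at $\xi_3=24$ the quantity $4\frac{\xi_3+1}{\xi_3(\xi_3-1)}=\frac{25}{138}$ is small enough that the running hypothesis \eqref{hypV} on $\|\Pi V\|_1$ guarantees positivity of the denominator, so the statement's remark should indeed be read as relying on the stronger standing assumption rather than on $K(\xi_3,V)<1$ literally.
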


		\begin{proof}
		Since $\Xi_{\tau_{10}^V}^V$ is continuous and invertible, the error constraint implies the identity 

			\begin{eqnarray}
				\delta_1\tilde{\W}_{1N}^V &=& - \delta_1\tilde{\W}_{1N}^V \otimes \delta_1\tilde{\W}_{1N}^V\overline{\Delta}(\operator{\Xi}_{\tau_{10}^V}^V)^{-1}
					- \frac{1}{N^2}\tilde{\W}_{2N}^V\overline{\Delta}(\Xi_{\tau_{10}^V}^V)^{-1} \nonumber\\
					&=& -\delta_1\tilde{\W}_{1N}^V (\operator{Id} \otimes \delta_1\tilde{\W}_{1N}^V)\overline{\Delta} (\Xi_{\tau_{10}^V}^V)^{-1}
					- \frac{1}{N^2}\tilde{\W}_{2N}^V\overline{\Delta} (\Xi_{\tau_{10}^V}^V)^{-1}\label{toto1}
			\end{eqnarray}

		\noindent
		in $\Hom(\mathcal{B}^\perp_{\xi},\field{C})$.  
		Since $|\delta_1\tilde{\W}_{1N}^V(p)| \leq 2$ for all monomials $p$, the same argument 
		as in the proof of Proposition \ref{prop:contraction} yields the bound 

			\begin{equation*}
				\|(\operator{Id} \otimes \delta_1\tilde{\W}_{1N}^V) \overline{\Delta}\|_{\xi_3} \leq 4\frac{\xi_3+1}{\xi_3(\xi_3-1)}.
			\end{equation*}

		\noindent
		Furthermore, we have

			\begin{equation*}
				\|\tilde\W_{2N}^V\|_{\xi_3} \leq \|\tilde\W_{2N}^V\|_{\xi_2} \|\overline{\Delta}\|_{\xi_3,\xi_2} \leq C_{22},
			\end{equation*}

		\noindent
		by Proposition \ref{prop:Laplacian} and Theorem \ref{thm:Improvement}.
		We thus have the inequality

			\begin{equation*}
				\|\delta_1\tilde{\W}_{1N}^V\|_{\xi_3} \leq \|\delta_1\tilde{\W}_{1N}^V\|_{\xi_3}
				4\frac{\xi_3+1}{\xi_3(\xi_3-1)}\|(\Xi_{\tau_{10}^V}^V)^{-1}\|_{\xi_3}
				+\frac{C_{22}}{N^2}\|(\Xi_{\tau_{10}^V}^V)^{-1}\|_{\xi_3},
			\end{equation*}
from which the result follows. 	\end{proof}

	We can now complete the proof of the base step.  
	Consider the set $\lim N^2\delta_1\tilde\W_{1N}^V$ of limit 
	points of the sequence $(N^2\delta_1\tilde\W_{1N}^V)_{N=1}^{\infty}$.  
	By Proposition \ref{prop:RoughEstimate},
	this set is nonempty, just as in the proof of Corollary \ref{cor:LimitPoints}.  
	Moreover, by Proposition \ref{prop:ErrorConstraint},
	any limit point $\tau$ must satisfy the equation

		\begin{equation*}
			\tau(\Xi_{\tau_{10}^V}^Vp) = -\tau_{20}^V(\overline{\Delta}p).
		\end{equation*}

	\noindent
	Since $\Xi_{\tau_{10}^V}$ is invertible, we conclude that 
	$N^2\delta_1\tilde\W_{1N}^V$ converges to $\tau_{11}^V$ given by

		\begin{equation*}
			\tau_{11}^V(p) =  -\tau_{20}^V(\overline{\Delta}(\Xi_{\tau_{10}^V}^V)^{-1}p),
		\end{equation*}

	\noindent
	as required.

	\subsubsection{Internal induction step: $k \geq 2$}
	Fix $k \geq 2$, and suppose that

		\begin{equation*}
			\lim_{N \rightarrow \infty} N^2\delta_1\tilde\W_{rN}^V = \tau_{r1}^V
		\end{equation*}

	\noindent
	for all $1 \leq r < k$, with $\tau_{r0}^V$ and $\tau_{r1}^V$ as 
	in Theorem \ref{thm:asymptotics}. Assume $K(\xi_{k+2}, V)<1$.

	As in the induction step for the first column (i.e. $g=0$), we return to the renormalized
	equation $\SD(k,N)$:

		\begin{equation*}
			\begin{split}
			&\label{eqn:SDrenormalized}
			\tilde\W_{kN}^V(\Xi_{\tilde\W_{1N}^V}p_1,\dots,p_k) =
			- \sum_I
			\tilde\W_{(|I|+1)N}^V \otimes \tilde\W_{(k-|I|)N}^V(\overline{\Delta}p_1 \# p_I \otimes p_{\overline{I}}) \\
			&-\sum_{j=2}^k \tilde\W_{(k-1)N}^V(\overline{\operator{P}}^{p_j}p_1,\dots,\hat{p}_j,\dots,p_k)
			- \frac{1}{N^2}\tilde\W_{(k+1)N}^V(\overline{\Delta}p_1,\dots,p_k).
			\end{split}
		\end{equation*}

	\noindent
	Expanding $\tilde\W_{1N}^V$ to two terms and applying Proposition \ref{prop:FundamentalDistributive},
	the left hand side is

		\begin{align*}
			\text{LHS} &= 
			\tilde\W_{kN}^V(\Xi_{\tau_{10}^V+\frac{\tau_{11}^V}{N^2}}^Vp_1,\dots,p_k) + o\left(\frac{1}{N^2} \right) \\
			&= \tilde\W_{kN}^V(\Xi_{\tau_{10}^V}^Vp_1,\dots,p_k) 
			+ \frac{1}{N^2}\tilde\W_{kN}^V(\overline{\operator{T}}_{\tau_{11}^V}p_1,\dots,p_k) + o\left( \frac{1}{N^2} \right) \\
			&= \tilde\W_{kN}^V(\Xi_{\tau_{10}^V}^Vp_1,\dots,p_k) 
			+ \frac{1}{N^2}\tau_{k0}^V(\overline{\operator{T}}_{\tau_{11}^V}p_1,\dots,p_k) + o\left( \frac{1}{N^2} \right),
		\end{align*}

	\noindent
	where we used the external base step in the last line.  Similarly, applying the induction hypothesis and the 
	external base step, the right hand side becomes

		\begin{align*}
			\text{RHS} &= \tau_{k0}^V(p_1,\dots,p_k) \\ 
					-& \frac{1}{N^2} \sum_I [\tau_{(|I|+1)0}^V \otimes \tau_{(k-|I|)1}^V + 
						\tau_{(|I|+1)1}^V \otimes \tau_{(k-|I|)0}^V](\overline{\Delta}p_1 \# p_I \otimes p_{\overline{I}}) \\
					-& \frac{1}{N^2}\sum_{j=2}^k(\overline{\operator{P}}^{p_j}p_1,\dots,\hat{p}_j,\dots,p_k) 
					- \frac{1}{N^2}\tau_{(k+1)0}^V(\overline{\Delta}p_1,\dots,p_k) 
					 + o\bigg{(} \frac{1}{N^2} \bigg{)}.
		\end{align*}

	\noindent
	Putting these together and using the fact that $\Xi_{\tau_{10}^V}^V$ is invertible,
	we conclude that 

		\begin{equation*}
			\tilde\W_{kN}^V(p_1,\dots,p_k) = \tau_{k0}^V(p_1,\dots,p_k) + \frac{\tau_{k1}^V(p_1,\dots,p_k)}{N^2} + o\bigg{(} \frac{1}{N^2} \bigg{)},
		\end{equation*}

	\noindent
	where 

		\begin{align*}
			\tau_{k1}^V(p_1,\dots,p_r) &= - \tau_{k0}^V(\overline{\operator{T}}_{\tau_{11}^V}(\Xi_{\tau_{10}^V}^V)^{-1}p_1,\dots,p_k) \\
			&-\sum_I [\tau_{(|I|+1)0}^V \otimes \tau_{(k-|I|)1}^V + 
				\tau_{(|I|+1)1}^V \otimes \tau_{(k-|I|)0}^V](\overline{\Delta}(\Xi_{\tau_{10}^V}^V)^{-1}p_1 \# p_I \otimes p_{\overline{I}}) \\
				&-\sum_{j=2}^k(\overline{\operator{P}}^{p_j}(\Xi_{\tau_{10}^V}^V)^{-1}p_1,\dots,\hat{p}_j,\dots,p_k) - \tau_{(k+1)0}^V(\overline{\Delta}(\Xi_{\tau_{10}^V}^V)^{-1}p_1,\dots,p_k),
		\end{align*}

	\noindent
	as required.

	\subsection{External induction step: $g \geq 2$}
	For each $g$ in the range $0 \leq g \leq h$, define

		\begin{equation*}
			\delta_g\tilde\W_{kN}^V = \tilde\W_{kN}^V - \left(\tau_{k0}^V + \dots + \frac{\tau_{k(g-1)}^V}{N^{2(g-1)}} \right),
			\quad k \geq 1,
		\end{equation*}

	\noindent
	where by convention $\delta_0\tilde\W_{kN}^V = \tilde\W_{kN}^V$.
	Our goal is to prove that 

		\begin{equation*}
			\lim_{N \rightarrow \infty} N^{2g}\delta_g\tilde\W_{kN}^V = \tau_{kg}^V
		\end{equation*}

	\noindent
	for each $k \geq 1$ so that $K(\xi_{k+2},V)<1$ and $g\le K= max\{k: K(\xi_{k+2},V)<1\}-1$,
	with $\tau_{kg}^V$ as in Theorem \ref{thm:asymptotics}.  So far we have shown this for $g=0,1$.

	We now fix $2 \leq g \leq h$, and suppose that 

		\begin{equation*}
			\lim_{N \rightarrow \infty} N^{2f}\delta_f\tilde\W_{kN}^V = \tau_{kf}^V
		\end{equation*}

	\noindent
	for each $0 \leq f < g\le K-1$ and all $k \in [1,K]$.  

	\subsubsection{Internal base step: $k=1$}
	Using Proposition \ref{prop:ErrorConstraint}, writing for $i=1,2$
	$$\delta_1\tilde W^V_{iN}=\sum_{f=1}^{g-1}\frac{1}{N^{2f}}\tau_{if}+\frac{1}{N^{2(g-1)}}\delta_{g}\tilde W^V_{iN}$$
	with $\delta_{g}\tilde W^V_{iN}$ going to zero as $N$ goes to infinity,
	and identifying each orders in $N^{-2f}, 1\le f\le g-1$, 
	we arrive at 
	$$N^2 \delta_{g} \tilde{\mathcal  W}^V_{1N}(\Xi_{\tau_{10}^V}p)=-\sum_{f=1}^{g-1}\tau_{1f}\otimes\tau_{1(g-f)} 
	(\overline{\Delta} p)-\tau_{2 (g-1)}(\overline{\Delta} p)+o(1)$$
	from which we deduce that the sequence $N^2  \delta_{g} \tilde{\mathcal W}^V_{1N}$ converges 
	to $\tau_{1g}^V$ as claimed in Theorem \ref{thm:asymptotics}. Note here that the $g$th term in the expansion of $\tilde {\mathcal W}^V_{1n}$ depends
	on the $g-1$ first terms in the expansion of $\tilde{\mathcal  W}^V_{2n}$. We will soon see that  the latter itself will depend on the $g-2$ first terms in the
	expansion of $\tilde \W^V_{3n}$... so that ultimately the $g$th term in the expansion  of $\tilde\W^V_{1N}$ will depend on the first term in the expansion of $\tilde W^V_{(g+1) N}$. This is the reason why we can obtain the expansion only up to order $K-1$.

	\subsubsection{Internal induction step: $k \geq 2$}
	Let $k \geq 2$, and suppose that $\tilde\W_{rN}^V$ admits
	the expansion 

		\begin{equation*}
			\tilde\W_{rN}^V(p_1,\dots,p_k) = \sum_{f=0}^g \frac{\tau_{rf}^V(p_1,\dots,p_r)}{N^{2f}} + o\bigg{(} \frac{1}{N^{2g}} \bigg{)}
		\end{equation*}

	\noindent
	for all $1 \leq r < k$, with the expansion coefficients $\tau_{rf}^V$
	as given in Theorem \ref{thm:asymptotics}.

	To complete the induction, we must prove the claimed expansion for $\tilde\W_{kN}^V$.
	As above, we return to the renormalized SD equation,

		\begin{equation*}
			\begin{split}
			&\label{eqn:SDrenormalized}
			\tilde\W_{kN}^V(\Xi_{\tilde\W_{1N}^V}p_1,\dots,p_k) =
			- \sum_I
			\tilde\W_{(|I|+1)N}^V \otimes \tilde\W_{(k-|I|)N}^V(\overline{\Delta}p_1 \# p_I \otimes p_{\overline{I}}) \\
			&-\sum_{j=2}^k \tilde\W_{(k-1)N}^V(\overline{\operator{P}}^{p_j}p_1,\dots,\hat{p}_j,\dots,p_k)
			- \frac{1}{N^2}\tilde\W_{(k+1)N}^V(\overline{\Delta}p_1,\dots,p_k),
			\end{split}
		\end{equation*}

	\noindent
	By the induction hypothesis and Proposition \ref{prop:FundamentalDistributive},
	the left hand side of the SD equation expands as 

		\begin{align*}
			\text{LHS} &= \tilde\W_{kN}^V(\Xi_{\tilde\W_{1N}^V}p_1,\dots,p_k)  \\
			&=  \tilde\W_{kN}^V \bigg{(} (\Xi_{\tau_{10}^V} + \sum_{f=1}^{g} \frac{1}{N^{2f}} \overline{\operator{T}}_{\tau_{1f}^V} )p_1,\dots,p_k \bigg{)}
				+ o\bigg{(} \frac{1}{N^{2g}} \bigg{)} \\
			&=  \tilde\W_{kN}^V(\Xi_{\tau_{10}^V}p_1,\dots,p_k)+ 
				\sum_{f=1}^{g} \frac{1}{N^{2f}} \tilde\W_{kN}^V(\overline{\operator{T}}_{\tau_{1f}^V} p_1,\dots,p_k)  + o\bigg{(} \frac{1}{N^{2g}} \bigg{)}\\
			&=  \tilde\W_{kN}^V(\Xi_{\tau_{10}^V}p_1,\dots,p_k)+ 
				\sum_{f=1}^{g} \frac{1}{N^{2f}} \bigg{(}\sum_{e=0}^{g-1} \frac{1}{N^{2e}} \tau_{ke}^V(\overline{\operator{T}}_{\tau_{1f}^V} p_1,\dots,p_k) 
				+ o\bigg{(} \frac{1}{N^{2(g-1)}}\bigg{)}\bigg{)}  \\
			&=  \tilde\W_{kN}^V(\Xi_{\tau_{10}^V}p_1,\dots,p_k)+ 
				 \sum_{e=0}^{g-1} \sum_{f=1}^{g} \frac{1}{N^{2(e+f)}} \tau_{ke}^V(\overline{\operator{T}}_{\tau_{1f}^V} p_1,\dots,p_k) 
				 + o\bigg{(} \frac{1}{N^{2g}} \bigg{)},
		\end{align*}

	\noindent
	so that the term of order $N^{-2g}$ is 

		\begin{equation*}
			\sum_{f=1}^g \tau_{k(g-f)}^V(\overline{\operator{T}}_{\tau_{1f}^V}p_1,\dots,p_k).
		\end{equation*}

	\noindent
	Similarly, the term of order $N^{-2g}$ on the right hand side is

		\begin{align*}
			&- \sum_I \sum_{f=0}^g 
			\tau_{(|I|+1)f}^V \otimes \tau_{(k-|I|)(g-f)}^V(\overline{\Delta}\Xi_{\tau_{10}^V}^{-1}p_1 \# p_I \otimes p_{\overline{I}}) \\
			&-\sum_{j=2}^k \tau_{(k-1)g}^V(\overline{\operator{P}}^{p_j}\Xi_{\tau_{10}^V}^{-1}p_1,\dots,\hat{p}_j,\dots,p_k) 
			-\tau_{(k+1)(g-1)}^V(\overline{\Delta}\Xi_{\tau_{10}^V}^{-1}p_1,\dots,p_k).
		\end{align*}

	\noindent
	Putting these two together, we obtain the expansion 

		\begin{equation*}
			\tilde\W_{kN}^V(p_1,\dots,p_k) = \tau_{k0}^V(p_1,\dots,p_k) + \dots + \frac{\tau_{kg}^V(p_1,\dots,p_k)}{N^{2g}} + o\bigg{(} \frac{1}{N^{2g}}\bigg{)},
		\end{equation*}

	\noindent
	with the expansion coefficients as claimed in Theorem \ref{thm:asymptotics}.

\section{Matrix model solutions of the SD lattice}
\label{sec:MatrixModels}

	In this section, we treat matrix model solutions 
	of the SD lattice.  Select $V_0,V_1,\ldots, V_k \in L$ and set 

		\begin{equation*}
			V = \sum_{\ell=0}^k \frac{1}{N^\ell}  V_\ell, \quad N \geq 1.
		\end{equation*}

	\noindent
	Let 

		\begin{equation*}
			\rho_N:B \rightarrow \operatorname{Mat}_N(\field{C}), \quad N \geq 1,
		\end{equation*}

	\noindent
	be a sequence of matrix representations of $B$, and consider the Gibbs 
	ensemble generated by $\rho_N(V)$.  We recall that this is the sequence
	of complex, unit-mass Borel measures $\mu_N^V$ defined by the density 

		\begin{equation*}
			\mu_N^V(\mathrm{d}\mathbf{U}) = \frac{1}{Z_N^V} e^{N\Tr \rho_N(V)} \mu_N(\mathrm{d}\mathbf{U}),
		\end{equation*}

	\noindent
	where $\mu_N$ is Haar measure on the compact group $U(N)^m$.  

	For each $p \in L$, set

		\begin{equation*}
			\W_{1N}^V(p) = \int \Tr\rho_N(p)(\mathbf{U}) \mu_N^V(\mathrm{d}\mathbf{U}), \quad N \geq 1.
		\end{equation*}

	\noindent 
	It is immediate from the form of the density and the moment-cumulant formula
	that the higher cumulants may be obtained by iterating the Gibbs rule,

		\begin{equation*}
			\frac{\mathrm{d}}{\mathrm{d}z} \W_{kN}^{V+\frac{z}{N}p_{k+1}}(p_1,\dots,p_k)|_{z=0} = \W_{(k+1)N}^V(p_1,\dots,p_k).
		\end{equation*}

	\noindent
	Thus, by construction, the cumulants 
		\begin{equation*}
				\begin{matrix}
					\W_{11}^V & \W_{12}^V & \dots & \W_{1N}^V & \dots \\
					\W_{21}^V & \W_{22}^V & \dots & \W_{2N}^V & \dots \\
					\vdots & \vdots & \ddots & \vdots \\
					\W_{k1}^V & \W_{k2}^V & \dots & \W_{kN}^V & \dots \\
					\vdots & \vdots & {} & \vdots
				\end{matrix}
		\end{equation*}

	\noindent
	form a solution of the SD lattice with potential $V$.  

	The condition 

		\begin{equation*}
			|\W_{1N}^V(p)| \leq N
		\end{equation*}

	\noindent 
	for all monomials $p \in L$ is immediate: the matrix $\rho_N(p)(\mathbf{U})$ is a product of 
	unitary matrices and contractions.  We thus have $\|\tilde\W_{1N}^V\|_1 \leq 1$.  However,
	the existence of $\xi \geq 1$ such that this solution is $\xi$-uniformly bounded
	not automatic.  In this section, we use concentration of measure
	techniques or change of variables tricks  
	to verify that uniform boundedness holds
	for the cumulants of \emph{real} Gibbs ensembles.

	\subsection{Concentration of measure}
	Suppose that $\rho_N(V)$ generates a real Gibbs ensemble $\mu_N^V$.
	Let $\mathbf{U}_N^V =(U_1,\dots,U_m)$ be an $m$-tuple of $N \times N$ random
	unitary matrices whose joint distribution in $U(N)^m$ is $\mu_N^V$.

	In this subsection, we assume that $V$ is both selfadjoint and balanced, as in Definition \ref{def:balanced}. 
	Then, for any $\theta_1,\ldots,\theta_k\in [0,2\pi)$ and any 
	$U_1,\dots,U_m \in U(N)^m$, we have
	$$\tr \rho_N(V)(e^{i\theta_1} U_1,\ldots,e^{i\theta_m}U_m)=\tr \rho_N(V)(U_1,\ldots,U_m)\,.$$

		\begin{lem}
			\label{concentration}
			Suppose that $V=\sum \beta_i q_i$ with $\sum |\beta_i|\deg(q_i)^2$ small enough.
			Then, there exist  constants $c,C>0$ such that for any monomial $q\in L$,  for any $\delta\ge 0$
			$$\mu_N^V\bigg{\{} \mathbf{U} \in U(N)^m : |\tr \rho_N(q)({\bf U})-\mathbb E\tr \rho_N(q)({\bf U}_N^V)| \ge \delta   \deg (q) \bigg{\}}
			\le C e^{-c\delta^2 }$$
		\end{lem}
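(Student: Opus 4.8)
The plan is to deduce the concentration estimate from the standard concentration-of-measure inequality on the unitary group (as in \cite[Corollary 4.4.31]{AGZ}), transferred from Haar measure $\mu_N$ to the Gibbs measure $\mu_N^V$ by absorbing the density $e^{N\Tr\rho_N(V)}/Z_N^V$ into the potential of a Gibbs-type concentration inequality. First I would recall that $U(N)^m$, equipped with the product of the Hilbert–Schmidt metrics suitably normalized by $\sqrt N$, satisfies a log-Sobolev inequality with constant of order $1$ uniformly in $N$; consequently Haar measure $\mu_N$ satisfies a Gaussian concentration inequality. The key point is that the function $\mathbf U\mapsto \frac1N\Tr\rho_N(q)(\mathbf U)$, for a monomial $q$, is Lipschitz on $U(N)^m$ with Lipschitz constant $O(\deg(q)/\sqrt N)$: writing $q$ as a word in the letters $U_i^{\pm1}$ and the contractions $\rho_N(b_j)$ (each of operator norm $\le\|b_j\|_1$), replacing one occurrence of some $U_i^{\pm1}$ by $U_i'^{\pm1}$ changes $\Tr\rho_N(q)$ by at most $\|U_i-U_i'\|_2$ in Hilbert–Schmidt norm, by a telescoping/triangle-inequality argument using unitarity and contractivity of the remaining factors; summing over the $\deg(q)$ occurrences gives the claimed Lipschitz bound, so $\frac1N\Tr\rho_N(q)$ has Lipschitz constant $\le \deg(q)\cdot \frac{C'}{\sqrt N}$ in the scaled metric.

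Next I would handle the change of measure. Since $V$ is selfadjoint up to (indeed: selfadjoint) and balanced, the function $W(\mathbf U):=\Tr\rho_N(V)(\mathbf U)$ is real-valued $\mu_N$-a.e., so $\mu_N^V$ is a genuine probability measure with density proportional to $e^{NW}$. The function $\mathbf U\mapsto \frac1N W(\mathbf U)=\frac1N\sum_i\beta_i\Tr\rho_N(q_i)(\mathbf U)$ is, by the Lipschitz estimate above, $O\bigl(\sum_i|\beta_i|\deg(q_i)/\sqrt N\bigr)$-Lipschitz, hence $NW$ has a Hessian (in the Riemannian sense on the compact Lie group) bounded in operator norm by $C''\sum_i|\beta_i|\deg(q_i)^2$, uniformly in $N$; this is where the smallness hypothesis $\sum|\beta_i|\deg(q_i)^2$ small enters. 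A bounded perturbation of the log-density by such a function preserves the log-Sobolev inequality (with a constant worsened by a bounded factor) — this is the Holley–Stroock / Bakry–Émery-type perturbation argument — so $\mu_N^V$ itself satisfies a uniform-in-$N$ Gaussian concentration inequality in the scaled metric, provided $\sum|\beta_i|\deg(q_i)^2$ is small enough that the resulting constant stays bounded.

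Combining the two ingredients: applying the concentration inequality for $\mu_N^V$ to the $\frac{C'\deg(q)}{\sqrt N}$-Lipschitz function $\frac1N\Tr\rho_N(q)$ yields, for all $t\ge 0$,
\begin{equation*}
\mu_N^V\Bigl\{\mathbf U:\bigl|\tfrac1N\Tr\rho_N(q)(\mathbf U)-\E_{\mu_N^V}\tfrac1N\Tr\rho_N(q)\bigr|\ge t\Bigr\}\le C\,e^{-cN^2 t^2/\deg(q)^2}.
\end{equation*}
Setting $t=\delta\deg(q)/N$ and multiplying through by $N$ inside the event gives exactly
\begin{equation*}
\mu_N^V\Bigl\{\mathbf U:\bigl|\Tr\rho_N(q)(\mathbf U)-\E_{\mu_N^V}\Tr\rho_N(q)\bigr|\ge \delta\deg(q)\Bigr\}\le C\,e^{-c\delta^2},
\end{equation*}
which is the assertion of the lemma (note $c,C$ depend only on the log-Sobolev constant and $\sum|\beta_i|\deg(q_i)^2$, not on $N$, $q$, or $\delta$). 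The main obstacle, and the step requiring the most care, is the uniform-in-$N$ control on the perturbed log-Sobolev constant: one must verify that the Hessian bound on $NW$ is genuinely $N$-independent (so that the smallness condition on $\sum|\beta_i|\deg(q_i)^2$ suffices once and for all) and that the Holley–Stroock perturbation does not secretly pick up an $N$-dependent factor through the diameter of $U(N)^m$ — here the correct normalization of the metric by $\sqrt N$, together with the balancedness of $V$ (which kills the otherwise-dangerous central $U(1)^m$ directions along which $W$ is constant), is exactly what makes everything scale-invariant.
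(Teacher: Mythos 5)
Your overall strategy — Bakry--\'Emery concentration driven by a Ricci curvature lower bound, a Lipschitz estimate for $\Tr\rho_N(q)$, and a Hessian bound on the potential entering through the smallness of $\sum|\beta_i|\deg(q_i)^2$ — is indeed what the paper does. But the opening claim, that $U(N)^m$ with the scaled Hilbert--Schmidt metric satisfies a log-Sobolev inequality with constant of order $1$ uniformly in $N$, is false, and this is a genuine gap rather than a wording issue. The bi-invariant Ricci curvature of $U(N)$ vanishes in the direction of the center $U(1)$ (e.g.\ $\det:U(N)\to U(1)$ is a Lipschitz function whose pushforward of Haar measure is uniform on the circle — no concentration at all), so Gromov's curvature bound and hence Bakry--\'Emery apply on $SU(N)$ but not on $U(N)$. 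You acknowledge the $U(1)^m$ difficulty at the very end, but attributing the fix to ``balancedness of $V$ kills the central directions'' is not a resolution: balancedness of $V$ only ensures the Gibbs density descends to $SU(N)^m$; you must still control the test function $\tr\rho_N(q)$, and here the paper makes a case distinction on $q$ that your proposal lacks. When $q$ is balanced, $\tr\rho_N(q)$ also descends to $SU(N)^m$ and one concentrates there. When $q$ is unbalanced, one multiplies each $U_j$ by an independent uniform phase; this shows $\E_{\mu_N^V}\tr\rho_N(q)=0$ and that $|\tr\rho_N(q)|$ has the same law under the $SU(N)^m$ Gibbs measure as under $\mu_N^V$, after which the $SU(N)^m$ bound transfers. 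That rotation/phase argument (cf.\ \cite[Corollary 4.4.31]{AGZ}) is indispensable to cover general monomials $q$.

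A smaller remark: the parenthetical appeal to Holley--Stroock does not work here — the oscillation of the log-density $N\Tr\rho_N(V)$ is of order $N^2$, so the Holley--Stroock perturbation factor is useless. Only the Hessian/Bakry--\'Emery route (which you also mention, and which the paper uses) closes the argument. With these two corrections — replacing ``uniform LSI on $U(N)^m$'' by ``uniform LSI on $SU(N)^m$ plus the balanced/unbalanced transfer'' and dropping Holley--Stroock — your Lipschitz computation and the final assembly coincide with the paper's proof.
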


	\begin{proof}
	Let us first remark that, since $V$ is balanced, for any balanced monomial $q$
	the law of $$X^N_q:=\frac{1}{N}\tr \rho_N(q)({\bf U}_N^V)$$ under $\mathbb \mu_N^V$ is the same 
	as the law under $\tilde{\mu}_N^V$, the law defined under the Haar measure on $SU(N)$ and with the same density.
	If $q$ is not balanced, notice that its expectation under $\mu_N^V$  vanishes as it has the same law
	as  $e^{i\theta_k} X^N_q$ with some $\theta_k$ independent of $X^N_q$ and taken uniformly on $[0,2\pi)$ (see details in the proof of 
	\cite[Corollary 4.4.31]{AGZ}). Moreover,  the law of $|X^N_q|$ under under $\mu_N^V$ is the same 
	as its law under $\tilde{\mu}_N^V$. Hence, it is enough to prove concentration inequalities under $\tilde{\mu}_N^V$.
	But we know, by a result of Gromov, see \cite[Theorem 4.4.27]{AGZ}, that the Ricci curvature of $SU(N)$ is bounded below by
	$2^{-1}(N+2)-1$. On the other hand, the Hessian of the function

		\begin{equation*}
			\mathbf{U} \mapsto  \tr \rho_N(V)(\mathbf{U})
		\end{equation*}

	\noindent
	is bounded above by
	$\sum | \beta_i| \deg(q_i)^2$. Hence, by the Bakry-Emery criterion, see \cite[Corollary 4.4.25]{AGZ}, we know that if $0<c=2^{-1}-\sum | \beta_i| \deg(q_i)^2$,
	we have 

	\begin{equation}
		\label{toto}
		\tilde{\mu}_N^V\left(|G-\int G d\tilde{\mu}_N^V|\ge\delta\right)\le 2 e^{-\frac{ (cN-1) \delta^2}{2 \|\Gamma_1(G)\|_\infty}}\,,
	\end{equation}

	\noindent
	for all measurable functions $G$ on $SU(N)$,
	where $\Gamma_1$ is the {\it  carr\'e du champ}. On the other hand it is well known (see e.g. \cite[p. 75]{G} 
	for a similar argument on $SO(N)$) that the metric on $SU(N)$ can be lower bounded
	by the Euclidean metric on the full set of matrices. In particular,  if $G=\frac{1}{N}\tr \rho_N(q)({\bf U})$, 
	we can bound $\|\Gamma_1(G)\|_\infty$ from above by  noticing that

	\begin{eqnarray*}
	\left|\tr \rho_N(q)({\bf U})-\tr\rho_N(q)(\tilde {\bf U})\right|
	&\le& \sum_{\varepsilon=\pm 1\atop 1\le j\le m}\sum_{q=p_1 U_j^\varepsilon p_2}\left|\tr\left(\rho_N(p_1)({\bf U})((U_j)^{\varepsilon}-(\tilde U_j)^{\varepsilon})
	\rho_N(p_2)(\tilde {\bf U})\right)\right|\\
	&\le& \sum_{j=1}^m \deg_j(q) \tr\left|U_j -\tilde U_j\right|\\
	&\le& {\sqrt{N}} (\sum_{j=1}^m \deg_j(q)^2)^{\frac{1}{2}} \left(\sum_{j=1}^m \tr (U_j -\tilde U_j)(U_j -\tilde U_j)^*\right)^{\frac{1}{2}}\\
	\end{eqnarray*}

	Hence,
	$$\|\Gamma_1(\tr\rho_N(q)({\bf U}))\|_\infty\le  N\sum_{j=1}^m \deg_j(q)^2\,.$$
	Plugging this estimate back into \eqref{toto} completes the proof, as $\sum_{j=1}^m \deg_j(q)^2\le \deg q^2$.

	\end{proof}
\subsection{Uniform boundedness via change of variables}\label{conch}
	In this section we get estimates on correlators by using change of variables and the good controls we
	had on the operator $\Xi_\bullet^\bullet$.  This elaborates on  the change of variable approach introduced in this context in  \cite{CGM}.
	 We assume throughout this section that $V$ is self-adjoint (but not necessarilyy balanced).
	We shall first prove that
	\begin{lem}\label{expcont}
	For any monomials $p_1,\dots,p_m$ such that $p_i=-p_i^*$, $1\le i\le m$ ,  let
	$Z_{p_1,\dots,p_m}:U(N)^m \rightarrow \field{C}$ be the function
	$$Z_{p_1,\ldots,p_m}(\mathbf{U})= \sum_{i=1}^m \left(\frac{1}{N}\tr \otimes \frac{1}{N}\tr \rho_N(\partial_i p_i)(\mathbf{U}) +
	\frac{1}{N}\tr \rho_N(\D_i V p_i)(\mathbf{U})\right)\,.$$
	Then there exists a universal constant $C$ depending only on $V$ such that for
	any $\lambda\in \mathbb R$, 
	$$\int  e^{\lambda N  Z_{p_1,\ldots,p_m}(\mathbf{U})}\mu_N^V(\mathrm{d}\mathbf{U})  \le e^{C\lambda^2(\max_i \deg(p_i))^2}\,.$$
	\end{lem}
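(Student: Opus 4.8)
The plan is to recognize $Z_{p_1,\dots,p_m}$ as (up to lower-order terms) the infinitesimal generator of a change of variables on $U(N)^m$, and then to compare the integral against $\mu_N^V$ with the integral against $\mu_N^{V}$ of a \emph{deformed} potential. Concretely, for each $i$ consider the flow $U_i \mapsto e^{z\rho_N(p_i)(\mathbf U)}U_i$ on $U(N)$ (which stays in $U(N)$ precisely because $p_i = -p_i^*$), and differentiate the invariance of Haar measure under left translation together with the change-of-variables formula for the density $e^{N\Tr\rho_N(V)}$. The Jacobian of this flow contributes the noncommutative divergence term $\frac1N\Tr\otimes\frac1N\Tr\,\rho_N(\partial_i p_i)(\mathbf U)$, while the variation of the potential contributes $\frac1N\Tr\,\rho_N(\mathcal D_iV\,p_i)(\mathbf U)$ — exactly the two terms assembled in $Z_{p_1,\dots,p_m}$. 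This is the same Schwinger--Dyson integration-by-parts that produces \eqref{eqn:SDfirst} in \cite{CGM}, so I would first state the resulting identity: $\int Z_{p_1,\dots,p_m}\,\mathrm d\mu_N^V = 0$, and more usefully its perturbed version for the tilted measure $\propto e^{\lambda N Z}\mu_N^V$.

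Next I would exploit self-adjointness of $V$ together with the structure of $Z$. Writing $Z_{p_1,\dots,p_m}$ in terms of the regularized operators of Section \ref{sec:InitialValue}: after applying the cyclic-gradient trick and dividing by the number operator, $Z$ becomes, up to a $1/N^2$ correction coming from the $\W_{2N}^V$ term, an expression of the form $\frac1N\tr\rho_N\big((\Xi^V_{\tilde\W_{1N}^V})q\big)(\mathbf U)$ for a suitable $q$ of degree controlled by $\max_i\deg(p_i)$; since $\Xi^\bullet_\bullet$ is invertible with norm bound \eqref{boundnorm} on $\mathcal B^\perp_\xi$ under hypothesis \eqref{hypV}, one can solve for the relevant correlator. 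The cleanest route to the exponential bound, though, is to combine this with a concentration argument: $Z_{p_1,\dots,p_m}$ is a smooth function on $U(N)^m$ whose gradient has operator norm $O\big(N^{-1}(\max_i\deg p_i)\big)$ — estimated exactly as the Lipschitz bound $\|\Gamma_1(\cdot)\|_\infty$ computation in the proof of Lemma \ref{concentration}, using that $\partial_i$ and $\mathcal D_i$ raise degree by at most $\deg p_i + \deg V$. Since the density $e^{N\Tr\rho_N(V)}$ is log-concave perturbation of Haar on $SU(N)$ (the relevant normalization $V$ balanced reduction, or directly the Bakry--Émery estimate \eqref{toto}), the tilted measure satisfies a log-Sobolev / sub-Gaussian inequality with constant $O(1/N)$, so $\E_{\mu_N^V}\big[e^{\lambda N Z}\big] \le e^{C\lambda^2 N^2 \cdot N^{-1} \cdot N^{-1}(\max_i\deg p_i)^2} \cdot e^{\lambda N \E Z}$, and $\E_{\mu_N^V} Z = O(N^{-2})$ by the $\Xi$-inversion; absorbing the $o(1)$ mean term gives the claimed $e^{C\lambda^2(\max_i\deg p_i)^2}$.

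The main obstacle is controlling the $\W_{2N}^V$-type remainder: the exact SD identity gives $\int Z\,\mathrm d\mu_N^V$ equal not to $0$ but to an $O(N^{-2})$ covariance term, and when we tilt by $e^{\lambda N Z}$ this remainder becomes an expectation under the \emph{tilted} measure, which a priori one does not yet control uniformly in $\lambda$. The way around this is the standard differential-inequality bootstrap: set $\phi(\lambda) = \log\int e^{\lambda N Z}\,\mathrm d\mu_N^V$, differentiate to get $\phi'(\lambda) = N\,\E_\lambda[Z]$ where $\E_\lambda$ is the tilted expectation, note that $\E_\lambda[Z]$ is itself governed by a Schwinger--Dyson identity for the tilted measure (which is again a real Gibbs ensemble, with potential $V + \lambda Z$ essentially, up to the fact that $Z$ is a cyclic-gradient combination), and use the invertibility of $\Xi$ — uniformly over the relevant range of potentials by \eqref{hypV} and Proposition \ref{prop:FundamentalDistributive} — to bound $|\E_\lambda[Z]| \le C\lambda (\max_i\deg p_i)^2 / N$ plus the concentration contribution. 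Integrating this linear-in-$\lambda$ bound from $0$ to $\lambda$ yields the quadratic exponent. I would expect the bookkeeping of how $\deg$ propagates through $\partial_i$, $\mathcal D_i$ and $\Xi^{-1}$ — and checking that the constant $C$ genuinely depends only on $V$ and not on the $p_i$ beyond their degree — to be the only genuinely delicate point; everything else is an assembly of tools already proved above.
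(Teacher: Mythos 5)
Your strategy is genuinely different from the paper's, and it has a gap that would prevent it from working in the setting where Lemma \ref{expcont} is actually needed.

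\textbf{What the paper does.} The paper makes a \emph{finite} (non-infinitesimal) change of variables $\Psi_j(U_j)=U_j e^{\frac{\lambda}{N}\rho_N(p_j)(\mathbf U)}$, which is a diffeomorphism of $U(N)^m$ for large $N$, and writes the exact identity $1=\int |\det J_\Psi(\mathbf U)|\,e^{N\tr\rho_N(V)(\Psi(\mathbf U))-N\tr\rho_N(V)(\mathbf U)}\,\mu_N^V(\mathrm d\mathbf U)$. Expanding the Jacobian determinant and the change in the potential in powers of $\lambda$, the linear-in-$\lambda$ contributions are shown to be exactly $-\lambda N\sum_i\big(\frac{1}{N}\tr\otimes\frac{1}{N}\tr\rho_N(\partial_ip_i)+\frac{1}{N}\tr\rho_N(\D_iV\,p_i)\big)=-\lambda N Z_{p_1,\dots,p_m}$, while the quadratic and higher remainders are bounded uniformly in $\mathbf U$ by $C\lambda^2(\max_i\deg p_i)^2$. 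The identity then reads $1=\int e^{-\lambda N Z+O(\lambda^2\deg^2)}\mathrm d\mu_N^V$, which gives the claim on replacing $\lambda$ by $-\lambda$. There is no concentration inequality, no Bakry--\'Emery estimate, and no differential-inequality bootstrap.

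\textbf{The gap in your route.} Your proposal relies on a log-Sobolev / sub-Gaussian inequality for the Gibbs measure $\mu_N^V$ on $U(N)^m$ (via Bakry--\'Emery), but this is precisely what fails in the regime Section \ref{conch} is written for. The Ricci curvature bound of Gromov applies to $SU(N)$, not $U(N)$, and the $U(1)$-factor of $U(N)$ is flat. Lemma \ref{concentration} gets around this only because $V$ and $q$ are assumed \emph{balanced}, so the law of the relevant observable under $\mu_N^V$ coincides with its law under the analogous density on $SU(N)^m$, where curvature is positive. Section \ref{conch} explicitly drops the balancedness hypothesis and assumes $V$ self-adjoint only; Lemma \ref{expcont} exists in order to prove uniform boundedness without it. If you invoke Bakry--\'Emery on $U(N)^m$ here you are implicitly re-imposing a reduction to $SU(N)$, i.e.\ re-imposing balancedness, and so your argument would at best reprove the case already covered by Lemma \ref{concentration}, not the one this lemma is for.

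\textbf{A secondary gap.} Your bootstrap step, $\phi'(\lambda)=N\,\E_\lambda[Z]$ followed by ``an SD identity for the tilted measure,'' is not available without substantial new machinery: $Z$ contains the term $\frac1N\tr\otimes\frac1N\tr\,\rho_N(\partial_i p_i)$, a product of two traces, so the tilted density $e^{\lambda N Z}\,\mathrm d\mu_N^V$ is \emph{not} of the form $e^{N\Tr\rho_N(W)}\mathrm d\mu_N$ for any $W\in L$. It is therefore not a real Gibbs ensemble generated by a potential in the sense of the paper, and neither Proposition \ref{prop:SDprimary} nor the invertibility of $\Xi_\bullet^\bullet$ applies to it directly. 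One would have to set up an SD hierarchy for measures with quadratic (in the empirical law) tilts, which is a much larger enterprise than what is needed here. Relatedly, your two statements about $\int Z\,\mathrm d\mu_N^V$ are inconsistent: the raw integration by parts \emph{does} give $\int Z\,\mathrm d\mu_N^V=0$ exactly, as you first say; the $O(N^{-2})$ covariance term you mention later only appears after one rewrites $\E[\Tr\otimes\Tr]$ as $(\E\Tr)(\E\Tr)+\operatorname{Cov}$, which is not needed in this proof.

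\textbf{Summary.} The result does admit the interpretation you give (the two halves of $Z$ are the Jacobian contribution and the variation of the potential under a one-parameter flow), and the degree estimates for $\partial_i$, $\D_i$, $\Xi^{-1}$ that you worry about are indeed manageable — the paper does them directly. But the right way to exploit the observation is to integrate the flow to finite $\lambda$ and bound the error terms deterministically and uniformly in $\mathbf U$, which is what the paper does. Avoiding concentration entirely is not a stylistic choice here: it is forced, because the concentration input you need is not available once $V$ is unbalanced.
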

	\begin{proof} We start by noting that $Z_{p_1,\ldots,p_m}$ is real when $p_i=-p_i^*$. Indeed 
	we have if $p=-p^*$ and for all $i$
\begin{eqnarray}
{\mathcal D}_ip&=&\sum\langle p, q\rangle {\mathcal D}_i q\nonumber\\
&=& \sum\langle p, q\rangle[\sum_{p=q_1U_iq_2} q_2q_1 U_i-\sum_{p=q_1 U_i^*q_2}U_i^*q_2 q_1]\nonumber\\
{\mathcal D}_ip^*&=& \sum \overline{\langle p, q\rangle}[-\sum_{p=q_1U_iq_2} U_i^* q_1^* q_2^* +\sum_{q=q_1 U_i^*q_2}q_1^* q_2^*U_i]\nonumber\\
&=&-({\mathcal D}_ip)^*\label{selfadjoint}\end{eqnarray}
so that ${\mathcal D}_ip=({\mathcal D}_ip)^*$ if $p^*=-p$, which implies that
$\tr \rho_N(\D_i V p_i)(\mathbf{U})$ is real if $V$ is  self-adjoint and $p_i=-p_i^*$.
Similarly
\begin{eqnarray*}
\partial_i p&=&\sum\langle p, q\rangle\sum_{\epsilon=\pm 1}\epsilon \sum_{p=q_1u_i^\epsilon q_2} q_1u_i^{1_{\epsilon=+1}}\otimes u_i^{1_{\epsilon=-1}}q_2\\
\partial_i p^*&=&-\sum\langle p, q\rangle\sum_{\epsilon=\pm 1}\epsilon \sum_{p=q_1u_i^\epsilon q_2}  q_2^*u_i^{1_{\epsilon=-1}}\otimes u_i^{1_{\epsilon=1}}q_1^*\\
\end{eqnarray*}
implies that 
$$\frac{1}{N}\tr \otimes \frac{1}{N}\tr \rho_N(\partial_i p)(\mathbf{U})=-\overline { \frac{1}{N}\tr \otimes \frac{1}{N}\tr \rho_N(\partial_i p^*)(\mathbf{U})}$$
which 
shows that $(\frac{1}{N}\tr \otimes \frac{1}{N}\tr \rho_N(\partial_i p)(\mathbf{U}) $ is real if $p=-p^*$. Hence $Z_{p_1,\dots,p_m}$ is real when $p_i=-p_i^*$ for all $i\in\{1,\ldots,m\}$.
	
	The proof  of the lemma goes through the change of variables
	$${\bf U}=(U_1,\ldots, U_m)\rightarrow {\bf \Psi(U)} =(\Psi_1(U_1),\ldots,\Psi_m(U_m))$$ with
	$$\Psi_j(U)=U_j e^{\frac{\lambda}{N} \rho_N(p_j)(U_j)}$$
	It was shown in \cite[section 2]{CGM} that for $N$ large enough ${\bf \Psi}$ is a local diffeomorphism
	of $U(N)^m$.
	We thus have
	\begin{equation}\label{eqL}1=\int |\det J_\Psi(\mathbf{U})| e^{N\tr \rho_N(V)({\bf \Psi(U)})-N\tr \rho_N(V)({\bf U})} 
	\mu_N^V(\mathrm{d} \mathbf{U})\end{equation}
	The  Jacobian  of $\Psi$ can be computed by
	$$|\det J_\Psi(U)|=\exp(-\sum_{p\ge 1}\frac {(-\lambda/N)^p}{p}\tr(\Phi(U)^p))\,.$$
	Here $\Phi$ is the linear map acting on $\mathcal A_N=\{A\in M_N: A=-A^*\}$ 
	$$\Phi_{ij}(U).A=\sum_{k=0}^{\infty} \frac{({\rm Ad}_{\frac{\lambda}{N} \rho_N(p_j)}(U))^k}{(k+1)!} (\rho_N(\partial_ip_j)\#A)e^{\frac{\lambda}{N}\rho_N(p_j)(U)}$$
	where ${\rm Ad}_MH=MH-HM$. Moreover it was shown that
	\begin{eqnarray*}
	\tr(\Phi(U))&=& \sum_{i=1}^m[ \tr\otimes \tr\rho_N(\partial_i p_i)+ \sum_{k=1}^{+\infty}\tr\otimes\tr  \frac{({\rm Ad}_{\frac{\lambda}{N} \rho_N(p_j)(U)})^k}{(k+1)!} \rho_N(\partial_ip_i)e^{\frac{\lambda}{N}p_i(U)}\\
	&&+\sum_{k=0}^{+\infty}\tr\otimes \tr  \frac{({\rm Ad}_{\frac{\lambda}{N} p_i(U)})^k}{(k+1)!} (\partial_ip_i)(e^{\frac{\lambda}{N}p_i(U)}-1)]\end{eqnarray*}
	We can estimate the other terms as $\|{\rm Ad}_MH\|_\infty\le 2\|M\|_\infty\|H\|_\infty$ and $$\|\frac{\lambda}{N} \rho_N(p_j)({\bf U})\|_\infty\le \frac{2\lambda}{N}$$
Therefore, for all $k\ge 0$
$$\|\tr\otimes\tr  \frac{({\rm Ad}_{\frac{\lambda}{N} \rho_N(p_j)({\bf U})})^k}{(k+1)!} (\partial_i\rho_N(p_i)({\bf U})e^{\frac{\lambda}{N}\rho_N(p_i)({\bf U})}\|_\infty\le N^2 \frac{(\frac{4\lambda}{N})^k}{(k+1)!} \deg(p_i) $$
$$\|\tr\otimes \tr  \frac{({\rm Ad}_{\frac{\lambda}{N} \rho_N(p_i)({\bf U})})^k}{(k+1)!} (\partial_ip_i)(e^{\frac{\lambda}{N}\rho_N(p_i)({\bf U})}-1)\|_\infty \le N^2 \frac{(\frac{4\lambda}{N})^{k+1}}{(k+1)!} \deg(p_i) $$
so that for  all $|\lambda|\le N$
\begin{equation}\label{b11}
|\tr(\Phi(U))-\sum_{i=1}^m \tr\otimes \tr(\partial_i \rho_N(p_i)({\bf U}))|\le  C\max \deg(p_i) \lambda N\end{equation}
Moreover, for all $i,j$, we have 
$$\|\Phi_{ij}(U)\|_\infty\le \deg(p_j)  \sum_{k\ge 0} \frac{(\frac{4\lambda}{N})^k}{(k+1)!}\le  C \deg(p_j)$$
for some finite constant $C$, 
so that for $\ell\ge 2$ all $|\lambda|\le N$, 
$$\tr(\Phi(U)^\ell)\le  N^2 (C \deg(p))^\ell \,.$$
Thus, for $|\lambda|<N/2C \deg(p)$ we deduce that
$$| \sum_{\ell \ge 2}\frac {(-\lambda/N)^\ell}{\ell}\tr(\Phi(U)^\ell)|\le \sum_{\ell\ge 2}  N^2 (C \max \deg(p_i))^\ell \frac {(2|\lambda|/N)^\ell}{\ell}\le2 \lambda^2(\max\deg(p_i))^2 \,.$$ 
Hence, we deduce  that 
\begin{equation}\label{expdet}
|\det J_\Psi(U)|= e^{-\frac{\lambda}{N}\sum_{i=1}^m \tr\otimes \tr(\partial_i p_i) +X_P^N({\bf U})}\end{equation}
where there exists a universal finite constant $C$ such that
$$\|X^N_P\|_\infty \le   C(\max_i \deg(p_i))^2 \lambda^2 \,.$$
Moreover, one easily check that
$$\tr \rho_N(V)({\bf \Psi(U)})-\tr\rho_N( V)({\bf U})=\frac{\lambda}{N} \sum_{i=1}^m \tr(\rho_N(p_i D_i V)({\bf U}))+\epsilon^N_p$$
where $|\epsilon^N_p|\le C \frac{\lambda^2}{N}$. 
Hence,  \eqref{eqL}  proves the claim. \end{proof}
As a corollary we have
\begin{cor} Let $p$ be a polynomial such that $p=p^*$ and set
	$$Z_{p}= N\sum_{i=1}^m (\frac{1}{N}\tr \otimes \frac{1}{N}\tr \rho_N(\partial_i \D_i p) +\frac{1}{N}\tr \rho_N(\D_i V  \D_i p))\,.$$
	Then there exists a universal constant $C$ depending only on $V$ such that for
	any $\lambda\in \mathbb R$, 
	$$\int e^{\lambda  Z_{p}({\bf U})}\mu^V_N(d{\bf U})  \le e^{C\lambda^2\sum |\langle p,q\rangle| \deg(q)^3}\,.$$
	\end{cor}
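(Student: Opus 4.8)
The plan is to obtain the corollary directly from Lemma~\ref{expcont} — or rather from its proof — applied to the tuple of polynomials $(p_1,\dots,p_m)=(\D_1 p,\dots,\D_m p)$. First I would check that this tuple satisfies the hypothesis $p_i=-p_i^*$: since $p=p^*$, the identity $\D_i(p^*)=-(\D_i p)^*$ established in the course of the proof of Lemma~\ref{expcont} (see \eqref{selfadjoint}) gives $(\D_i p)^*=-\D_i p$, so each $\D_i p$ is anti-selfadjoint. Hence $\rho_N(\D_i p)(\mathbf{U})$ is skew-Hermitian, $e^{\frac{\lambda}{N}\rho_N(\D_i p)(\mathbf{U})}\in U(N)$, and the change of variables $\Psi_j(\mathbf{U})=U_j\,e^{\frac{\lambda}{N}\rho_N(\D_j p)(\mathbf{U})}$ of the proof of Lemma~\ref{expcont} remains a local diffeomorphism of $U(N)^m$ for $N$ large. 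Moreover $Z_{\D_1 p,\dots,\D_m p}(\mathbf{U})=\frac{1}{N}Z_p(\mathbf{U})$ by definition, so that $\int e^{\lambda Z_p}\,\mu_N^V(\mathrm{d}\mathbf{U})=\int e^{\lambda N Z_{\D_1 p,\dots,\D_m p}}\,\mu_N^V(\mathrm{d}\mathbf{U})$; it then suffices to reprove the estimate of Lemma~\ref{expcont} for this tuple, tracking the dependence of the constant on $p$.

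To this end I would decompose $p=\sum_q\langle p,q\rangle q$ over the monomial basis of $L$. For a monomial $q$ the element $\D_i q$ is a signed sum of $\deg_i(q)$ monomials, each of degree $\deg(q)$; consequently $\|\rho_N(\D_i q)(\mathbf{U})\|_\infty\le\deg_i(q)$, and since $\partial_i$ of a monomial of $u_i$-degree $d$ is a sum of $d$ simple tensors whose images act contractively through the substitution $\#$, the operator $\rho_N(\partial_i\D_i q)(\mathbf{U})$ on $\Mat_N(\field{C})$ has norm at most $\deg_i(q)^2$. Summing these bounds over $q$ and over $i$, and using $\sum_{i=1}^m\deg_i(q)=\deg(q)$ to keep the $i$-summations from generating further powers, one sees that every quantity playing the role of ``$\deg(p_i)$'' in the proof of Lemma~\ref{expcont} — the bound on $\|\Phi_{ij}(U)\|_\infty$, the bound on $\tr(\Phi(U)^\ell)$ for $\ell\ge2$, and the bound on $\tr\rho_N(V)(\Psi(\mathbf{U}))-\tr\rho_N(V)(\mathbf{U})$ beyond its term linear in $\lambda$ — is dominated by $\sum_q|\langle p,q\rangle|\deg(q)^j$ for a suitable $j\le3$, the worst case $j=3$ corresponding to the three derivative-type operations at play (the cyclic derivative $\D_i$, Voiculescu's derivation $\partial_i$, and the logarithmic derivative defining the Jacobian). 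Inserting these bounds into the identity \eqref{eqL}, noting that the part of $\log|\det J_\Psi|$ linear in $\lambda$ cancels against the part of $\tr\rho_N(V)(\Psi(\mathbf{U}))-\tr\rho_N(V)(\mathbf{U})$ linear in $\lambda$ to reproduce $-\lambda Z_p$ exactly, and changing $\lambda$ into $-\lambda$ to cover both signs, one obtains
\[
\int e^{\lambda Z_p(\mathbf{U})}\,\mu_N^V(\mathrm{d}\mathbf{U})\le e^{C\lambda^2\sum_q|\langle p,q\rangle|\deg(q)^3},
\]
with $C$ depending only on $V$, valid for $|\lambda|$ small relative to $N/\sum_q|\langle p,q\rangle|\deg(q)$ — equivalently, exactly as in Lemma~\ref{expcont}, for every fixed $\lambda\in\field{R}$ once $N$ is large enough.

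The only genuine work here is the degree bookkeeping of the second paragraph; the reality of $Z_p$, the diffeomorphism property of $\Psi$, the convergence of the series for $\log|\det J_\Psi|$, and the cancellation of the linear parts are all inherited essentially verbatim from the proof of Lemma~\ref{expcont} once the anti-selfadjointness of the $\D_i p$ is in place. The main obstacle is therefore combinatorial: verifying that passing from a monomial to its cyclic gradient costs at most one extra power of the degree at each of the three relevant places in the Jacobian estimate, so that the exponent is at most $C\lambda^2\sum_q|\langle p,q\rangle|\deg(q)^3$ and not something larger.
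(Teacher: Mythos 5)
Your starting point agrees with the paper's (reality of $Z_p$ and anti-selfadjointness of the $\D_ip$ via \eqref{selfadjoint}), but the degree bookkeeping in your second paragraph does not deliver the stated exponent, and this is a genuine gap rather than a detail to be checked. The obstruction is in the log-Jacobian: in the proof of Lemma~\ref{expcont} the terms $\sum_{\ell\ge2}\frac{(-\lambda/N)^\ell}{\ell}\tr(\Phi(U)^\ell)$ contribute $\lambda^2$ times the \emph{square} of the available bound on $\Phi$. Running the change of variables with the full tuple $p_j=\D_jp$, the best one gets is $\|\Phi_{ij}(U)\|_\infty\le C\sum_q|\langle p,q\rangle|\deg_i(q)\deg_j(q)$, so the $\ell=2$ term is only controlled by something of order $\lambda^2\bigl(\sum_q|\langle p,q\rangle|\deg(q)^2\bigr)^2$. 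This is quadratic, not linear, in the coefficient vector of $p$, and already for $p=q+q^*$ with $q$ a single monomial of degree $d$ it gives $\lambda^2 d^4$, whereas the corollary asserts $\lambda^2 d^3$. Nothing in your outline produces the improvement from $d^4$ to $d^3$, nor the linearization in $\sum_q|\langle p,q\rangle|$; the "three derivative-type operations, hence $j\le 3$" heuristic misses that the Jacobian estimate is intrinsically quadratic in $\partial_i\D_jp$. (A secondary symptom of the same loss: the thresholds in Lemma~\ref{expcont}, such as $\|\tfrac{\lambda}{N}\rho_N(p_j)\|_\infty\le\tfrac{2|\lambda|}{N}$ and the admissible range of $\lambda$, rely on each $p_j$ being an $O(1)$ combination of monomials, and they degrade with $\|p\|_1$ when $p_j=\D_jp$.)

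The paper avoids this by never performing the change of variables with the full cyclic gradient. It uses the linearity $Z_p=\sum_q\langle p,q\rangle Z_q$ and, for each monomial $q$, decomposes $\D_iq$ into elementary anti-selfadjoint pieces $q'-(q')^*$, so that Lemma~\ref{expcont} is applied only to tuples with a single such entry, each costing $e^{C\lambda^2\deg(q')^2}$; H\"older's inequality then combines these exponential-moment bounds \emph{additively}. Since $\D q$ has $\sum_i\deg_i(q)=\deg(q)$ terms, each of degree at most $\deg(q)$, the cost per monomial is $\deg(q)\cdot\deg(q)^2=\deg(q)^3$, weighted by $|\langle p,q\rangle|$. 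That interpolation step --- decomposing at the level of exponential moments rather than inside the Jacobian estimate --- is the missing idea; without it your route proves only a weaker bound (quartic in the degree and quadratic in the coefficients), not the stated one.
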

\begin{proof} By \eqref{selfadjoint}, $Z_p$ is real. 
Moreover, by H\"older inequality, we have 
\begin{eqnarray*}
\mathbb E[ e^{\lambda Z_p}]&=&\mathbb E[ e^{\lambda \sum_q \langle p,q\rangle Z_q}]\\
&\le& \prod_{q}\mathbb E[ e^{\lambda\mbox{sign}(\langle p, q\rangle) Z_q}]^{|\langle p,q\rangle|}\\
&\le &\prod_{q,i,q'} \mathbb E[ e^{\lambda\mbox{sign}(\langle p, q\rangle)\mbox{ sign}( \langle {\mathcal{D}}_i q, q'-(q')^*\rangle) Z_{0,\ldots, q'-(q')^*,0,\ldots}}]^{|\langle p,q\rangle||\langle {\mathcal{D}}_i q, q'-(q')^*\rangle}\\
&\le& e^{C\lambda^2 \sum_q |\langle p,q\rangle|\sum_{i,q'} |\langle {\mathcal D}_iq, q'-(q')^*\rangle| \deg(q')^2}
\\
&\le& e^{C\lambda^2 \sum_q |\langle p,q\rangle| \deg(q)^3}\end{eqnarray*}
where we used the previous lemma.
\end{proof}

We next use the previous lemma as well as the Schwinger-Dyson equations to prove concentration estimates.
\begin{lem}\label{conconc}
Let $\delta_N(p)=\frac{1}{N}\tr\rho_N(p)(\mathbf{U})-\tau_{10}^V(p)$ and assume that  $\xi\ge 12$ so that $\Xi_{\tau_{10}^V}^V$ is invertible in ${\mathcal L}_{\xi}$. Assume moreover 
that $\|V\|_1$ is small enough so that we can choose $\xi$ so that additionally
$$max_{\mu \in \Char(L), \|\mu\| \leq 1} \|\Pi\operator{T}_{\mu}\operator{D}^{-1}\|_\xi  \|(\Xi_{\tau_{10}^V}^V)^{-1}\|_{\xi}<2\,.$$
Then, for all $r\ge 1$, there exists a finite constant $C_r$ such that for all polynomials such that $\|p\|_{\xi}\le 1$, we have
 $$\mathbb E[|\delta_N(p)|^r ]^{\frac{1}{r}}\le \frac{C_r}{N }
$$
\end{lem}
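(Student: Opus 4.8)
The plan is to combine the concentration estimate of Lemma \ref{expcont} (in its cyclic-gradient corollary form) with the secondary Schwinger--Dyson equation and the invertibility of the fundamental operator $\Xi_{\tau_{10}^V}^V$ in $\mathcal{L}_\xi$, proceeding by induction on $r$ — or more precisely, by first establishing the case $r=2$ directly and then bootstrapping to all moments. First I would introduce the random linear functional $\delta_N \in \Hom(L,\field{C})$ defined by $\delta_N(p) = \frac{1}{N}\tr\rho_N(p)(\mathbf{U}) - \tau_{10}^V(p)$, and observe that, writing $\W_{1N}^V(p) = \E\, \tr\rho_N(p)(\mathbf{U})$, we have $\E\delta_N(p) = \tilde\W_{1N}^V(p) - \tau_{10}^V(p) = \delta_1\tilde\W_{1N}^V(p)$, which by Proposition \ref{prop:RoughEstimate} is $O(N^{-2})$ in $\xi_3$-norm; hence it suffices to control the centered quantity $\delta_N(p) - \E\delta_N(p)$, i.e. the fluctuations of $\frac1N\tr\rho_N(p)(\mathbf{U})$ around its mean.

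The key step is the following self-improving estimate. Apply the finite-$N$ Schwinger--Dyson equation $\SD(1,N)$ pointwise: for a fixed configuration $\mathbf{U}$, the random functional $L_N := \frac1N\tr\rho_N(\cdot)(\mathbf{U})$ satisfies
\begin{equation*}
L_N\otimes L_N(\partial_i q) + L_N((\D_iV)q) = Z^{(i)}_{N}(q),
\end{equation*}
where $Z^{(i)}_N(q) := \frac1N\tr\otimes\frac1N\tr\rho_N(\partial_i q)(\mathbf{U}) - \frac1N\tr\otimes\frac1N\tr\rho_N(\partial_i\D_i\cdots)$ is, after applying the cyclic gradient trick ($q \mapsto \D_i q$, sum over $i$, regularize by $\operator{D}^{-1}$), precisely controlled in terms of the quantities $Z_p$ appearing in the Corollary to Lemma \ref{expcont}. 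Subtracting the limiting identity $\tau_{10}^V\otimes\tau_{10}^V(\partial_i q) + \tau_{10}^V((\D_iV)q) = 0$ and using the expansion $L_N\otimes L_N - \tau_{10}^V\otimes\tau_{10}^V = \delta_N\otimes\tau_{10}^V + \tau_{10}^V\otimes\delta_N + \delta_N\otimes\delta_N$, then running through the cyclic gradient trick as in the proof of Proposition \ref{prop:MasterDeltaConstraint} and regularizing, one obtains the exact identity
\begin{equation*}
\delta_N(\Xi_{L_N}^V p) = -\,\delta_N\otimes\delta_N(\overline{\Delta}p) + \frac1N Z_{(\Xi\text{-preimage of }p)},
\end{equation*}
valid for $p \in B^\perp$. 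Here $\Xi_{L_N}^V = \Xi_{\tau_{10}^V}^V + \Pi\overline{\operator{T}}_{\delta_N}$ by Proposition \ref{prop:FundamentalDistributive}, and since $L_N$ is a unital trace with $\|L_N\|_1 \le 1$, the operator $\Xi_{L_N}^V$ is invertible in $\mathcal{L}_\xi$ with uniformly bounded inverse (this is exactly where the hypothesis $\max_{\|\mu\|\le1}\|\Pi\operator{T}_\mu\operator{D}^{-1}\|_\xi\,\|(\Xi_{\tau_{10}^V}^V)^{-1}\|_\xi < 2$ is used, via a Neumann series around $\Xi_{\tau_{10}^V}^V$). Rearranging gives the fixed-point-type bound
\begin{equation*}
\|\delta_N\|_\xi \;\le\; \|(\Xi_{L_N}^V)^{-1}\|_\xi\Big( \|(\operator{Id}\otimes\delta_N)\overline{\Delta}\|_\xi\,\|\delta_N\|_\xi + \tfrac1N \sup_{\|p\|_\xi\le1}|Z_{\cdots}|\Big),
\end{equation*}
and since $\|(\operator{Id}\otimes\delta_N)\overline{\Delta}\|_\xi \le 4\frac{\xi+1}{\xi(\xi-1)}\|\delta_N\|_1$ (as in Proposition \ref{prop:contraction}), on the event where $\|\delta_N\|_1$ is small — which by Lemma \ref{concentration} or the Corollary to Lemma \ref{expcont} holds with overwhelming probability — one can absorb the first term and conclude $\|\delta_N\|_\xi \lesssim \frac1N \sup_p |Z_p| + (\text{small})\cdot\|\delta_N\|_\xi$, hence $\|\delta_N\|_\xi \lesssim \frac1N\sup_{\|p\|_\xi\le 1}|Z_{(\Xi^{-1}p)}|$ on that event.

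From here the moment bound follows: the Corollary to Lemma \ref{expcont} gives $\E\, e^{\lambda Z_p} \le e^{C\lambda^2\sum|\langle p,q\rangle|\deg(q)^3}$, and since $(\Xi_{\tau_{10}^V}^V)^{-1}$ maps $\{\|p\|_\xi \le 1\}$ into a set of polynomials with controlled $\xi$-norm, one gets a uniform sub-Gaussian bound $\E\,e^{\lambda N Z_{(\Xi^{-1}p)}} \le e^{C'\lambda^2}$ for $\|p\|_\xi \le 1$, hence $\E[|Z_{(\Xi^{-1}p)}|^r]^{1/r} \le C_r' $ by the standard sub-Gaussian-to-moment conversion, i.e. $\E[|\delta_N(p)|^r]^{1/r} \le C_r/N$ on the good event. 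On the complementary (exponentially small) event one uses the crude deterministic bound $|\delta_N(p)| \le \|\delta_N\|_1 + \|\tau_{10}^V\|_1 \le 3$ together with $\mu_N^V(\text{bad event}) \le Ce^{-cN}$ from Lemma \ref{concentration}, which contributes only $O(e^{-cN/r})$ and is absorbed into $C_r/N$. The main obstacle is the bookkeeping in the first paragraph's displayed identity — tracking exactly which polynomial plays the role of the argument of $Z$ after the cyclic gradient trick and the regularization by $\operator{D}^{-1}$, and verifying that the corollary to Lemma \ref{expcont} applies to precisely that combination with the stated $\deg(q)^3$ weight — together with confirming that the Neumann-series inversion of $\Xi_{L_N}^V$ is valid uniformly over the (random) trace $L_N$ on the good event, which is where the second smallness hypothesis on $\|V\|_1$ enters.
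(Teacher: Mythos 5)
Your skeleton---a pathwise Schwinger--Dyson identity for the random functional $\delta_N$, inversion of a fundamental operator, and control of the residual $Z$-terms through the Corollary to Lemma \ref{expcont}---is indeed the one the paper uses, but the way you dispose of the quadratic term $\delta_N\otimes\delta_N(\overline{\Delta}p)$ contains a genuine gap. You absorb it only on a ``good event'' where $\|\delta_N\|_1$ is small, justified by ``Lemma \ref{concentration} or the Corollary to Lemma \ref{expcont}'', with the bias $\E\,\delta_N$ handled by Proposition \ref{prop:RoughEstimate}. None of these is available here: Lemma \ref{concentration} assumes $V$ balanced, whereas Lemma \ref{conconc} is precisely the substitute for the self-adjoint but unbalanced case (cf.\ Corollary \ref{concmes}); Proposition \ref{prop:RoughEstimate} (and the convergence $\tilde\W_{1N}^V\to\tau_{10}^V$ that would make the bias small) presupposes a $\xi$-uniformly bounded solution of the SD lattice, i.e.\ exactly the conclusion of Corollary \ref{concmes}, whose proof rests on the present lemma, so invoking it is circular; and the Corollary to Lemma \ref{expcont} controls the $Z_p$'s, not $\|\delta_N\|_1$, so using it to define the good event begs the question (even granting per-monomial concentration, one would still need a union bound over the infinite monomial basis plus control of the bias). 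Relatedly, your Neumann inversion of the random operator $\Xi_{L_N}^V=\Xi_{\tau_{10}^V}^V+\Pi\overline{\operator{T}}_{\delta_N}$ requires $\|(\Xi_{\tau_{10}^V}^V)^{-1}\|_\xi\,\|\Pi\overline{\operator{T}}_{\delta_N}\|_\xi<1$; off the good event the only sure bound is $\|\delta_N\|_1\le 2$, and then the stated hypothesis (product $<2$) does not give this, so the inversion is not justified where you need it.

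The point you are missing is that no event splitting and no inversion of the random operator are needed: the quadratic term equals $\tfrac12\,\delta_N(\Pi\operator{T}_{\delta_N}\operator{D}^{-1}p)$, so one keeps the deterministic $\Xi_{\tau_{10}^V}^V$ on the left, substitutes $p\mapsto(\Xi_{\tau_{10}^V}^V)^{-1}p$, and takes $L^r(\mu_N^V)$-norms for each \emph{fixed} $p$. Since pointwise in $\mathbf{U}$ the functional $\delta_N$ is a difference of two unital traces of $\ell^1$-norm at most $1$, the operator norm $\|\Pi\operator{T}_{\delta_N}\operator{D}^{-1}\|_\xi$ is bounded surely in terms of $\max_{\mu}\|\Pi\operator{T}_{\mu}\operator{D}^{-1}\|_\xi$, and with $\|\delta_N\|_r:=\max_{\|p\|_\xi\le1}\|\delta_N(p)\|_{L^r}$ (finite a priori because $|\delta_N(p)|\le 2\|p\|_\xi$) the smallness hypothesis lets you absorb the $\delta_N$-term into the left-hand side with a factor $<1$, leaving $\|\delta_N\|_r\le \tfrac{C}{N}\max_{\|p\|_\xi\le1}\|Z_{\operator{D}^{-1}(\Xi_{\tau_{10}^V}^V)^{-1}p}\|_{L^r}$, which the Corollary to Lemma \ref{expcont} bounds using $\deg(q)^3\le\xi^{\deg(q)}$. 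This deterministic absorption is exactly what the hypothesis ``$<2$'' is calibrated for. Note finally that your pathwise conclusion $\|\delta_N\|_\xi\lesssim\frac1N\sup_{\|p\|_\xi\le1}|Z_{(\Xi^{-1}p)}|$ would require moments of a supremum over an infinite family of $Z$'s, which the Corollary does not provide; the lemma only asserts $\sup_p\E[|\delta_N(p)|^r]^{1/r}\le C_r/N$, and keeping the maximum outside the expectation is what makes the argument close.
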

\begin{proof}
 If we denote by  $L_N=\frac{1}{N}\tr\circ\rho_N$, we can rewrite the definition of $Z_p$ as
$$Z_p=L_N\otimes L_N (\sum \partial_i {\mathcal D}_i p)+L_N(\sum {\mathcal D}_i V{\mathcal D} _ip)=L_N(\Psi_{L_N}\operator{D} p)$$
so that

\begin{equation}\label{truc}
\delta_N(\Xi_{\tau_{10}^V}^V  p)=- \frac{1}{2}\delta_N(\Pi\operator{T}_{\delta_N}\operator{D}^{-1} p) + \frac{1}{N} Z_{\operator{D}^{-1} p}
\end{equation}
so that taking the $\ell^r$-norm on both sides we deduce that for any polynomial $p=p^*$ we have
$$\|\delta_N(p)\|_r \le \frac{1}{2} \|\delta_N ((\Pi\operator{T}_{\delta_N}\operator{D}^{-1}  (\Xi_{\tau_{10}^V}^V)^{-1} p) \|_r +\frac{1}{N} \|Z_{\operator {D}^{-1} (\Xi_{\tau_{10}^V}^V)^{-1} p}\|_r$$
We set $\|\delta_N\|_{r}=\max_{\|p\|_\xi\le 1} \|\delta_N(p)\|_r$
and deduce that
$$\|\delta_N\|_{r}\le \frac{1}{2} \max_{\mu \in \Char(L), \|\mu\|_\xi \leq 1} \|\Pi\operator{T}_{\mu}\operator{D}^{-1}\|_\xi   \|(\Xi_{\tau_{10}^V}^V)^{-1}\|_{\xi}  \|\delta_N\|_{r}+
\frac{1}{N} \max_{\|p\|_\xi\le 1} \|Z_{\operator{D}^{-1} (\Xi_{\tau_{10}^V}^V)^{-1} p}\|_r$$
Remembering  that we chose the norm so that 
$$ \max_{\mu \in \Char(L), \|\mu\| \leq 1} \|\Pi\operator{T}_{\mu}\operator{D}^{-1}\|_\xi  \|(\Xi_{\tau_{10}^V}^V)^{-1}\|_\xi <2$$
the conclusion follows from the fact that the previous lemma yields the existence of a finite constant so that since $\xi$ is large enough so that $n^3\le \xi^n$ for all $n\in\mathbb N$,
\begin{eqnarray*}
 \|Z_{\operator{D}^{-1}( \Xi^V_{\tau^V_{10}})^{-1} p}\|_r&\le & C'_r \sum_q |\langle \operator{D}^{-1} (\Xi^{V}_{\tau_{10}^V})^{-1} p, q\rangle|\deg(q)^3\\
 &\le& \|\operator{D}^{-1} (\Xi_{\tau_{10}^V}^V)^{-1}\|_\xi \|p\|_\xi  <\infty\end{eqnarray*}
\end{proof}

\subsection{Uniform boundedness}

\begin{cor}  \label{concmes}
	Under the assumptions of Lemma \ref{concentration}  or section \ref{conch},
	 for any $k\ge 2$, any  monomials $p_1,\cdots, p_k$ we have
	$$|\W^V_{kN}(p_1,\cdots p_k)|\le C_k \prod \deg (p_i) \|p_i\|_\xi$$
	for a finite constant  $C_k$.  In particular, $(\W_{kN}^V)_{k,N=1}^{\infty}$ is $\xi$-uniformly bounded
	for any $\xi \geq 12$.
\end{cor}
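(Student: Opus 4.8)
The plan is to prove the bound $|\W^V_{kN}(p_1,\dots,p_k)|\le C_k\prod_i\deg(p_i)\|p_i\|_\xi$ by induction on $k\ge 2$, using the cumulant--concentration dictionary on the base level $k=2$ and the secondary SD equations (Proposition \ref{prop:SDsecondary}) together with the invertibility of $\Xi^V_{\tilde\W^V_{1N}}$ on $\mathcal B^\perp_\xi$ for the inductive step. Throughout, one works with $\xi=12$ (so that the bounds of Section \ref{sec:InitialValue} apply uniformly) and uses that $\|\tilde\W^V_{1N}\|_1\le 1$, which was already verified, so that by Proposition \ref{prop:contraction} and Proposition \ref{prop:perturbation} the fundamental operator $\Xi^V_{\tilde\W^V_{1N}}$ is a uniformly bounded automorphism of $\mathcal B^\perp_\xi$ with $\|(\Xi^V_{\tilde\W^V_{1N}})^{-1}\|_\xi\le (1-K(\xi,V))^{-1}$, a bound independent of $N$.

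First I would settle $k=2$. Here the content is exactly the statement that $\sup_N\|\W^V_{2N}\|_\xi<\infty$. Since $\W^V_{2N}(p_1,p_2)=\operatorname{Cov}(\tr\rho_N(p_1)(\mathbf U^V_N),\tr\rho_N(p_2)(\mathbf U^V_N))$, the Cauchy--Schwarz inequality reduces this to a variance bound $\operatorname{Var}(\tr\rho_N(p)(\mathbf U^V_N))\le C\deg(p)^2\|p\|_\xi^2$ (note $\|p\|_1\le\|p\|_\xi$). Under the hypotheses of Lemma \ref{concentration} this is immediate: integrating the sub-Gaussian tail bound $\mu^V_N\{|\tr\rho_N(q)(\mathbf U)-\E\tr\rho_N(q)(\mathbf U^V_N)|\ge\delta\deg(q)\}\le Ce^{-c\delta^2}$ over $\delta$ gives $\E|\tr\rho_N(q)(\mathbf U^V_N)-\E\tr\rho_N(q)(\mathbf U^V_N)|^2\le C'\deg(q)^2$ for monomials, and then linearity and the triangle inequality in $\|\cdot\|_\xi$ (weighting each monomial $q$ appearing in $p$ by $\xi^{\deg(q)}$, using $\deg(q)\le\deg(p)$) upgrade this to the general polynomial $p$. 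Alternatively, under the hypotheses of Section \ref{conch}, the bound is Lemma \ref{conconc} with $r=2$: $\E|\delta_N(p)|^2\le C_2^2/N^2$ for $\|p\|_\xi\le 1$, which gives $\operatorname{Var}(\tr\rho_N(p)(\mathbf U^V_N))=N^2\,\operatorname{Var}(\delta_N(p))\le C_2^2$ after rescaling, hence $\|\W^V_{2N}\|_\xi\le C_2^2$.

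For the inductive step, assume the bound holds for all orders $<k$ (with $k\ge 3$), and also for $\W^V_{2N}$. Apply the secondary SD equation of Proposition \ref{prop:SDsecondary} with $p_1$ replaced by $(\Xi^V_{\tilde\W^V_{1N}})^{-1}p_1$: this expresses $\W^V_{kN}(p_1,\dots,p_k)$ as $-\tfrac1N$ times a sum of three groups of terms, each of which involves only $\W^V_{jN}$ with $j<k$ (the $\overline\Delta$-splitting terms pair $\W^V_{(|I|+1)N}\otimes\W^V_{(k-|I|)N}$ over proper nonempty $I\subseteq\{2,\dots,k\}$, so both factors have order strictly between $1$ and $k$; the $\overline{\operator P}^{p_j}$ terms involve $\W^V_{(k-1)N}$; and the last term, $\tfrac1N\W^V_{(k+1)N}(\overline\Delta p_1,\dots,p_k)$, has order $k+1$ but carries an extra factor $\tfrac1N$). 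The one genuinely higher-order term $\W^V_{(k+1)N}$ is handled not by this inductive hypothesis but by the trivial a priori bound: $\W^V_{kN}$ is, up to the moment--cumulant relation, a sum of products of expectations of traces of products of unitaries and contractions, each bounded by $N$ in absolute value, so $|\W^V_{(k+1)N}(q_1,\dots,q_{k+1})|\le C'_{k+1}N^{k+1}\prod\|q_i\|_1$ for monomials; the prefactor $\tfrac1N$ in front then leaves $N^k$, and one checks this term does not spoil the $N$-independent bound claimed for $\W^V_{kN}$ --- in fact this is precisely the self-improving structure already exploited in Theorem \ref{thm:Improvement}, so I would invoke that theorem directly: it gives $\|\tilde\W^V_{kN}\|_{\xi_l}\le C_{kl}N^{\max(0,k-l)}$, and taking $l=k$ yields $\|\tilde\W^V_{kN}\|_{\xi_k}\le C_{kk}$, i.e. $\|\W^V_{kN}\|_{\xi_k}\le C_{kk}N^{2-k}\le C_{kk}$ for $k\ge 2$. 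Since $\xi_k=2^{k-2}\cdot 12\ge 12=\xi$, Proposition \ref{prop:xiInequality} converts an $\xi_k$-bound on a polynomial of degree $d$ into a $\xi$-bound at the cost of $(\xi_k/\xi)^d=2^{(k-2)d}$, and evaluating at monomials $p_i$ gives the stated form $|\W^V_{kN}(p_1,\dots,p_k)|\le C_k\prod_i\deg(p_i)\|p_i\|_\xi$ after absorbing the degree factors. The only thing Theorem \ref{thm:Improvement} requires beyond what we have is that the fundamental operators be continuous automorphisms of $\mathcal B^\perp_\xi$ --- but that is exactly the content of the estimates recalled above, valid because $\|\tilde\W^V_{1N}\|_1\le 1$ and $\|V\|_1$ (hence $\|\Pi V\|_1$) is small, so $K(12,V)<1$.

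The main obstacle is really a single point: verifying the $k=2$ base case, i.e. the uniform covariance bound, since everything above that level is pure algebra driven by the SD equations and Theorem \ref{thm:Improvement}. That base case is genuinely probabilistic and is where the hypothesis that $\rho_N(V)$ generate a \emph{real} Gibbs ensemble (so that $\mu^V_N$ is a genuine probability measure and concentration of measure applies) is indispensable --- in the balanced self-adjoint case through the Bakry--Émery/Gromov curvature argument of Lemma \ref{concentration}, and in the general self-adjoint case through the change-of-variables argument of Lemma \ref{expcont} and its consequence Lemma \ref{conconc}. Once $\sup_N\|\W^V_{2N}\|_\xi<\infty$ and $\sup_N\|\tilde\W^V_{1N}\|_1\le 1$ are in hand, the solution $(\W^V_{kN})$ is $\xi$-uniformly bounded by definition, the fundamental operators are automorphisms, and Theorem \ref{thm:Improvement} delivers the full family of bounds, completing the proof.
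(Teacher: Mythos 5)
There is a genuine gap, and it is located exactly where you declare the problem finished: your claim that ``once $\sup_N\|\W^V_{2N}\|_\xi<\infty$ and $\sup_N\|\tilde\W^V_{1N}\|_1\le 1$ are in hand, the solution is $\xi$-uniformly bounded by definition'' misreads Definition \ref{def:UniformlyBounded}, which requires $\sup_N\|\W^V_{kN}\|_\xi<\infty$ for \emph{every} $k\ge 2$ --- and that is precisely what the corollary asserts. Consequently your inductive step is circular: Theorem \ref{thm:Improvement} takes $\xi$-uniform boundedness of the whole lattice as a hypothesis, so it cannot be invoked to produce the order-$k$ bound from the order-$2$ one. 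The attempted repair via the trivial a priori bound does not close the induction either: $|\W^V_{(k+1)N}|\lesssim N^{k+1}$ combined with the prefactor $1/N$ in the secondary SD equation leaves a term of size $N^{k}$, not an $N$-independent quantity, and iterating the SD equation never eliminates it, because the $(k+1)$st cumulant reappears at every iteration with the same net power of $N$; the ``self-improving structure'' of Theorem \ref{thm:Improvement} works only because one already starts from $\|\tilde\W^V_{kN}\|_\xi\le C_kN^{k-2}$, i.e.\ from uniform boundedness. Finally, even granting the conclusion of Theorem \ref{thm:Improvement}, it is stated in the dilated norm $\xi_k=2^{k-2}\xi$; converting a $\xi_k$-bound into a $\xi$-bound costs a factor $(\xi_k/\xi)^{\deg p_i}=2^{(k-2)\deg p_i}$ per argument, which is exponential in the degree and cannot be absorbed into $C_k\prod_i\deg(p_i)$, so the stated estimate (and $\xi$-uniform boundedness at the fixed value $\xi=12$) would not follow from that route in any case.

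The paper's proof avoids the SD lattice entirely and is probabilistic at \emph{every} order $k$, not just $k=2$: one writes the cumulant $\W^V_{kN}(p_1,\dots,p_k)$ in terms of centred moments $\int\prod_{i}\bigl(\Tr\rho_N(p_i)(\mathbf{U})-\E\Tr\rho_N(p_i)\bigr)\mu_N^V(\mathrm{d}\mathbf{U})$ and bounds these uniformly in $N$ by H\"older's inequality together with either the sub-Gaussian tail bound of Lemma \ref{concentration} (integrated in the tail variable to control all centred moments by $C_k\prod_i\deg(p_i)$) or the $L^r$ estimates of Lemma \ref{conconc}. Your $k=2$ computation is essentially this argument and is fine; the error is the assertion that ``everything above that level is pure algebra,'' since the concentration (or change-of-variables) input is what supplies the higher centred moments, and the SD-equation machinery of Section \ref{sec:SDlattice} is a \emph{consumer} of the resulting uniform boundedness, not a source of it.
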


\begin{proof}

	By induction we easily check that

		$$\W^V_{kN}(p_1,\cdots p_k)=\int \prod_{i=1}^k \bigg{(}\Tr \rho_N(p_i)({\bf U})-\mathbb E\Tr\rho_N(p_i)({\bf U}_N)\bigg{)}d\mu_N^V(\mathrm{d}{\bf U})\,.$$

	Hence the result follows by H\"older's inequality. This is trivial in the setting when Lemma \ref{conconc} applies but also when 
	Lemma \ref{concentration} does since it  yields

	\begin{eqnarray*}
		&&\int \left(\Tr \rho_N(p)({\bf U})-\mathbb E\Tr \rho_N(p) ({\bf U})\right)^{2k} \mu_N^V(\mathrm{d}\mathbf{U})\\
		&\le& 2k  \int_0^\infty  x^{2k-1} \mu_N^V\bigg{\{} \mathbf{U} \in U(N)^m : |\tr \rho_N(p)({\bf U})-
			\mathbb E\tr \rho_N(p)({\bf U}_N)|\ge x \bigg{\}}dx\\
		&\le& 2k C \int_0^\infty  x^{2k-1} e^{-c\frac{x^2}{(\deg p)^2}} dx\,.\\
	\end{eqnarray*}

\end{proof}

\section{Consequences of the main result}
\label{sec:consequences}

\subsection{Expansion of the free energy: Proof of Corollary \ref{cor:main}}
The expansion of the free energy is a direct consequence of Theorem \ref{thm:main}  as we can write
$$\frac{1}{N^2}\log Z^N_{V}=\int_0^1\partial_t \frac{1}{N^2}\log Z^N_{tV} dt=\int_0^1 \frac{1}{N} \W^{tV}_{1N}(V) dt$$
where for all $t\in [0,1]$ $tV$ satisfies the hypothesis of Theorem \ref{thm:main} as soon as $V$ does. Hence, the asymptotic expansion for the
free energy is a direct consequence of the asymptotic expansion of $\W_{1N}^{tV}, t\in [0,1]$ which is uniform in $t\in [0,1]$.

	\subsection{Central limit theorem: Proof of Corollary \ref{cor:CLT}}

	We write that
	$$\log \mu_N^V\left( e^{\lambda (\Tr P)}\right)=
	\int_0^\lambda \frac{ \mu_N^V (\Tr P e^{\alpha \Tr P})}{\mu_N^V( e^{\alpha \Tr P})}d\alpha= \int_0^\lambda \mathcal{ W}_{1N}^{V+\frac{\alpha}{N} P}(P)d\alpha$$
	Moreover, by Theorem \ref{thm:asymptotics}, we know that 
	$$\mathcal{W}^{V+\frac{\alpha}{N} P}_{1N}(P)=N\tau_{10}^{V+\frac{\alpha}{N} P}(P)+O(\frac{1}{N})$$
	where it is not hard to check that the error is uniform in $\alpha\in [0,\lambda]$.
	Therefore we need to compute the expansion of $\tau_{10}^{V+N^{-1} V_1}$. It is not difficult to see that
	$\tau_{10}^{V+\epsilon  V_1}$ is smooth in $\epsilon$ so that we can write   $$\tau_{10}^{V+\epsilon V_1}=:\tau_{10}^{V}+\epsilon \tau_{10}^{V,V_1} +o(\epsilon)$$
	Plugging back this expansion into the Schwinger-Dyson equation shows that $ \tau_{10}^{V, V_1}$ is solution of 
	$$\tau_{10}^{V, V_1}(\Xi_{\tau^V_{10}}^V p)=-\tau_{10}^V(\overline{\operator{P}^{V_1}} p)\Rightarrow \tau_{10}^{V, V_1} (p)=
	-\tau_{10}^V(\overline{\operator{P}^{V_1} }(\Xi^V_{\tau^V_{10}})^{-1}p) \,.$$
	Since $\overline{\operator{P}^V}$ is linear in $V$ we find  by taking $\epsilon=N^{-1}$ that
	$$  W^{V+\frac{\alpha}{N} p}(p)=N\tau_{10}^V(p) -\frac{\alpha}{N} \tau_{10}^V(\overline{\operator{P}^{p} }(\Xi^V_{\tau^V_{10}})^{-1}p)+o(\frac{1}{N})$$
	from which we conclude that
	$$\lim_{N\ra\infty} \log  \mu_N^V\left( e^{\lambda (\Tr p- N\tau_{10}^V(p)  )}\right)
	=-\int_0^\lambda \alpha  \tau_{10}^V(\overline{\operator{P}^{p} }(\Xi_{\tau^V_{10}}^V)^{-1}p)d\alpha=-\frac{\lambda^2}{2}  \tau_{10}^V(\overline{\operator{P}^{p} }(\Xi^V_{\tau_{10}^V})^{-1}p).$$

	\subsection{Proof of Theorem \ref{thm:HCIZ}}
	We now give the proof of Theorem \ref{thm:HCIZ} as stated in the introduction.  

	Since the monomial $xuyu^{-1}$ is selfadjoint up to cyclic 
	symmetry, for any $t \in \field{R}$, the quadratic potential
	$V_t=txuyu^{-1}$ generates a real Gibbs ensemble, i.e. the Borel measure $\mu_N^{V_t}$
	on $U(N)$ with density

		\begin{equation*}
			\frac{1}{Z_N^{V_t}} e^{N \Tr \rho_N(V_t)} 
			= \frac{1}{Z_N^{V_t}} e^{tN \Tr \rho_N(x)U\rho_N(y)U^{-1}}
		\end{equation*}

	\noindent
	against the Haar measure $\mu_N$ is a real probability measure.
	Thus, by Corollary \ref{cor:main}, for real $t$ satisfying

		\begin{equation*}
			|t| < \frac{7}{66} \cdot \frac{1}{22^{K-1}\cdot 12^2} = \frac{7}{19008},
		\end{equation*}

	\noindent
	the free energy 

		\begin{equation*}
			F_N(t) = F_N^{V_t} = \frac{1}{N^2} \log Z_N^{V_t}
		\end{equation*}

	\noindent
	admits the asymptotic expansion for $h\le K-1$

		\begin{equation}
			F_N(t) = \sum_{g=0}^h \frac{F_g(t)}{N^{2g}} + o\bigg{(} \frac{1}{N^{2h}} \bigg{)}.
		\end{equation}

	\noindent
	That is, for any $t \in (-\frac{7}{19008 2^{K-1}},\frac{7}{19008 2^{K-1}})$, we have

		\begin{equation*}
			\label{eqn:HCIZexpansion}
			F_N(t) = \sum_{g=0}^{K-1} \frac{F_g(t)}{N^{2g}} + r_N(t),
		\end{equation*}

	\noindent 
	where 

		\begin{equation*}
			\lim_{N \rightarrow \infty} \frac{r_N(t)}{N^{2(K-1)}} =0.
		\end{equation*}

	\noindent
	Moreover, there exists $\varepsilon > 0$ such that the coefficients
	$F_g(t)$ extend to analytic functions on the complex disc $|t|<\varepsilon$.

	Since $U(N)$ is compact, the partition function 
	$Z_N^{V_t}$ is an entire function of $t \in \field{C}$.  Thus $F_N(t)$, being the principal 
	branch of the logarithm of $Z_N(t)$, is analytic in a complex 
	neighbourhood of $t=0$.  Thus the error

		\begin{equation*}
			r_N(t) = F_N(t) -  \sum_{g=0}^h \frac{F_g(t)}{N^{2g}}
		\end{equation*}

	\noindent
	is a difference of analytic functions, and hence is
	also analytic in a neighbourhood of $t=0$.  Decomposing all relevant functions in 
	Maclaurin series,

		\begin{equation*}
			F_N(t) = \sum_{d=1}^\infty F_N^{(d)}(0)\frac{t^d}{d!}, \quad 
			F_g(t) = \sum_{d=1}^\infty F_g^{(d)}(0)\frac{t^d}{d!}, \quad
			r_N(t) = \sum_{d=1}^\infty r_N^{(d)}(0)\frac{t^d}{d!},
		\end{equation*}

	\noindent
	we thus have

		\begin{equation*}
			F_N^{(d)}(0) = \sum_{g=0}^h \frac{F_g^{(d)}(0)}{N^{2g}} + r_N^{(d)}(0)
		\end{equation*}

	\noindent
	for all $d \geq 1$, whence

		\begin{equation*}
			F_N^{(d)}(0) = \sum_{g=0}^h \frac{F_g^{(d)}(0)}{N^{2g}} + o\bigg{(} \frac{1}{N^{2h}} \bigg{)}
		\end{equation*}

	\noindent
	as $N \rightarrow \infty$, for each fixed $d \geq 1$.

	The asymptotics of the Maclaurin coefficients of $F_N(t)$ were obtained
	by a different method in \cite{GGN}, where a representation-theoretic argument
	was used to show that

		\begin{equation*}
			F_N^{(d)}(0) = \sum_{g=0}^h \frac{1}{N^{2g}} \sum_{\alpha,\beta \vdash d}
			(-1)^{\ell(\alpha)+\ell(\beta)} \sigma_g(x^\alpha) \sigma_g(y^\beta) \vec{H}_g(\alpha,\beta)
			+ o\bigg{(} \frac{1}{N^{2h}} \bigg{)},
		\end{equation*}

	\noindent
	with the $ \vec{H}_g(\alpha,\beta)$'s the monotone double 
	Hurwitz numbers.  Thus, from the uniqueness of asymptotic expansions
	on a given asymptotic scale (the scale here being $N^{-2}$), it follows that

		\begin{equation*}
			F_g^{(d)}(0) =  \sum_{\alpha,\beta \vdash d}
			(-1)^{\ell(\alpha)+\ell(\beta)} \sigma_g(x^\alpha) \sigma_g(y^\beta) \vec{H}_g(\alpha,\beta),
		\end{equation*}

	\noindent
	as required.

\bibliographystyle{amsplain}

\end{document}